\definecolor{webgreen}{rgb}{0,0.4,0}
\definecolor{webbrown}{rgb}{0.6,0,0}
\definecolor{purple}{rgb}{0.5,0,0.25}
\definecolor{darkblue}{rgb}{0,0,0.7}
\definecolor{darkred}{rgb}{0.7,0,0}
\definecolor{darkgreen}{rgb}{0,0.7,0}
\newcommand{\ignore}[1]{}
\newtheorem{lemma}{{\sc Lemma}}[section]
\newtheorem{prop}{{\sc Proposition}}[section]
\newtheorem{cor}{{\sc Corollary}}[section]
\newtheorem{theorem}{{\sc Theorem}}[section]
\newtheorem{defn}{{\sc Definition}}[section]
\newtheorem{claim}{{\sc Claim}}
\newtheorem{example}{{\sc Example}}[section]
\newtheorem{remark}{{\sc Remark}}
\newenvironment{proof}{\noindent {\bf \sl Proof\/}:\enspace}
{\hfill $\blacksquare{}$ \vspace{12pt}}
\begin{document}

\title{\bf Optimal Strategy-proof  Mechanisms on Single-crossing Domains}
\author{\bf Mridu Prabal Goswami\thanks{Indian Statistical Institute, Tezpur, India. 
		 I am extremely grateful to Dipjyoti Majumdar, Debasis Mishra, Arup Pal, Abinash Panda, Soumendu Sarkar, Arunava Sen and Tridib Sharma for many helpful comments and suggestions.   
		I am thankful to a research grant from SERB, DST, Government of India, and Magesh Kumar K. K. and Manish Yadav.}}
\maketitle

\begin{abstract} \noindent We consider an economic environment with one buyer and one seller. For a bundle $(t,q)\in [0,\infty[\times [0,1]=\mathbb{Z}$, $q$ refers to the winning probability of an object, and $t$ denotes the payment that the buyer makes. We consider continuous and monotone preferences on $\mathbb{Z}$ as the primitives of the buyer. These preferences can incorporate both quasilinear and non-quasilinear preferences, and multidimensional pay-off relevant parameters. We define rich single-crossing subsets of this class and characterize strategy-proof mechanisms by using monotonicity of the mechanisms and continuity of the indirect preference correspondences. We also provide a computationally tractable optimization program to compute the optimal mechanism for mechanisms with finite range. We do not use revenue equivalence and virtual valuations as tools in our proofs. Our proof techniques bring out the geometric interaction between the single-crossing property and the positions of bundles $(t,q)$s in the space $\mathbb{Z}$. We also provide an extension of our analysis to an $n-$buyer environment, and to the situation where $q$ is a qualitative variable. 
\end{abstract}

\noindent {\bf Key words}: single-crossing, non-quasilinear preference, multi-dimension, strategy-proofness, optimal, indirect preference correspondence, optimal mechanism, topology, computation

\section{\bf Introduction}
We consider an economic environment with one buyer and one seller. The consumption space of the buyer is $[0,\infty[\times [0,1]=\mathbb{Z}$. We call an ordered pair $(t,q)\in \mathbb{Z}$ a bundle. 
Here $q$ refers to the winning probability of an object, and $t$ denotes the payment that the buyer makes in return.
We consider preferences defined on $\mathbb{Z}$. These preferences are monotone in both $t$ and $q$, and are continuous. Further, we consider domains that satisfy the single-crossing property, we call such domains single-crossing. A mechanism is a function from a single-crossing domain to $\mathbb{Z}$. A domain satisfies the single-crossing property if indifference sets of two distinct preferences that belong to the domain intersect at most at one bundle. 
We assume that the buyer knows, and only the buyer knows her preference, i.e., preference is private information of the buyer.
The seller wants to elicit this private information.     
Hence, we study mechanisms that are strategy-proof, i.e., mechanisms where the buyer has no incentives to misreport her preference. 

A very important class of single-crossing domains for which strategy-proof mechanisms are well studied is 
the domain of quasilinear preferences, i.e., linear in payment. An example of such a domain is given by the preferences $\theta q-t, \theta \in [1,2]$.
\citep{myer} and  \citep{Hag and Rog} are two classic papers that consider such domains. We observe that the single-crossing property provides a general structure on domains of preferences. For example, $\theta q-t^2, \theta \in [1,2]$ which is a single-crossing domain of preferences that are not quasilinear. According to a notion in   
\citep{Mishra1} the preferences from this domain satisfy the positive income effect.  
Further, single-crossing domains can also incorporate domains of preferences that admit multidimensional parametric representations, we give Example \ref{example:two-parameter} to illustrate this observation. The domain in Example  \ref{example:two-parameter} has a behavioral explanation. In Example \ref{example:two-parameter} if the valuation for the good is not high enough, then the loss of utility from  paying for the good is high. If the valuation for the good meets a cutoff, then the loss of utility from making a payment is low. 
From these examples we observe that the single-crossing is a common unifying property of many important classes
of preferences where each class represents a distinct behavioral trait. Since the single-crossing property is defined by intersection of indifference sets of two distinct preferences, it is an ordinal feature of a domain, i.e., this property does not depend on utility representations of preferences.          
We utilize this ordinal property to study strategy-proof mechanisms. In particular,        
the single-crossing property entails an order on the set of preferences. The order on the domain ensures that we can define monotonicity of the mechanisms. By imposing a natural richness condition on the single-crossing domains we can define a natural order topology on single-crossing domains. This order topology enables us to define continuous indirect preference correspondences. 
The topology in \citep{myer} is the standard Euclidean metric topology.  
That is, in \citep{myer} the ordered space is a subset of the real line with the Euclidean topology added. The Euclidean topology is also the order topology generated by the open intervals of the real line. This topological structure makes it easier to use standard calculus to study the auction design problem in \citep{myer}. The ordered space of the single-crossing preferences is a linear continuum like the real line, also the topology generated by the collection of open intervals of this ordered space is metrizable. However, we  do not assume any smooth structure on it, unlike Euclidean metric in \citep{myer}. With the help of our axioms we show that if the ranges of mechanisms are finite, then just by using the ordinal properties, i.e., without any assumption of smoothness,  of the preferences we can characterize strategy-proof mechanisms. 
In particular, for mechanisms with finite range the single-crossing property entails a geometric structure of  strategy-proof mechanisms. This geometry can be used to compute the expected revenue from a strategy-proof mechanism easily. This geometry can be written as an ordered tuple in the Cartesian products of the spaces of bundles and preferences, and the set of all geometry is compact in the product topology, which provides a simple existence proof of optimal mechanisms. 
Further, we do not require any topological assumption on the space of mechanisms to ascertain the existence of an optimal mechanism.   
In \citep{Goswami2} we show that if the expected revenue of a mechanism whose range is not finite, and its revenue is not smaller than the expected revenue from a finite range mechanism, then the revenue of that mechanism can be approximated by revenues from a sequence mechanisms with finite range.      
We provide an extension of the $1-$buyer environment to a $n-$buyer environment that shows how the procedure to compute the optimal mechanism in the $1-$buyer environment can be used in a straightforward way in the $n-$buyer environment.

We show that the monotonicity of a mechanism, and the continuity of the corresponding indirect preference correspondence are necessary for strategy-proofness. In Theorem \ref{thm:implies_strtagey_proof_finite_range}  we show that if the range of a mechanism is finite, monotone and the indirect preference correspondence is continuous, then the mechanism is strategy-proof. The proof of Theorem \ref{thm:implies_strtagey_proof_finite_range} is constructive.
Given the range, i.e., given the positions of the bundles in the space $\mathbb{Z}$, the proof of the theorem provides the exact rule of the mechanism. Since our proofs depend only on the ordinal feature of the single-crossing property we do not use revenue equivalence, and in general envelope theorems, in our characterization. A standard approach to study strategy-proof mechanism 
is to first consider a monotone allocation rule, and then find a payment rule so that the given allocation rule and the payment rule that is found together define a strategy-proof mechanism. Our approach is different. 
We consider the mechanism as a whole, i.e., we consider the allocation and the payment rule together, and endow axioms on the mechanisms and the corresponding indirect preference correspondences and study strategy-proofness. We have extended this idea to study strategy-proof mechanisms on a class of non separable preferences, for example preferences that incorporate risk aversions, in \citep{Goswami1}.        
Although we consider a general topological space of preferences, our proofs are simple.
The paper is organized as follows. We prove the existence of optimal mechanism with finite range with a maximum cardinality in Theorem \ref{thm:optimal}. In Section \ref{sec:pre} we introduce important definitions.  
In Section \ref{sec:sp} we study strategy-proof mechanisms. 
In Section \ref{sec:opt} we study optimal mechanisms. In Section \ref{sec:quality} we make remarks about how we can extend our analysis when $q$ is a qualitative variable like quality or share.    
In Section \ref{sec:union_single_crossing} we make some observations about extending a domain beyond rich single-crossing domains.  
In Section \ref{sec:n_buyer} we provide an extension of our analysis to an $n-$buyer environment. In Section \ref{sec:con} we make some concluding remarks.  We discuss some related papers next.

\subsection{\bf Related Literature}
\label{sec:lit}  
\citep{Baisa2} considers a specific one parameter class of preferences that satisfy the single-crossing property. We allow for multidimensional parametric representations of preferences. \citep{Saporiti} considers a single-crossing domain to study  voting rules. Instead of assuming an order on the preferences which is the case in \citep{Baisa2}; in \citep{Baisa2} the order is due to the natural order on the parameters that represent the preferences, \citep{Tian} and \citep{Saporiti} we derive an order  on the set of preferences by using monotonicity and continuity  of the preferences.
The single-crossing condition in \citep{Laffont1} use differentiable utility functions. 
In our definition we do not use any parametric class of utility functions. 
\citep{Laffont1} consider a one dimensional parametric class of single-crossing preferences and study 
the implementability of piecewise continuously differentiable allocation rules.
We do  not assume any smooth structure on preferences.
 \citep{Tian} study equivalence between a notion of implementability and notions of cycle monotonicity. 
The implementability condition is defined by the operator ``$\max$''.  Our approach is ordinal. Our axioms about monotonicity of mechanisms and continuity of the indirect preference correspondences are ordinal. 
\citep{Gershkov} construct constrained
efficient optimal mechanism on single-crossing domains as in \citep{Saporiti}.
Furthermore  in the context of voting \citep{Barbera2} study a model where society’s preferences over
voting rules satisfy the single-crossing property with an objective to analyze self-stable rather than strategy-proof voting rules. \citep{Gans} study
an Arrovian aggregation problem with single-crossing preferences for voters, and 
show that median voters are decisive in all majority elections between pairs of 
alternatives. \citep{Barbera3} develop a notion of `top monotonicity' which is a common generalization of single-peakedness and single-crossingness. 
\citep{Corchon} study a public-good-private-good production economy where
agents' preferences satisfy the single-crossing property and prove that smooth strategy-proof and Pareto-efficient social choice functions that give strictly positive amount
of both goods to all agents do not exist. 
Since the single-crossing property allows for various kinds of non linearity in the preferences, the literature on non-quasilinear preferences is also relevant.  Some of the recent studies on non-quasilinear preferences are  
\citep{Mishra1} and  \citep{Mishra2} \citep{Serizawa}, \citep{Baisa1}. 
The main idea of the single-crossing domain in this paper is from \citep{Goswami}. In \citep{Goswami} the single-crossing property is defined for classical exchange economies. Uses of single-crossing conditions in contract design goes back to \citep{Spence}, \citep{Mirr}       
and \citep{Roths}. An use of the order on the set of preferences ensued from the single-crossing property to study optimal mechanism more generally is a novel aspect of this paper.

\section{\bf Preliminaries}
\label{sec:pre}
The economic environment in this paper consists of a seller and a buyer. 
The seller sells an indivisible unit of a good. Let $q\in [0,1]$ denote the probability that the object is sold to the buyer.    
In return, the buyer needs to make a payment to the seller. This  payment is denoted by $t$. The set of allocations is denoted by 
$\mathbb{Z}$, and $\mathbb{Z}=[0,\infty[\times [0,1]$, where $[0,\infty[=\Re_{+}$ denotes the set of non-negative real numbers.
A typical bundle is denoted by 
$(t,q)$, where $t\in \Re_{+}$ and $q\in [0,1]$.  The buyer's preference over $\mathbb{Z}$ is denoted by $R$. 
The strict counterpart of $R$ is denoted by $P$, and indifference is denoted by $I$.
 For $z\in \mathbb{Z}$, and $R$, let $UC(R,z)=\{z'\in \mathbb{Z}| z' Rz\}$. In words $UC(R,z)$ is the set of bundles that are weakly preferred to $z$ under $R$.
Likewise $LC(R,z)=\{z'|zRz'\}$, $LC(R,z)$ is the set of bundles that are weakly less preferred to $z$ under $R$. 
We assume the preferences to be continuous and monotone, in short we call the CM preferences.       
The notion of a CM preference is defined formally below.

\begin{defn}[ CM Preference] \rm 
	The complete, transitive preference relation $R$ on $\mathbb{Z}$ is \textbf{CM} if $R$ is  
	{\bf monotone}, i.e., 
	
	\begin{itemize}
		
		\item \textbf{money-monotone:} for all $q\in [0,1]$, if $t''>t'$, then $(t',q)P(t'',q)$.
		\item \textbf{$q$-monotone:} for all $t\in \Re_+$, if $q''>q'$, then $(t,q'')P(t,q')$.

	\end{itemize}
	
	\noindent and  \textbf{continuous} for each $z\in \mathbb{Z}$, the sets $UC(R,z)$ and $LC(R,z)$ are closed sets.{\footnote{These two sets are closed in the product topology on the Euclidean space $\mathbb{Z}$.    }}
	
	\label{defn:CM} 
\end{defn}

\noindent \citep{Mishra1} call a CM preference classical.  
Let $R$ be a CM preference and $x$ a bundle, define  $IC(R,x)=\{y\in \mathbb{Z}\mid yIx\}$.{\footnote{An $IC(R,x)$ set also represents an equivalence class of the equivalence relation $I$.}} The set $IC(R,x)$ is the set of bundles that are indifferent to $x$ according to the preference $R$. 
It can be seen easily that due to the properties of a CM preference, an $IC$ set can be represented as a curve in $\mathbb{Z}$. Thus, we may also call an $IC$ set an $IC$ curve. We shall represent $\Re_{+}$ on the horizontal axis, and $[0,1]$ on the vertical axis. Next we make a remark about considering CM preferences to be primitives of our approach.

\begin{remark}\rm The primitive of our approach is a CM preference of the buyer over her probability of win and payment, and not preferences over lotteries on end wealth positions. Let us consider two CM preferences, $u(t,q;\theta)=\theta q-t$ and $v(t,q;\theta)=\theta q -t^2$ where $\theta>0$. The preference $u$ can be given an expected utility interpretation. Consider a lottery that entails end wealth position $\theta-t$ with probability $q$ and $-t$ with probability $1-q$. Then $u(t,q;\theta)=q(\theta-t)+(1-q)(-t)$. Now $v(t,q;\theta)=q(\theta-t^2)+(1-q)(-(t)^2)$. Although, $v$ is also an expected utility with consequences $(\theta-t^2)$ and $-(t)^2$, these are not the end wealth positions. Thus $v$ is not an expected utility over a lottery over end wealth positions.   
\end{remark}

\noindent Next we make a remark about the payment variable in our model. 

\begin{remark}\rm  Consider a model where the buyer pays only when she pays. 
Let the buyer pays $m$ if she wins an auction, and pays $0$ if she does not win.
Then expected payment is $t=mq$. Thus a quasilinear utility function
$\theta q-t$ is a CM preference since $mq$ is just non negative real number. However, if the buyer's preference is not defined over $(t,q)$s and instead over $(m,q)$, then her preference is $\theta q-qm$ which is not a CM preference over $(m,q)$s. The preference $\theta q-qm$ is not monotone and in fact it is a non-separable preference. We study this preference in \citep{Goswami1}.      
\end{remark}	 	    

\noindent An $IC$ curve in $\mathbb{Z}$ is an upward slopping curve, i.e., if $(t',q'),(t'',q'')\in IC(R,x)$ and $t'<t''$, then $q'<q''$. Let $x'=(t',q'), x''=(t'',q'')$. By $x'\leq x''$ we mean either $x'=x''$ or $t'<t'',q'<q''$. Further, by $x'<x''$ we mean $t'<t'', q'<q''$.     
We call two bundles $x'=(t',q'), x''=(t'',q'')$ {\bf diagonal} if $x'< x''$. 
The single-crossing property is defined next.

\begin{defn}[Single-Crossing of two Preferences]\rm
	We say that two distinct CM preferences $R', R''$ exhibit the {\bf single-crossing property} if and only if  for all $x,y,z \in \mathbb{Z}$, 
	$$\text{if}~ z\in IC(R',x)\cap IC(R'',y),~\text{then}~ IC(R',x)\cap IC(R'',y)=\{z\}.$$
	\label{defn:single_crossing}
\end{defn}
\noindent The single-crossing property implies that two $IC$ curves of two distinct preferences can meet ( or cut) at most one bundle. The single-crossing property is an ordinal property of preferences, i.e., this property does not depend on utility representations of preferences.
Next we define the notion of a single-crossing domain.{\footnote{The monotonicity of preferences and the single-crossing property in \citep{Goswami} are defined for the interior of the consumption space. }}            

\begin{defn}[Rich Single-crossing domain]\rm We call a subset of the set of CM preferences {\bf single-crossing domain} if any $R',R''$ that belongs to the subset 
	satisfy the single-crossing property. We call a single crossing domain {\bf rich} if for any two bundles $x'=(t',q'), x''=(t'',q'')$ such that $t'<t'',q'<q''$ there is $R$ in the single crossing domain such that 
	$x'Ix''$. We denote a rich single-crossing domain by $\mathcal{R}^{cms}$.        	
\end{defn}

\noindent We may interpret the single-crossing property as an algorithm that produces  restricted domains. Consider the set of all CM preferences. Consider two CM preferences $R'$ and $R''$ that admit the single-crossing property. Suppose we wish to add another preference $R'''$ to the collection that already contains $R'$ and $R''$. 
The single-crossing property ensures that 
an additional preference can be added to the collection only in a specific manner. 
A rich single-crossing domain is a maximal single-crossing domain, i.e., $\mathcal{R}^{cms}\cup \{R\}$, where $R\not\in \mathcal{R}^{cms}$ and $R$ is CM, is not a single-crossing domain. For all diagonal bundles $x', x''$ there is a preference $R\in \mathcal{R}^{cms}$ such that $x'\in IC(R,x'')$. Thus if another preference is added to $\mathcal{R}^{cms}$, then it violates the single-crossing property. 
Given the initial preferences  $R'$ and $ R''$, a rich single-crossing domain may be interpreted as the limit point of the algorithm. If we add preferences to a rich single-crossing domain, then we may have a situation pertaining to Maskin Monotonic Transformations which is defined next.
In general, by adding a preference to $\mathcal{R}^{cms}$ we allow indifference curves of two preferences 
to be tangential.  

\begin{defn}\rm ({\bf Maskin Monotonic Transformation}) Let $R',R''$ be two CM preferences. We say 
	that $R''$ is a {\bf Maskin Monotonic Transformation, (in short MMT)} of $R'$ through $z$, if 
	$(i)$ $UC(R'',z)\subseteq UC(R',z)$, $(ii)$ if $x\neq z, x\in UC(R'',z)$, then $xP'z$.  	
\end{defn}

\noindent MMT implies that the indifference curve of $R''$ through $z$ is tangential to the indifference curve of $R'$ through $z$. Single-crossing domains do not allow MMTs in the interior of $\mathbb{Z}$. 
In Section \ref{sec:union_single_crossing} we see that adding preferences to a CMS domain by allowing for MMT we add behavioral traits to a CMS domain.    
The set of all CM preferences satisfy MMTs for all preferences at every bundle. In the context of two-good and two-agent exchange economies \citep{Barbera1} find that a strategy-proof and individually rational mechanism must have a range whose elements fall on at most two line segments. \citep{Goswami} provides an example to demonstrate this not to be the case if the domain is rich single-crossing. Thus, the geometry of the range of strategy-proof mechanisms change if the domains of mechanisms are larger than rich single-crossing domains. 
We next make some brief remarks about the topology on $\mathcal{R}^{cms}$, details of which can be found in Appendix $1$.

The single-crossing property provides a natural way to define an order on $\mathcal{R}^{cms}$. We denote this order by $\prec$.  
This order defines an order topology on $\mathcal{R}^{cms}$. We use this order topology to formulate our axioms. -crossing maybe satisfied without the preferences being CM, \citep{Goswami} provides a example to this effect. Further, we use the notion of infimum and supremum in our proofs. For infimum and supremum to be well defined we need an ordered space to satisfy the least upper bound property. The ordered space that we study, i.e., space of a rich CM single-crossing preferences, satisfy the least upper bound property. Further, this order topology is metrizable, more specifically with this topology $\mathcal{R}^{cms}$ is homeomorphic to the real line with the standard Euclidean metric. This makes this space an one dimensional topological manifold. We do not assume any further structure on $\mathcal{R}^{cms}$, in particular we do not assume any smooth structure on this space. To follow the rest of the paper the definition of the order is required. Thus we state the order next. Let $\square(z)=\{x\mid x\leq z\}$.

\begin{defn}\rm 
	Let $\mathcal{R}^{cms}$ be a rich single crossing domain. Consider $z\in \mathbb{Z}$ and $R', R''\in \mathcal{R}^{cms}$. We say, $R''$ \textbf{cuts} $R'$ \textbf{from above} at $z\in \mathbb{Z}$, if and only if  
	\[ \square(z)\cap UC(R'',z) \subseteq \square(z)\cap UC(R',z). \]
	\noindent We say that $R''$ cuts $R'$ from above if $R''$ cuts $R'$ from above at every bundle. In this case we say $R'\prec R''$. 
	\label{defn:cut} 
\end{defn}

\noindent In words, the indifference curve of $R''$ through $z$ lies above the indifference curve for $R'$ through $z$ in $\square(z)$ if both indifference curves are viewed from the horizontal axis $\Re_{+}$. It is argued in Appendix $1$ that if $R''$ cuts $R'$ from above at some $z\in \mathbb{Z}$, then $R''$ cuts $R'$ from above at every $z\in \mathbb{Z}$ which makes the order well defined. Properties of CM preferences are used to establish this fact.    
In this topology $\mathcal{R}^{cms}$ is connected and closed interval $[R',R'']=\{R\mid R'\precsim R \precsim R''\}$ is compact, here $R'\precsim R$ means either $R'\prec R$ or $R'=R$, $]-\infty, R[$ and $]-\infty, R]$ denote open and closed intervals respectively that are not bounded below, $]R,\infty[$ and $[R,\infty]$ denote open and closed intervals respectively that not bounded above.
All ordered sets  in this paper are written by using the braces $]$ and $[$. For example, if $\alpha,\beta \in \Re $ and $\alpha<\beta$, then $]\alpha,\beta]$ denotes the left open and right closed interval.     
Convergence of a sequence in $\mathcal{R}^{cms}$ is equivalent to saying that monotone sequences converge. A sequence of preferences is denoted by 
$\{R^{n}\}_{n=1}^{\infty}$. A monotone decreasing sequence means any $n$, $R^{n+1}\precsim R^{n}$, and its convergence to $R$ is denoted by $R^{n}\downarrow R$. A monotone 
increasing sequence means for any $n$, $R^{n}\precsim R^{n+1}$, and its convergence to $R$ is denoted by $R^{n}\uparrow R$. In general convergence to $R$ is denoted by $R^{n}\rightarrow R$.
In the next subsection we provide examples of rich single-crossing domains.

\subsubsection{\bf Examples of $\mathcal{R}^{cms}$} 	

\noindent The first example is of quasilinear preferences.    

\begin{example}\rm ({\bf Quasilinear preference linear in parameter})  Consider the preference given by $u(t,q;\theta)=\theta q-t, \theta>0$. The indifference curves of this preference are linear and the slope of an indifference curve is given by $\frac{1}{\theta}$. Thus indifference curves of two distinct preferences can intersect at most once. Since $\theta\in ]0,\infty[$, $\{u(t,q;\theta)=\theta q-t\mid \theta>0\}$ is a rich single-crossing domain. We see that this preference is indeed linear, and thus depending in the context we refer to this preference to be linear.$\square$   
\label{ex:qlin}
\end{example}	

\begin{example}\rm({\bf A model with positive income effect}) Consider $\{u(t,q;\theta)=\theta q-t^{2}\mid \theta>0\}$.
	To see that this is a  single-crossing domain note that for any bundle with positive $t$ and positive $q$, the slope of an indifference curve is given by $\frac{2t}{\theta}$. Slopes are unequal for any two distinct values of $\theta$, therefore the single-crossing property is satisfied. The domain is rich because $\theta\in ]0,\infty[$.                
	This domain satisfies the notion of positive income effect introduced in \citep{Mishra1}.
	Let $\theta q^*-t_1^{'2}=\theta q^{**}-t_1^{''2}$, where $q^{*}<q^{**}, t_1'<t_1''$.
	Also let $\theta q^*-t_2^{'2}=\theta q^{**}-t_2^{''2}$, where $t_2'<t_2''$. 
	Also let $t_1'<t_2',t_1''<t_2''$.
	Then $\theta [q^{**}-q^{*}]=t_1^{''2}-t_1^{'2}=t_2^{''2}-t_2^{'2}$. 
	Thus $t_2''^2-t_1''^2=t_2'^2-t_1'^2$. Hence $[t_2''-t_1''][t_2''+t_1'']=[t_2'-t_1'][t_2'+t_1']$. 
	Now $[t_2''+t_1'']>[t_2'+t_1']$. Thus $t_2''-t_1''< t_2'-t_1'$. This entails
	$t_2''-t_2'<t_1''-t_1'$, which is required for positive income effect.$\square$ 	     
	\label{ex:index}
\end{example}

\noindent In Example \ref{ex:index} the domains is single-crossing, and also not quasilinear. The converse is also true. For example, $\{q^{\delta}-t\mid 0<\delta<1 \}$ is quasilinear and not single-crossing. At the bundle $(t=1,q=\frac{1}{8})$ the indifference curves of $q^{\frac{1}{3}}-t$ and $q^{\frac{2}{3}}-t$ are tangential. We note that
the indirect utility or the value function for these preferences are not maximum of affine functions in $\delta$. In Example \ref{example:delta} we construct a bounded CMS space consisting of indifference curves that are piece-wise differentiable to incorporate probability weighting functions $q^{\delta}$. In particular, there is a cutoff probability $q^*$ such that indifference curves are linear below the cutoff and strictly convex above the cutoff. 
It incorporates overweighting of large probabilities since $q^{\delta}$ lies above the $45^0$ ray from the origin for $q\in ]0,1[$.    
In Example \ref{example:Kahn} we consider the situation where the buyer underweight large probabilities. For certain parameter values the probability weighing function $\frac{\gamma q^\delta}{\gamma q^\delta+ (1-q)^{\delta}}$ in \citep{GonWu} can incorporate overweening large probabilities. In example \ref{example:delta} we consider a special case of probability weighing function suggested in \citep{GonWu} for probabilities close to $1$.

 \begin{example}\rm ({\bf Quasilinear preference non-linear in parameter}) 
	Let $\widehat{\theta}>0$ be a large valuation for the good and consider $U^{\delta}=\{\widehat{\theta} q^{\delta}-t\mid \delta \in [\frac{1}{4},\frac{1}{3}]\}$.   
	The slope of an indifference curve for a preference represented by $\widehat{\theta} q^{\delta}-t$
	is $\frac{1}{\widehat{\theta} \delta q^{\delta-1}}$.  Fix $q \in ]0,1[$ and Let $f(\delta)=\widehat{\theta}\delta q^{\delta-1}$. Then $f'(\delta)=\widehat{\theta} q^{\delta-1}[\delta\ln(q)+1]$. Let $q_{\frac{1}{3}}$ be such that 
	$f'(\frac{1}{3})=\widehat{\theta}(q_{\frac{1}{3}})^{\frac{1}{3}-1}[\frac{1}{3}\ln(q_{\frac{1}{3}})+1]=0$.  
	Since $(q_{\frac{1}{3}})^{\frac{1}{3}-1}>0$, the equality implies 
	$\frac{1}{3}\ln(q_{\frac{1}{3}})=-1$. Now $\frac{1}{3}\ln(q_{\frac{1}{3}})=-1\iff \ln(q_{\frac{1}{3}})=-3$. Thus $\ln(q_{ \frac {1}{4} })=-4$, and $q_{\frac{1}{4}}<q_{\frac{1}{3}}$ since $\ln$ is an increasing function. 
	Let $q^{*}$ be such that $q_{ \frac {1}{3}   }   <q^*<1$.
	Thus, for any given 
	$q\in [q^*,1]$, for all 
	$\delta\in [\frac{1}{4},\frac{1}{3}]$, $f'(\delta)>0$ since $\ln$ is an increasing function. 
	Then for $q\in [q^*,1]$, $\delta'' q^{\delta''-1}>\delta' q^{\delta'-1}$ if $\delta'<\delta''$.  
	Thus for any $q\in [q^*,1]$, 
	$\frac{1}{\widehat{\theta}\delta'' q^{\delta''-1}}<\frac{1}{\widehat{\theta}\delta' q^{\delta'-1}}$
	if $\frac{1}{4}\leq \delta'<\delta''\leq \frac{1}{3}$. That is in $q\in [q^*,1]$, 
	$U^{\delta}$ satisfies the single crossing property. Now let $g:[\frac{1}{4},\frac{1}{3}]\rightarrow [\frac{1}{8}, \frac{1}{2}]$ be a strictly increasing function which is a bijection. We use this bijection to create a single-crossing domain in the following way. 
		To begin with consider $\delta=\frac{1}{3}$. We have $g(\frac{1}{3})=\frac{1}{2}$. For every bundle $(t',q^*)$, where $q^*$ is defined in the preceding paragraph consider the indifference curves $\{(t,q)\mid \widehat{\theta}q^{ \frac{1}{3}}-t= \widehat{\theta}q^{* \frac{1}{3}}-t' \}$
	and $\{(t,q)\mid \frac{1}{2}q-t= \frac{1}{2}q^{*}-t' \}$. Define a preference
	$\overline{R}$ whose indifference curves for all $(t,q)$ with $q\geq q^*$ are given by 
	the ones from $\widehat{\theta} q^{\frac{1}{3}}-t$ and for $q\leq q^*$ are given by $\frac{1}{2}q-t$. 
	The preference $\overline{R}$ is monotone and continuous since both the upper contour sets and and lower contour sets are closed in $\mathbb{Z}$. Analogously append indifference curves of $\delta$ and $g(\delta)$. Since the slope of indifference curves of $\theta q-t$ decreases in $\theta$ we have
	obtain a compact single-crossing domain $[\underline{R},\overline{R}]$, where $\underline{R}$ is obtained for $\delta=\frac{1}{4}$ and $\theta=\frac{1}{8}$. We note that at $(t,q^*)$, 
	$\widehat{\theta}q^{*\delta}-t= g(\delta)q^*-t $ is not required to define the preferences.  
	Figure 0 depicts  
	the preferences $\overline{R}$, i.e., the preference corresponding to $\delta=\frac{1}{3}$, $\theta=\frac{1}{2}$ and $\underline{R}$, i.e., the preference corresponding to $\delta=\frac{1}{4}$, $\theta=\frac{1}{8}$.

	\begin{center}
		\includegraphics[height=6.5cm, width=10cm]{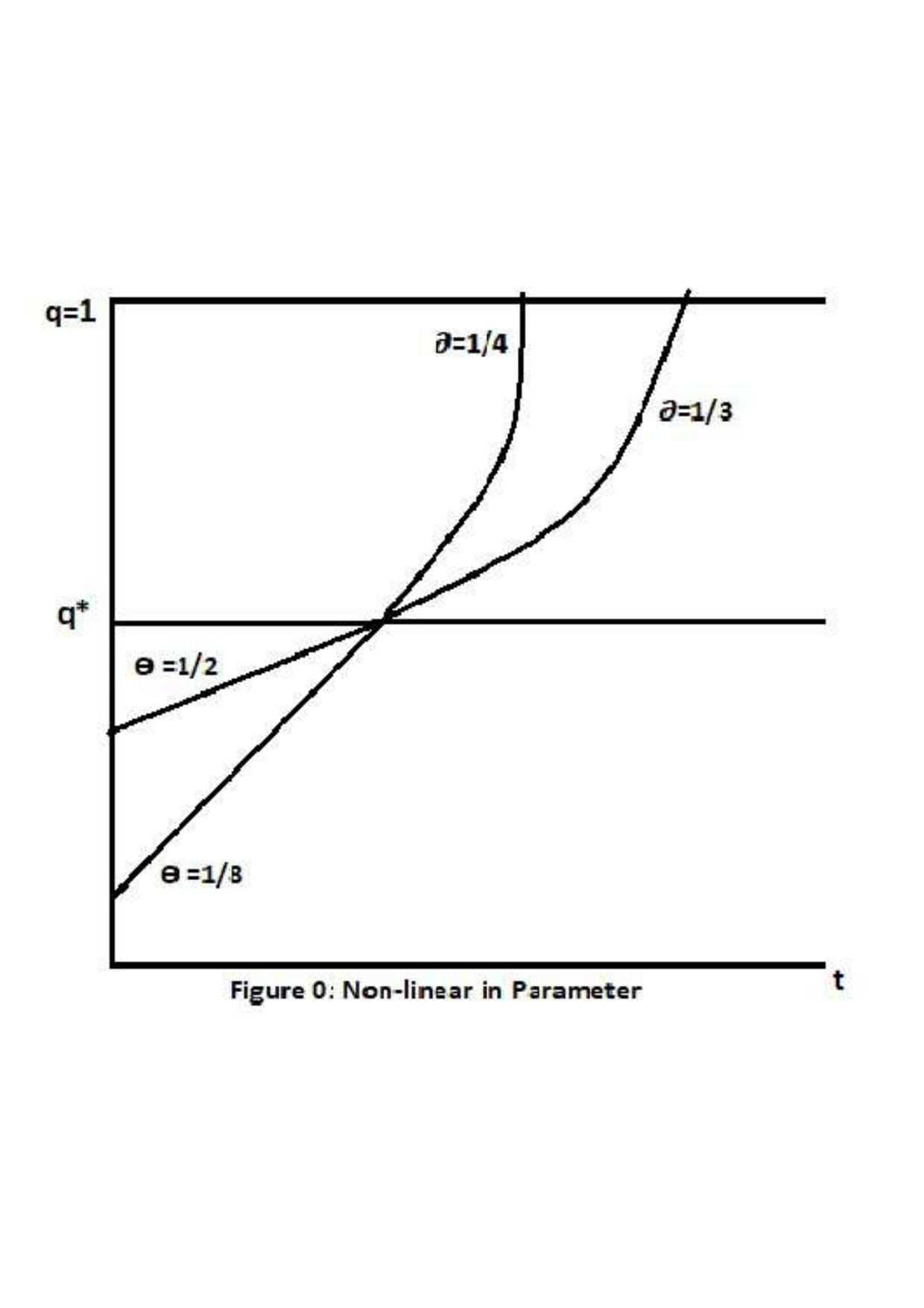}
	\end{center}  
	
	\label{example:delta}
\noindent In this example a high probability of win $q^*$ makes the buyer feel up beat about participating in the auction and in fact is willing to pay  $\widehat{\theta}$ which a high price. 	$\square$

\end{example}

\noindent The domain of preferences in Example \ref{example:delta} is  
parametrically two dimensional since we need $([\frac{1}{4},\frac{1}{3}],g)$ to specify a preference. To see that we need two parameters, note that it is not clear which preference alone the number $\frac{1}{2}$ refers to, $(\frac{1}{2}, g(\frac{1}{2}))$ or $(g^{-1}(\frac{1}{2}), \frac{1}{2})$.      
However these preferences are quasilinear. A domain of  multidimensional parametric, non quasilinear preferences is given in Example \ref{example:two-parameter}. In the next example we discuss underweighing of large probabilities.

\begin{example}\rm \citep{Kahn1} in their experiments found that agents underweight high probability and overweight low probability.
	\citep{Prelec} suggests probability weighing function $e^{- (-\ln q)^{\delta}  }, 0<\delta<1$ and \citep{GonWu} suggest $\frac{\gamma q^\delta}{\gamma q^\delta+ (1-q)^{\delta}}$ that incorporate this idea in \citep{Kahn1}. 
	Whether buyers  who underweight high probability will also like to pay less compared with the ones who overweight high probability maybe an important experiential question. From a technical perspective it is possible to incorporate overweighting of large probability. Consider $[q^*, 1]$, $q^*\neq 1$ close to $1$ and   
	$e^{- (-\ln q)^{\delta}  }, 0<\delta<1$. We can choose $q^*$
	such that for any $q\in [q^*, 1]$, derivative $\frac{\partial e^{- (-\ln q)^{\delta}  }}{\partial q}$ is decreasing in $\delta$. To see this note, 
	$\frac{\partial^{2}}{\partial q \partial \delta}\left(e^{- \left(- \ln{\left(q \right)}\right)^{\delta}}\right)$
	
	$=\frac{\delta \left(- \left(- \ln{\left(q \right)}\right)^{\delta} \left(- \ln{\left(q \right)}\right)^{\delta - 1} e^{- \left(- \ln{\left(q \right)}\right)^{\delta}} \ln{\left(- \ln{\left(q \right)} \right)} + \left(- \ln{\left(q \right)}\right)^{\delta - 1} e^{- \left(- \ln{\left(q \right)}\right)^{\delta}} \ln{\left(- \ln{\left(q \right)} \right)}\right) + \left(- \ln{\left(q \right)}\right)^{\delta - 1} e^{- \left(- \ln{\left(q \right)}\right)^{\delta}}}{q}$

	\noindent Now consider the numerator 
	
	$ e^{- \left(- \ln{\left(q \right)}\right)^{\delta}} \left(- \ln{\left(q \right)}\right)^{\delta - 1}\Big(- \delta \left(- \ln{\left(q \right)}\right)^{\delta} \ln{\left(- \ln{\left(q \right)} \right)} + \delta\ln{\left(- \ln{\left(q \right)} \right)} +1\Big)$.
	
	\noindent The term outside the parenthesis is positive, so we consider the term inside the parenthesis which is,        
	
	$=\delta\ln{\left(- \ln{\left(q \right)} \right)} \Big (- \left(- \ln{\left(q \right)}\right)^{\delta}+1\Big) +1$. For $q$ close to $1$, $\delta\ln{\left(- \ln{\left(q \right)} \right)}$ is a large negative number. Further, for $q$ cose to $1$, 
	$\Big (- \left(- \ln{\left(q \right)}\right)^{\delta}+1\Big)$ is positive. Then we can extend the preference to all bundles in $\mathbb{Z}$ as done in Example \ref{example:delta}.$\square$

	\label{example:Kahn}  	
\end{example}

\noindent Analogous to Example \ref{example:delta}, Example  \ref{example:two-parameter} demonstrates that single-crossing domains can incorporate multidimensional parametric classes of preferences. 
However, behaviorally these two examples are different. Example  \ref{example:two-parameter} is more about how buyers with higher valuations value loss of utility from paying for the good relative to the buyers with lower valuations.    
Example \ref{example:two-parameter} also incorporates positive income effects.

\begin{example}\rm ({\bf A non quasilinear model of multidimensional types}) Consider the following sets: $U=\{u(t,q)=\theta\sqrt{q}-t^{2}|\theta\in ]0,2]\}$, and $V=\{v(t,q)=2\sqrt{q}-\alpha t^{2}|\alpha\in ]0,1]\}$. We show that $U\cup V$ is single-crossing and rich. 
	We show that $U\cup V$ is a single crossing domain. In the interior of $\mathbb{Z}$ slopes of indifference curves of preferences  from $U$ are given by $\frac{4t\sqrt{q}}{\theta}$. 
	If $\frac{4t\sqrt{q}}{\theta'}=\frac{4t\sqrt{q}}{\theta''}$, then $\theta'=\theta''$. Slopes of indifference curves of  preferences from $V$ is given by $2\alpha t\sqrt{q}$. Analogously, if $2\alpha^{'}t\sqrt{q}=2\alpha^{''}t\sqrt{q}$, then $\alpha^{'}=\alpha^{''}$. Hence, within $U$ and $V$ preferences satisfy the single-crossing property. We need to show that the single-crossing property holds across $U$ and $V$. If single-crossing did not hold across $U$ and $V$, then for some $(t,q)$ in the interior of $\mathbb{Z}$ and for some  for some $\theta,\alpha$ we have $\frac{4t\sqrt{q}}{\theta}=2\alpha t\sqrt{q}$. 
	This implies $\alpha=\frac{2}{\theta}$. Since $\theta\in ]0,2]$, we must have $\alpha\geq 1$. Since   $\alpha\in ]0,1]$, $\alpha=\frac{2}{\theta}$ holds for $\theta=2$ and $\alpha=1$. However, for these values of $\theta=2$ and $\alpha=1$ we obtain a only one preference which is $2\sqrt{q}-t^{2}$, i.e.,  $U \cap V=\{2\sqrt{q}-t^{2}\}$.  
	Thus, $U \cup V$ satisfies the single crossing property.

	We show that $U\cup V$ is rich.  
	To see this consider expanding $U$ and $V$, i.e., $U^{*}=\{u(t,q)\mid \theta\sqrt{q}-t^{2}|\theta\in ]0,\infty[\}$ and $V^{*}=\{v(t,q)\mid 2\sqrt{q}-\alpha t^{2}|\alpha\in ]0,\infty[\}$.
	Consider $z=(t,q),z'=(t',q')$ diagonal such that $q<q^{'}$, and $t<t'$. Set $\theta=\frac{t^{'2}-t^{2}}{\sqrt{q}'-\sqrt{q}}$, and $\alpha=\frac{2\sqrt{q}'-2\sqrt{q}}{t^{'2}-t^{2}}$, thus $U^{*}$ and $V^{*}$ are rich.  To show that $U\cup V$ is rich we need to show that $z,z'$ lie on either an indifference curve from $U$ or from $V$. Since $U\cup V$ is single-crossing $z,z'$ can lie on an indifference curve of a preference from either  $U$ or $V$ but not both other than the preference in the intersection. To complete the proof that $U\cup V$ is rich, we show that if $z,z'$ do not lie on an indifference curve from a preference from $U$, then  
	$z, z^{'}$ do not lie on an indifference curve of a preference from $2\sqrt{q}-\alpha t^{2}, 1<\alpha$.
	Let by way of contradiction, $1<\alpha=\frac{2\sqrt{q}'-2\sqrt{q}}{t^{'2}-t^{2}}$. Then, $\frac{t^{'2}-t^{2}}{ 2\sqrt{q}'-2\sqrt{q} }<1$, i.e., $\frac{t^{2}-t^{2}}{ \sqrt{q}'-\sqrt{q}  }<2$. This means there exists $\theta<2$ such that $z,z'$ are in the same indifference curve of a preference from $U$, this is a contradiction. Therefore, our claim is established, and given that $V^{*}$ is rich $z,z'$ lie on an indifference curve from $V$.    
	Hence, $U\cup V$ is rich. For high enough benefits, i.e., when  $\theta=2$, the utility functions with $\alpha<1$ describes that disutility from 
	per unit cost is less compared to the situation when the benefit is not high enough i.e., $\theta<2$.

	We do not have one dimensional pay-off relevant parametric representation of $U\cup V$.  Let us ask whether $2$ represents  $2\sqrt{q}-t^{2}$  or $2\sqrt{q}-\frac{1}{2}t^{2}$. Suppose we break the tie by  letting $2$ represent $2\sqrt{q}-t^{2}$ and $\frac{1}{2}$ represents $2\sqrt{q}-\frac{1}{2}t^{2}$. Then the question is how do we represent $\frac{1}{2}\sqrt{q}-t^{2}$. The only option available is $1$ since $1$ multiplies $t^2$. Then $1$ cannot represent $\sqrt{q}-t^{2}$. Thus we need two parameters to represent $U\cup V$. It is not critical that two parameters appear in the preferences in $U\cup V$, rather the critical point is that none of the parameters are redundant.
	If valuations are required to be pay-off relevant parameters that appear in the utility functions, then the real number that is associated with a preference in $\mathcal{R}^{cms}$ cannot be called a valuation.
	First, as we have seen, a rich single-crossing domain being one dimensional topological manifold does not 
	mean that it is also parametrically one dimensional. Also, since 
	any open interval is homeomorphic to the real line there is nothing special about the homeomorphism defined in Theorem \ref{thm:lin_cont}. For example $\mathcal{R}^{cms}$ is homeomorphic to both $]0,1[$ and $]7, 8[$, and thus it is not clear which interval should represent valuation. 
	We note that $U\cup V$ is not convex.{\footnote{A set of real valued functions is convex if for any two functions $u',u''$ from the set, the function $\lambda u'+(1-\lambda )u'', 0\leq \lambda \leq 1$ is in the set. The addition $\lambda u'+(1-\lambda )u''$ is defined pointwise.}}$\square$

	\label{example:two-parameter}
\end{example}

\noindent We discuss strategy-proof mechanisms in the next section.

\section{\bf Strategy-proof Mechanisms in $\mathcal{R}^{cms}$}
\label{sec:sp}

Consider a domain $\mathcal{R}^{cms}$. We study mechanisms or social choice function that maps reported 
preferences from $\mathcal{R}^{cms}$ to a bundle. We define strategy-proofness next.

\begin{defn} [Strategy-proof Mechanisms]\rm A function $F:\mathcal{R}^{cms}\rightarrow \mathbb{Z}$ is called a {\bf mechanism}. A mechanism is {\bf strategy-proof} if for every $R', R''\in \mathcal{R}^{cms}$,
	$F(R')R'F(R'')$.   
	\label{defn:sp}
\end{defn}

\noindent If $R'$ is the buyer's actual preference, and she reports $R''$, then $F(R')R'F(R'')$ means that the buyer has no incentive to misreport her preference since the bundle that she is allocated by $F$ at $R''$, i.e., $F(R'')$, does not make her better off relative to the bundle $F(R')$ which she is allocated at $R'$.  
We also consider a local version of the notion of strategy-proofness. 
The set $Rn(F)=\{z\in \mathbb{Z}\mid F(R)=z,~\text{for some}~R\in \mathcal{R}^{cms}\}$ denotes the range of $F$. 

\begin{defn}\rm For a mechanism $F$ we call $Rn(F)$ to be {\bf ordered} if and only if for all $x',x''$ in $Rn(F)$ with  $x'\neq x''$ implies  $x'$ and  $x''$ are diagonal, i.e., either $x'<x''$ or $x''<x'$.     
\end{defn} 

\noindent We define the notion of local strategy-proofness next.

\begin{defn}\rm For the mechanism $F$ let $Rn(F)$ be ordered. Let $F(R')=x',F(R'')=x'' $ and $x'<x''$. Further, let for all $z\in Rn(F)$ with $z\neq x',z\neq x''$ either $z<x'$ or $x''<z$. 
	In this case we call $x', x''$ to be {\bf adjacent}. We say $F$ is {\bf locally strategy-proof in range}
	if and only if for all adjacent bundles $F(R'),F(R'')$, $F$ satisfies $(i)$ $F(R')R'F(R'')$ and $(ii)$ $F(R'')R''F(R')$.    	 
	\label{defn:sp_local}
\end{defn}

\noindent Local strategy-proofness in range requires the buyer to be unable to obtain a better adjacent bundle by misreporting her preference.
We show that if $F$ is strategy-proof, then $Rn(F)$ is an ordered set. Therefore,  if $F$ is strategy-proof, then $F$ is locally strategy-proof in range.  
A different notion of local strategy-proofness is often used in the studies of the implications of local strategy-proofness for strategy-proofness. 
\citep{Carroll} is the first paper to define this notion by means of adjacent preferences. Since our paper is not about characterizations of local strategy-proof mechanisms, we do not discuss it further.
Further, we do not assume that $F$ is local strategy-proof in range, rather it is a consequence of monotonicity of $F$ and continuity of the corresponding indirect preference correspondence defined below.        
The two component functions of $F$ are also denoted by $t,q$, i.e.,  
$F(R)=(t(R),q(R))$. The following is the standard equivalent formation of the notion of an ordered range in terms of monotonicity of $F$.

\begin{defn}\rm 
	A SCF $F:\mathcal{R}^{cms}\rightarrow \mathbb{Z}$ is \textbf{monotone with respect to the order relation} $\prec$ on $\mathcal{R}^{cms}$ or simply \textbf{monotone} if for every $R', R''\in \mathcal{R}^{cms}$,
	[$R'\prec R''\iff F(R') \leq F(R'')$].
	\label{defn:mon}
\end{defn}

\noindent The following lemma shows that if $F$ is strategy-proof, then $F$ is monotone, i.e.,
$Rn(F)$ is an ordered set for strategy-proof mechanisms.       

\begin{lemma}\rm 
	Let  $F:\mathcal{R}^{cms}\rightarrow \mathbb{Z}$ be a strategy proof SCF. Then $F$ is monotone.
	\label{lemma:mon}
\end{lemma}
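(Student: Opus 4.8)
The plan is to reduce the lemma to two claims: first, that the range $Rn(F)$ is \emph{ordered} (any two distinct values of $F$ are diagonal), and second, that $F$ preserves the order, i.e. $R'\prec R''$ implies $F(R')\le F(R'')$, together with the converse needed for the stated equivalence. Throughout I write $x'=F(R')=(t',q')$ and $x''=F(R'')=(t'',q'')$, and I record the two inequalities that strategy-proofness supplies directly from Definition \ref{defn:sp}: $x' R' x''$ and $x'' R'' x'$.

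For the first claim I would argue by contradiction, supposing $x'\ne x''$ are not diagonal. There are three configurations to exclude. If $t'=t''$ and, say, $q'<q''$, then $q$-monotonicity gives $x'' P' x'$, contradicting $x' R' x''$; the symmetric subcase $q'>q''$ is ruled out by the other inequality $x'' R'' x'$. If $q'=q''$ and $t'\ne t''$, money-monotonicity likewise contradicts one of the two inequalities. Finally, in the anti-diagonal case $t'<t''$ but $q'>q''$ (or the reverse), $x'$ dominates $x''$ in both the payment and the probability directions, so combining money- and $q$-monotonicity through an intermediate bundle yields $x' P'' x''$, contradicting $x'' R'' x'$. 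Hence distinct images must be diagonal and $Rn(F)$ is ordered.

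The crux is order preservation. Assume $R'\prec R''$ and, for contradiction, that $F(R'')<F(R')$, i.e. $x''<x'$ (by the previous step this is the only remaining diagonal possibility). Since $x''\le x'$ we have $x''\in\square(x')$, and since $x'' R'' x'$ we have $x''\in UC(R'',x')$; applying the definition of ``cuts from above'' (Definition \ref{defn:cut}) at $z=x'$ gives $x''\in\square(x')\cap UC(R'',x')\subseteq UC(R',x')$, so $x'' R' x'$, and together with $x' R' x''$ this forces $x' I' x''$. I would then invoke the upper-right counterpart of the cutting relation, namely that $R'\prec R''$ also yields $\blacksquare(z)\cap UC(R',z)\subseteq\blacksquare(z)\cap UC(R'',z)$ for every $z$, writing $\blacksquare(z)=\{x\mid x\ge z\}$. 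Applied at $z=x''$ to the point $x'\in\blacksquare(x'')\cap UC(R',x'')$ (the membership coming from $x' I' x''$), it gives $x'\in UC(R'',x'')$, i.e. $x' R'' x''$, and with $x'' R'' x'$ we obtain $x' I'' x''$. But now the distinct bundles $x'\ne x''$ lie simultaneously on an indifference curve of $R'$ and on one of $R''$, which contradicts the single-crossing property (Definition \ref{defn:single_crossing}) since $R'\ne R''$. Therefore $x''<x'$ is impossible and $F(R')\le F(R'')$.

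I expect the main obstacle to be justifying the upper-right containment form used above: the primal definition of cutting from above only controls the region $\square(z)$, so I would either derive its $\blacksquare(z)$-counterpart from single-crossing together with monotonicity (the two indifference curves meet only at $z$ and switch sides there, with no tangency permitted in the interior) or cite the corresponding fact established in Appendix $1$. The converse direction of the equivalence is then comparatively routine: since $\prec$ is a total order on the linear continuum $\mathcal{R}^{cms}$, if $F(R')<F(R'')$ then $R'\ne R''$, and $R''\prec R'$ would give $F(R'')\le F(R')$ by the step just proved, contradicting strict diagonal dominance; hence $R'\prec R''$. The only delicate point is the degenerate case $R'\ne R''$ with $F(R')=F(R'')$, where the equivalence is to be read in the order-preserving sense rather than literally.
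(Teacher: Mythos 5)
Your proof is correct and rests on the same two pillars as the paper's: the pair of strategy-proofness inequalities $F(R')R'F(R'')$ and $F(R'')R''F(R')$, and the cut-from-above/single-crossing geometry of $\mathcal{R}^{cms}$; but the organization is genuinely different. The paper argues directly: it records $F(R')\in LC(R'',F(R''))\cap UC(R',F(R''))$ and then asserts in one line that, for $R'\prec R''$, this intersection is contained in $\{z\mid z\leq F(R'')\}$, which yields $F(R')\leq F(R'')$ with no contradiction argument and no preliminary step about the range. You instead first show $Rn(F)$ is ordered, then suppose $F(R'')<F(R')$, apply Definition \ref{defn:cut} at $F(R')$ to get $F(R')\,I'\,F(R'')$, apply the $\blacksquare$-counterpart at $F(R'')$ to get $F(R')\,I''\,F(R'')$, and contradict single-crossing. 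In substance your contradiction argument is a proof of exactly the containment the paper asserts tersely (``by monotonicity of CM preferences''): monotonicity alone only excludes the off-diagonal quadrants, and ruling out the upper-right quadrant requires the cut order together with single-crossing, which is precisely what you supply; so your write-up is, if anything, more complete on the key geometric step. Two remarks. First, the $\blacksquare(z)$-counterpart that you flag as the main obstacle can be bypassed entirely: once you have $F(R')\,I'\,F(R'')$ with $F(R'')<F(R')$, Lemma \ref{lemma:preference_preserve}$(i)$ (whose proof does not depend on this lemma) gives $F(R')P''F(R'')$ for every $R''$ with $R'\prec R''$, directly contradicting $F(R'')R''F(R')$. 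Second, your explicit treatment of the converse direction of the ``iff'' in Definition \ref{defn:mon}, including the degenerate case $F(R')=F(R'')$ with $R'\neq R''$, addresses a point the paper's proof passes over silently, since the paper establishes only the direction $R'\prec R''\Rightarrow F(R')\leq F(R'')$.
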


\begin{proof} See Appendix $2$. 
	
\end{proof}	 

\noindent A strategy-proof mechanism induces an indirect utility function if an utility representation of a preference is given. In this paper we study strategy-proof mechanisms by analyzing  ordinal properties of preferences. Thus, instead of an indirect utility function we consider an indirect preference correspondence. This correspondence is defined next.

\begin{defn} [Indirect Preference Correspondence] \rm 
	Consider a mechanism $F$. We call the set valued mapping, $V^{F}:\mathcal{R}^{cms}\rightrightarrows \mathbb{Z}$ defined as $
	V^{F}(R)=IC(R, F(R))=\{z\in \mathbb{Z}\mid z I F(R)\}$ {\bf indirect preference correspondence}. 
\end{defn}

\noindent For each preference $R$, $V^{F}$ picks the set of bundles $z$ that are indifferent to $F(R)$ under $R$. For $R',R''$ by $R'\precsim R''$ we mean either $R'=R''$ or $R'\prec R''$. 
Since $\mathcal{R}^{cms}$ has the least upper bound property, any bounded monotone, i.e., either increasing or decreasing, sequence $\{R^{n}\}_{n=1}^{\infty}$ converges to its supremum if the sequence is increasing and converges to its infimum if the sequence is decreasing. The following notion of continuity of an indirect preference correspondence is crucial for the analysis of strategy-proof mechanisms.        

\begin{defn}[Continuous Indirect preference correspondence]\rm We call
	$V^{F}:\mathcal{R}^{cms}\rightrightarrows \mathbb{Z}$ {\bf continuous}, if for any $R$ and any monotone sequence $\{R^{n}\}_{n=1}^{\infty}$ converging to $R$: (a) the sequence $\{F(R^n)\}_{n=1}^{\infty}$ 
	converges and, (b) $z=\lim_{n\rightarrow \infty}F(R^n)I F(R)$.
	\label{defn:cont}
\end{defn}

\noindent Suppose $U$ is a utility representation of $R\in \mathcal{R}^{cms}$. Further suppose $V^{F}(U)=U(F(U))$. Then the notion of continuity of $V^{F}$ becomes the standard notion of continuity of a function.
That is, given that the space of utility functions that represents the preferences is a metric space,
$V^{F}$ is continuous if $U^{n}\rightarrow U$, then $V^{F}(U^{n})\rightarrow V^{F}(U)$. 
Consider Example \ref{ex:qlin}. Let $F:]0,\infty[\rightarrow \mathbb{Z}$ be a mechanism. 
Then $V^{F}(\theta)=\theta q(\theta)-t(\theta)$ is a function, and thus instead of a continuous
indirect preference correspondence we have a continuous indirect utility function. We do not assume $V^{F}(\theta'')=V^{F}(\theta')+\int_{\theta'}^{\theta''}q(r)dr$ for $\theta'<\theta''$ as a sufficient condition in our characterization of strategy-proof mechanisms even for CM rich single-crossing domains where such an equation is well defined. 
We only require continuity of $V^{F}$ to study preferences in Example \ref{ex:qlin}. 
Next we show that if $F$ is a strategy-proof mechanisms, then  the correspondence $V^{F}$ is continuous.
The following intermediary result is important.

\begin{lemma}\rm Let $z',z''\in \mathbb{Z}$ and $z'<z''$. Let $R\in \mathcal{R}^{cms}$. 
	Then the following hold: $(i)$ if $z'I z''$, then $z'P^{*}z''$ for all $R^{*}\prec R$; and $z''P^{*}z'$ for all $R\prec R^{*}$,  
	$(ii)$ if $z'Pz''$, then $z'P^{*}z''$ for all $R^{*}\prec R$; and   
	if $z''Pz'$, then $z''P^{*}z'$ for all $R\prec R^{*}$, where $R^{*}\in \mathcal{R}^{cms}$.   
	\label{lemma:preference_preserve}
\end{lemma}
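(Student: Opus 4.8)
The plan is to reduce all four assertions to a single one-dimensional comparison along the vertical line $t=t'$ through the lower bundle $z'=(t',q')$. Both indifference curves $IC(R,z'')$ and $IC(R^{*},z'')$ pass through $z''=(t'',q'')$ and, being upward-sloping curves of CM preferences, each meets this line (which lies strictly to the left of $t''$, since $t'<t''$) at a height; I write $h_{R}$ and $h_{R^{*}}$ for these heights, so that $(t',h_{R})\,I\,z''$ and $(t',h_{R^{*}})\,I^{*}z''$ (existence at $t'$ is discussed at the end). By $q$-monotonicity and continuity, the relation of a point $(t',q)$ to $z''$ is governed entirely by its height: $(t',q)\,R\,z''\iff q\ge h_{R}$, with strict preference for $q>h_{R}$, and likewise for $R^{*}$ and $h_{R^{*}}$. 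Since $z'=(t',q')$, the whole lemma becomes a comparison of the three numbers $q'$, $h_{R}$, $h_{R^{*}}$.

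First I would read off the hypotheses: $z'\,I\,z''$ is exactly $q'=h_{R}$; $z'\,P\,z''$ is $q'>h_{R}$; and $z''\,P\,z'$ is $q'<h_{R}$. Next I would translate the order $\prec$ using Definition \ref{defn:cut} evaluated at the bundle $z''$ and then restricted to the line $t=t'$. On that line the slice of $\square(z'')$ is $\{(t',q):0\le q<q''\}$ and the slice of an upper-contour set $UC(S,z'')$ is the up-ray $\{q\ge h_{S}\}$, and one up-ray is contained in another precisely when its threshold is higher. Hence the inclusion $\square(z'')\cap UC(R,z'')\subseteq\square(z'')\cap UC(R^{*},z'')$ defining $R^{*}\prec R$ yields $h_{R^{*}}\le h_{R}$, while $R\prec R^{*}$ yields $h_{R}\le h_{R^{*}}$. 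Chaining these with the hypotheses settles part $(ii)$ at once: $q'>h_{R}\ge h_{R^{*}}$ when $R^{*}\prec R$ gives $z'\,P^{*}z''$, and $q'<h_{R}\le h_{R^{*}}$ when $R\prec R^{*}$ gives $z''\,P^{*}z'$, both strict and immediate. For part $(i)$, the equality $q'=h_{R}$ produces only the weak comparisons $q'\ge h_{R^{*}}$ (i.e. $z'\,R^{*}z''$) when $R^{*}\prec R$, and $q'\le h_{R^{*}}$ (i.e. $z''\,R^{*}z'$) when $R\prec R^{*}$.

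To upgrade the weak comparisons in $(i)$ to the strict ones claimed, I would invoke single-crossing (Definition \ref{defn:single_crossing}). If either conclusion failed we would have $z'\,I^{*}z''$; but then $z'$ and $z''$ are two distinct bundles lying in $IC(R,z'')\cap IC(R^{*},z'')$ (on the first curve because $z'\,I\,z''$, on the second because $z'\,I^{*}z''$, and $z''$ is trivially on both), contradicting that indifference curves of the distinct preferences $R\neq R^{*}$ meet in at most one bundle. Hence $h_{R^{*}}\neq q'$, the inequalities are strict, and we obtain $z'\,P^{*}z''$ and $z''\,P^{*}z'$ respectively.

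I expect the only genuine obstacles to be the strictness step just described, which is exactly where single-crossing is indispensable (the cut relation alone gives merely weak preference in the indifference cases), and the existence of the height $h_{R^{*}}$ at $t=t'$. The step converting the set-inclusion of Definition \ref{defn:cut} into the two-sided height inequality is legitimate because on a single vertical line an inclusion of upper-contour sets is literally an inclusion of up-rays, hence an inequality of thresholds; this is what makes the ``up-right'' second halves no harder than the first halves. For existence, when $R\prec R^{*}$ the curve $IC(R^{*},z'')$ lies weakly above $IC(R,z'')$ on $\square(z'')$, so its height at $t'$ is at least $h_{R}=q'\ge 0$ and is well defined; when $R^{*}\prec R$ the curve $IC(R^{*},z'')$ may instead meet the edge $q=0$ at some $t_{0}\in\,]t',t''[$, but then money-monotonicity gives $(t',0)\,P^{*}(t_{0},0)\,I^{*}z''$, so the entire slice, in particular $z'$, already satisfies $z'\,P^{*}z''$ and the desired conclusion holds outright. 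Thus the boundary case only strengthens the conclusion and needs no separate treatment beyond this remark.
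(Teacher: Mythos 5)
Your argument is correct in its essentials, but it is not the paper's argument, and as written it leaves one boundary case uncovered.

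On the comparison: the paper treats part $(i)$ as a direct rewriting of the cut-from-above inclusion in Definition \ref{defn:cut} (which gives the weak preference) together with single-crossing (which gives strictness), and then derives part $(ii)$ from part $(i)$ by invoking \emph{richness}: it picks $\widetilde{R}\in\mathcal{R}^{cms}$ with $z'\widetilde{I}z''$, uses the hypothesis $z'Pz''$ (resp.\ $z''Pz'$) to locate $R\prec\widetilde{R}$ (resp.\ $\widetilde{R}\prec R$), and applies $(i)$ to $\widetilde{R}$ plus transitivity of $\prec$. You instead handle $(i)$ and $(ii)$ uniformly through the threshold comparison of $q'$, $h_{R}$, $h_{R^{*}}$ on the vertical line $t=t'$, using Definition \ref{defn:cut} only at the single bundle $z''$ and single-crossing only for the strictness step in $(i)$. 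This buys two things: your proof of $(ii)$ never uses richness, so it works on any single-crossing domain on which the order is defined, not just a rich one; and it makes explicit the reasoning behind the $R\prec R^{*}$ halves, which the paper's one-line ``just a rewriting'' leaves implicit.

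The gap: your reduction presupposes that $IC(R,z'')$ actually meets the line $t=t'$, i.e.\ that $h_{R}$ exists. That is automatic in part $(i)$ (where $h_{R}=q'$) and in the second half of part $(ii)$ (the up-set of $R$ on the line is a proper closed up-ray: it contains $(t',1)$ but not $z'$), but it can genuinely fail in the first half of part $(ii)$: when $z'Pz''$, the curve $IC(R,z'')$ may exit the bottom edge $q=0$ at some $t_{0}\in\,]t',t''[$ and never reach $t=t'$. For instance, with quasilinear $\theta q-t$, $\theta=1$, $z'=(0,0)$, $z''=(1,\tfrac{1}{2})$, the indifference curve of $R$ through $z''$ hits $q=0$ at $t_{0}=\tfrac{1}{2}>t'=0$. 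Your closing remark treats only the failure of $h_{R^{*}}$, not of $h_{R}$, so in this configuration the chain $q'>h_{R}\ge h_{R^{*}}$ is simply undefined. The repair stays inside your toolbox: since $t'<t_{0}<t''$ and $0<q''$, the point $(t_{0},0)$ lies in $\square(z'')\cap UC(R,z'')$, so the inclusion defining $R^{*}\prec R$ gives $(t_{0},0)R^{*}z''$; then monotonicity yields $z'=(t',q')\,R^{*}\,(t',0)\,P^{*}\,(t_{0},0)\,R^{*}\,z''$, hence $z'P^{*}z''$ outright. With that one additional case your proof is complete.
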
	

\begin{proof} See Appendix $2$.
	
\end{proof}         

\noindent Lemma \ref{lemma:preference_preserve}, the second part of the lemma in particular, implies that the order on the set of preferences is preserved over diagonal bundles. The notions of single-crossing property defined in \citep{Saporiti} and \citep{Baisa2} are analogous to the second part of Lemma \ref{lemma:preference_preserve}.

\begin{lemma}\rm 
	Let $F:\mathcal{R}^{cms}\rightarrow \mathbb{Z}$ be strategy-proof. Then $V^F:\mathcal{R}^{cms}\rightrightarrows \mathbb{Z}$ is continuous.
	\label{lemma:cont_correspondence}
\end{lemma}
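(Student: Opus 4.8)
The plan is to verify the two clauses of Definition~\ref{defn:cont} by reducing everything to the two comparisons that strategy-proofness supplies, $F(R)\,R\,F(R^{n})$ and $F(R^{n})\,R^{n}\,F(R)$, and to remove the \emph{moving} preference $R^{n}$ from the second of these by invoking Lemma~\ref{lemma:preference_preserve}. Fix a monotone sequence $R^{n}\to R$; I treat the increasing case $R^{n}\uparrow R$, the decreasing case being entirely symmetric. Write $x=F(R)$ and $x^{n}=F(R^{n})$. By Lemma~\ref{lemma:mon} the mechanism $F$ is monotone, so $x^{n}\le x^{n+1}\le x$; each coordinate of $x^{n}$ is then a bounded monotone real sequence, hence convergent, and $x^{n}\to z$ for some $z\in\mathbb{Z}$ lying coordinatewise below $x$. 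This is clause (a). If $x^{n}=x$ for some $n$, monotonicity forces $x^{m}=x$ for all $m\ge n$, whence $z=x$ and clause (b) is immediate; so I may assume $x^{n}<x$ is strictly diagonal for every $n$.

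The first half of clause (b), namely $x\,R\,z$, is easy: strategy-proofness gives $x\,R\,x^{n}$, so $x^{n}\in LC(R,x)$, and since $R$ is CM this set is closed, so the limit obeys $z\in LC(R,x)$, i.e.\ $x\,R\,z$.

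The substantive half is $z\,R\,x$, and the difficulty is exactly that the inequality $x^{n}R^{n}x$ refers to the varying preference $R^{n}$, so it cannot be closed against a single contour set. I resolve this in two moves. First I descend to a fixed index: fix $k$ and take $n>k$, so $R^{k}\prec R^{n}$ and $x^{n}<x$ is diagonal; applying Lemma~\ref{lemma:preference_preserve} to the pair $(x^{n},x)$ converts $x^{n}R^{n}x$---whether it is strict or an indifference---into the \emph{strict} comparison $x^{n}P^{k}x$ for the fixed preference $R^{k}$. Hence $x^{n}\in UC(R^{k},x)$ for all $n>k$, and letting $n\to\infty$ against the closed set $UC(R^{k},x)$ yields $z\,R^{k}\,x$. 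Now I ascend in the preference variable along the \emph{fixed} pair $(z,x)$: by richness there is a (unique, by single-crossing) preference $\bar R$ under which $z$ and $x$ are indifferent, and Lemma~\ref{lemma:preference_preserve} says that below $\bar R$ the lower bundle $z$ is strictly preferred while above $\bar R$ the bundle $x$ is. Since $z\,R^{k}\,x$ rules out $x\,P^{k}\,z$, we get $R^{k}\precsim\bar R$ for every $k$; as $R^{k}\uparrow R$ and the order interval $]-\infty,\bar R]$ is closed, this forces $R\precsim\bar R$, and therefore $z\,R\,x$. Combined with $x\,R\,z$ this gives $z\,I\,x$, completing clause (b). The stray possibility that $z\neq x$ yet shares a coordinate with $x$ is discarded at once: by money- or $q$-monotonicity one of $z,x$ would then be strictly preferred to the other under every preference, contradicting either $x\,R\,z$ or one of the derived relations $z\,R^{k}\,x$.

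The only genuine obstacle is this second half, and it arises precisely because the base point $x^{n}$ and the preference $R^{n}$ move together. The device that defeats it is the order in which single-crossing is used: first descend to an arbitrary fixed $R^{k}$, where Lemma~\ref{lemma:preference_preserve} upgrades a possibly-indifferent comparison into a strict one that survives the bundle limit against a closed contour set; then ascend along the now-fixed pair $(z,x)$, where single-crossing reappears as a threshold preference together with closedness of order-intervals and lets the comparison pass to the limit preference $R$. Everything else is monotonicity of $F$ and closedness of the CM contour sets.
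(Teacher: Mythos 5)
Your proof is correct, and while it consumes the same ingredients as the paper's (Lemma \ref{lemma:mon}, Lemma \ref{lemma:preference_preserve}, richness, and continuity of preferences), it is organized quite differently. The paper argues by contradiction: assuming $z\,I\,F(R)$ fails, it splits into the cases $z\,P\,F(R)$ and $F(R)\,P\,z$. The first case is your easy half in contrapositive form (an $\epsilon$-ball around $z$ on which everything beats $F(R)$, versus your closed set $LC(R,F(R))$). The second case is where the proofs genuinely diverge: the paper builds a small box $[t,t']\times[q,q']$ with $z$ as its lower-left corner inside a ball on which $F(R)$ beats everything, uses richness to create an auxiliary preference $R^{*}$ making $F(R)$ indifferent to the top-left corner $(t,q')$, and then combines Lemma \ref{lemma:preference_preserve} with preference monotonicity to exhibit some $R^{n}\prec R^{*}$ with $F(R)\,P^{n}\,F(R^{n})$, contradicting strategy-proofness at $R^{n}$. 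You instead work directly: the moving constraint $F(R^{n})\,R^{n}\,F(R)$ is pushed down to each fixed $R^{k}$ by single-crossing, closed upper contour sets then give $z\,R^{k}\,F(R)$ in the limit, and finally the threshold preference $\bar{R}$ through the fixed pair $(z,F(R))$ together with closedness of the order interval $]-\infty,\bar{R}]$ passes the comparison to $R$ itself. What your route buys is the elimination of the metric box-and-corner construction: the hard direction becomes a pure order-and-closedness argument, anchoring the richness preference at the limit pair $(z,F(R))$ rather than at $F(R)$ and an auxiliary corner, and the indifference is obtained as the conjunction of two weak preferences rather than by refuting two strict alternatives. One small imprecision: since a monotone sequence need not be strictly monotone, you may have $R^{k}=R^{n}$ for some $n>k$, in which case Lemma \ref{lemma:preference_preserve} does not apply and the comparison $F(R^{n})\,R^{k}\,F(R)$ need not be strict; this is harmless because your limit argument only needs $F(R^{n})\in UC(R^{k},F(R))$, but the word ``strict'' should be dropped at that step.
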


\begin{proof} See Appendix $2$. 
	
\end{proof}

\noindent We can summarize the observations from Lemma \ref{lemma:cont_correspondence} and Lemma \ref{lemma:mon} in the following theorem.

\begin{theorem}\rm Let $F:\mathcal{R}^{cms}\rightarrow \mathbb{Z}$ be strategy-proof. Then $F$ is monotone, and $V^{F}$ is continuous. Further, $F$ is locally strategy-proof in range.    
	\label{thm:sp_implies}	     
\end{theorem}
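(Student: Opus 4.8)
The plan is to show that Theorem \ref{thm:sp_implies} follows essentially by assembling the three results already established in the excerpt. The statement has three conclusions: that $F$ is monotone, that $V^F$ is continuous, and that $F$ is locally strategy-proof in range. The first two are verbatim the conclusions of Lemma \ref{lemma:mon} and Lemma \ref{lemma:cont_correspondence}, so for those the proof reduces to a one-line citation of each lemma under the standing hypothesis that $F$ is strategy-proof. The only genuine work is the third conclusion, local strategy-proofness in range, and even that should be a short derivation rather than a fresh argument.

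First I would invoke Lemma \ref{lemma:mon} to conclude that $F$ is monotone, which by Definition \ref{defn:mon} is equivalent to saying that $Rn(F)$ is an ordered set: any two distinct bundles in the range are diagonal. This is the prerequisite for the notion of adjacency (Definition \ref{defn:sp_local}) to make sense, so establishing it first is logically necessary before the local-strategy-proofness clause can even be stated. Next I would cite Lemma \ref{lemma:cont_correspondence} to record that $V^F$ is continuous, completing the second conclusion.

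For the third conclusion I would argue that global strategy-proofness trivially implies local strategy-proofness in range. Suppose $F(R')=x'$ and $F(R'')=x''$ are adjacent with $x'<x''$. By Definition \ref{defn:sp}, strategy-proofness gives $F(R')\,R'\,F(R'')$, i.e.\ $x'\,R'\,x''$, which is exactly condition $(i)$; symmetrically, applying strategy-proofness at $R''$ against the report $R'$ gives $F(R'')\,R''\,F(R')$, i.e.\ $x''\,R''\,x'$, which is condition $(ii)$. Since these are the only two requirements in the definition of local strategy-proofness in range, and they hold for \emph{every} pair of bundles in the range (adjacent or not) as an immediate consequence of the global condition, the claim follows. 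No use of monotonicity or continuity is actually needed for this last step beyond knowing $Rn(F)$ is ordered so that adjacency is well defined.

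I do not anticipate a serious obstacle here, since the theorem is a summary consolidating prior lemmas; the main thing to be careful about is the logical ordering, namely establishing the ordered-range property (via monotonicity) before appealing to the adjacency concept in Definition \ref{defn:sp_local}. The one subtlety worth a sentence is making explicit that $F(R')\,R'\,F(R'')$ unpacks to $x'\,R'\,x''$ under the identifications $F(R')=x'$, $F(R'')=x''$, so that conditions $(i)$ and $(ii)$ are literally instances of the strategy-proofness inequality rather than something requiring the single-crossing geometry. Everything else is a direct appeal to results already proved in the excerpt.
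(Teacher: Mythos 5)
Your proposal is correct and matches the paper's treatment: the paper explicitly presents Theorem \ref{thm:sp_implies} as a summary of Lemma \ref{lemma:mon} and Lemma \ref{lemma:cont_correspondence}, with local strategy-proofness in range following immediately because global strategy-proofness yields $F(R')R'F(R'')$ and $F(R'')R''F(R')$ for \emph{all} pairs, while monotonicity makes the range ordered so that adjacency is well defined. Your care about the logical ordering (monotonicity first, so that Definition \ref{defn:sp_local} applies) is exactly the right observation, and nothing further is needed.
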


\noindent To prove the converse of Theorem \ref{thm:sp_implies} we require to put some restrictions on the range of $F$.  
To prove the converse of Theorem \ref{thm:sp_implies} we assume that $Rn(F)$ is finite.
Next we give  two examples that demonstrate that our two axioms are independent.       
In the next example we construct mechanism where $V^F$ is continuous and $F$ is not monotone.  

\begin{example}\rm Consider the quasilinear preferences $\{u(t,q;\theta)=\theta q-t\mid \theta >0\}$. 
	Let $\underline{\theta}-t^1=0$, and $\overline{\theta}-t^2=0$. Let $\underline{\theta}<\overline{\theta}$. Thus, $t^1<t^2$. Let $F(\theta)=(t^2,1)$ if $\overline{\theta}\leq \theta$, $F(\theta)=(0,0)$, if $\theta\in ]\underline{\theta},\overline{\theta}[$, $F(\theta)=(t^1,1)$ if $\theta<\underline{\theta}$.
	Then $V^{F}(\theta)=\theta-t^2$ if $\overline{\theta}\leq \theta$, $V^{F}(\theta)=0$ if $\theta\in ]\underline{\theta},\overline{\theta}[$ and  $V^{F}(\theta)=\theta-t^1$ if $\theta\leq \underline{\theta}$. Then $V^{F}$ is continuous, and $F$ is not monotone.$\square$
	\label{ex:non_mon} 
\end{example}

\noindent In the next example we construct mechanism where $V^F$ is not continuous and $F$ is monotone.  

\begin{example}\rm Consider the quasilinear preferences $\{u(t,q;\theta)=\theta q-t\mid \theta >0\}$. 
	Let $(t^1,q^1)<(t^2,q^2)$. Let $\theta q^1-t^1=\theta q^2-t^2$. 
	Let $\theta^*<\theta$. 
	Let $F(\theta)=(t^1,q^1)$ if $\theta<\theta^*$ and $F(\theta)=(t^2,q^2)$ if $\theta^*\leq \theta$. 
	Thus, $F$ is monotone. To see that $V^F$ is not continuous consider a sequence $\theta^n\uparrow \theta^*$. $\lim_{n\rightarrow \infty} F(\theta^n)=(t^1,q^1)$. Thus, $\lim_{n\rightarrow \infty}\theta^n q^1-t^1=\theta^* q^1-t^1> \theta^* q^2-t^2$. 
	Hence $V^F$ is not continuous.$\square$   	
	
	\label{ex:non_cont}	
\end{example}

\subsection{\bf Analysis of Strategy-proofness when $Rn(F)$ is Finite}              

\begin{theorem}\rm Let $F: \mathcal{R}^{cms}\rightarrow \mathbb{Z}$ be monotone, and $V^F$ be continuous. Further, let $Rn(F)$ be finite. Then $F$ is strategy-proof.
	\label{thm:implies_strtagey_proof_finite_range}	
\end{theorem}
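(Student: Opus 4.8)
The plan is to exploit the finiteness of the range to reduce strategy-proofness to a finite chain of pairwise comparisons between consecutive range points, each governed by a single ``cutoff'' preference. First I would use monotonicity to organize the range. Since $F$ is monotone and $Rn(F)$ is finite, the range is an ordered set $z_1 < z_2 < \cdots < z_k$, and each preimage $A_i = F^{-1}(z_i)$ is order-convex: if $R_1 \prec R_2 \prec R_3$ with $F(R_1) = F(R_3) = z_i$, then monotonicity forces $z_i \leq F(R_2) \leq z_i$, so $F(R_2) = z_i$. Thus the $A_i$ partition $\mathcal{R}^{cms}$ into consecutive intervals. Using the least upper bound property of $\mathcal{R}^{cms}$, I define the cutoff $R^*_i = \sup A_i = \inf A_{i+1}$ for $1 \le i \le k-1$; by monotonicity and finiteness of the range, $F(R^*_i) \in \{z_i, z_{i+1}\}$.

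The crucial step is to extract a boundary indifference $z_i I_{R^*_i} z_{i+1}$ from continuity of $V^F$. Because $\mathcal{R}^{cms}$ is a linear continuum, whichever of $A_i, A_{i+1}$ does not contain $R^*_i$ is open at that endpoint and hence admits a strictly monotone sequence converging to $R^*_i$ from within it. If $F(R^*_i) = z_i$, I take a decreasing sequence $R^n$ in $A_{i+1}$ with $R^n \downarrow R^*_i$; then $F(R^n) = z_{i+1}$, and continuity (Definition \ref{defn:cont}) gives $\lim_{n} F(R^n) = z_{i+1}$ indifferent to $F(R^*_i) = z_i$ under $R^*_i$, i.e.\ $z_i I_{R^*_i} z_{i+1}$. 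If instead $F(R^*_i) = z_{i+1}$, I take an increasing sequence in $A_i$ and obtain the same indifference. Since $z_i < z_{i+1}$ are diagonal, Lemma \ref{lemma:preference_preserve}$(i)$ then supplies the decisive localization: $z_i P_R z_{i+1}$ whenever $R \prec R^*_i$, and $z_{i+1} P_R z_i$ whenever $R^*_i \prec R$.

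Finally I would assemble strategy-proofness by chaining these local comparisons. Fix any $R$ with $F(R) = z_i$ and any report $R'$ with $F(R') = z_j$; I must show $z_i R z_j$. The case $j = i$ is immediate. If $j > i$, then $R \in A_i$ gives $R \precsim R^*_i$, and for every $m$ with $i \le m \le j-1$ we have $R \precsim R^*_i \precsim R^*_m$, so the localization (together with the boundary indifference at $m = i$ in case $R = R^*_i$) yields $z_m R z_{m+1}$ at each link; transitivity of $R$ then gives $z_i R z_j = F(R')$. The case $j < i$ is symmetric, using $R \succsim R^*_{i-1}$ and the reversed links $z_{m+1} R z_m$. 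Hence $F(R) R F(R')$ for all $R, R'$, which is exactly strategy-proofness. I expect the main obstacle to be the second step: phrasing the passage to the cutoff so that the appropriate one-sided monotone sequence genuinely exists inside the relevant preimage, and handling both possibilities for the membership of $R^*_i$. Once the boundary indifferences $z_i I_{R^*_i} z_{i+1}$ are secured, Lemma \ref{lemma:preference_preserve} and transitivity make the remainder routine, and no appeal to revenue equivalence or envelope arguments is needed.
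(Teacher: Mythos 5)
Your proposal is correct and takes essentially the same route as the paper's proof: cutoff preferences obtained as suprema/infima of the (order-convex) preimages of range bundles, one-sided monotone sequences combined with continuity of $V^F$ to force indifference between adjacent range bundles at each cutoff, and then Lemma \ref{lemma:preference_preserve} together with transitivity to chain these local comparisons into full strategy-proofness. The only difference is organizational: the paper first proves local strategy-proofness in range by contradiction (Claim \ref{claim:ab}) and then establishes the cutoff indifferences in a separate step (Claim \ref{claim:indiff}), whereas you derive the indifferences directly from continuity and obtain local strategy-proofness as a by-product, a tidier but mathematically equivalent arrangement of the same ingredients.
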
	

\begin{proof} First we show that  monotonicity of $F$ and continuity of $V^{F}$ imply that $F$ is locally strategy-proof in range. Then using the single-crossing property strategy-proofness is extended to $Rn(F)$.   See See Appendix $2$ for details.  
	
\end{proof}

\noindent The proof of Theorem \ref{thm:implies_strtagey_proof_finite_range} that we present is constructive, i.e., the proof reveals how any strategy-proof mechanism looks like. We consider an example of such a rule to illustrate the geometry. Suppose $\#Rn(F)=3$, $\#$ denotes the number of elements in a set. Let $Rn(F)=\{a,b,c\}$, and let $a<b<c$. The proof of Theorem \ref{thm:implies_strtagey_proof_finite_range} shows that if a $F$ is monotone and $V^{F}$ is continuous, then $F$ must be defined as follows.

$$F(R) = \begin{cases}
a, & \text{if $R \prec R_{1}$;}\\
\text{either}~ a~\text{or}~ b, & \text{if $R=R_1$}\\
b & \text{if $R_1\prec R \prec R_2$}\\
\text{either}~b~\text{or}~c & \text{if $R=R_2$,}\\
c & \text{if $R_2\prec R$}
\end{cases}$$

\noindent where $aI_1b$ and $bI_2c$. To see that $F$ is strategy-proof, without loss of generality consider $R$ such $F(R)=b$. By Lemma  \ref{lemma:preference_preserve}, $bPa$ and  
$bPc$. In Figure $1$ we depict this mechanism geometrically. The indifference curves that are drawn in Figure $1$ are convex for the sake of simplicity. We do not require convexity of preferences in our proofs.  

\begin{center}
	\includegraphics[height=5.5cm, width=12cm]{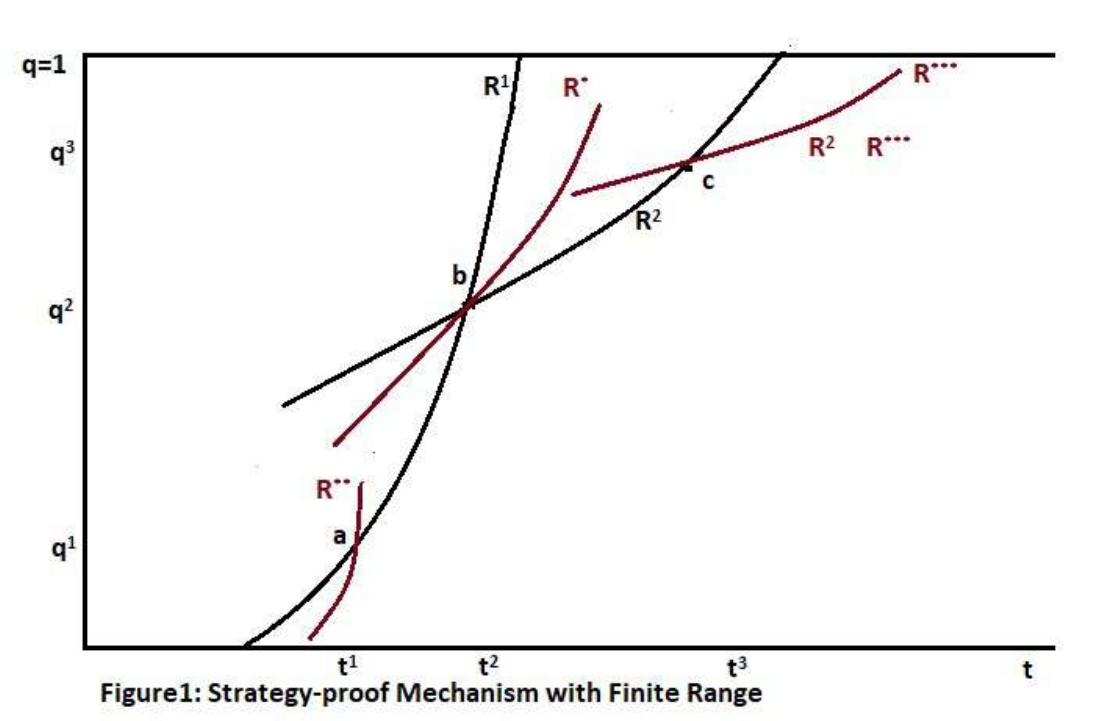}
\end{center}

\noindent In Figure $1$, $a=(t^1,q^1), b=(t^2,q^2),c=(t^3,q^3)$, $R^{**}\prec R^{1}, R^1\prec R^{*}\prec R^{2}$ and $R^{2}\prec R^{***}$. Another possible representation of $F$ with three elements is when all three bundles fall on a single indifference curve, 
i.e., $aIbIc$ for some $R$. By the single-crossing property $F(R)=b$.  
The crucial aspect of our constructive proof is that given a set of the bundles, i.e.,
given their ``coordinates'', monotonicity of $F$ and continuity of $V^{F}$ fix the preferences under which at least two, and maximum three of these, bundles are indifferent. By the single-crossing property these preferences are unique. 
Once these special preferences are obtained the only task that remains is to fix the allocations for preferences in between any two such special preferences. The special preferences in Figure $1$ are $R^1$ and $R^2$. Further, we can easily read off the formula to compute the expected revenue from this geometry. Let $\mu$ be a probability measure that is defined on the intervals from $\mathcal{R}^{cms}$. 
The formal definition of this measure is provided in Section \ref{sec:opt}. 
For example, the expected revenue from the mechanism in Figure $1$ is given by
$t^1\mu(]-\infty,R^{1}])+t^2\mu(]R^{1},R^{2}])+t^{3}\mu([R^{2},\infty[)$, where $\mu(\{R\})=0$ for all $R\in \mathcal{R}^{cms}$.

\begin{remark}\rm By the ``geometry of strategy-proofness'', we mean the geometry displayed in Figure $1$.      
\end{remark}

\noindent Next we make a remark about ``monotonicity'' condition that is used characterization of strategy-proof mechanisms. 

\begin{remark}\rm \citep{Mishra1} study dominant strategy incentive compatible (DSIC) mechanisms and uses  a generalized version of the weak monotonicity condition studied in \citep{Bik}. The latter considers quasilinear preferences. We try to see how the monotonicity condition in \citep{Mishra1} looks like in the context of our model.
	Let $F(\widehat{R})=(t',q')$, and $\widehat{R}\prec R$. Let $F(R)=(t'',q'')$. Let $q''> q'$. Then,  $t''>t'$ and $(t'',q'')I(t',q')$. According to the the notations in \citep{Mishra1} 
	$z=(t(\widehat{R}), q(\widehat{R}))$, $a=q''$, $V^{R}(q'',z)=t''$. Thus if $(t',q'), (t'',q'')$ are adjacent bundles $R$ is one of the special preference that we identify in our characterization. 
	\citep{Mishra1} provide an extensive literature review on the ``monotonicity'' condition used in the characterization of DSIC mechanisms for quasilinear preferences.        	     
	
	\label{remark:mon}	
\end{remark}

\noindent We end this section with the definition of an individually rational mechanism.

\begin{defn}\rm A mechanism $F:\mathcal{R}^{cms}\rightarrow \mathbb{Z}$ is {\bf individually rational}
	if and only if for all $R$, and for all $t$, $F(R)R(t,0)$.     
\end{defn}	   

\noindent The bundle allocated by the mechanism $F$ does not make the buyer worse off relative to any bundle
that has $q=0$. By monotonicity of $R$, $F(R)R(0,0)\implies F(R)R(t,0)$. Thus without loss of generality we shall consider a mechanism to be individually rational if $F(R)R(0,0)$.  We end this section with an example which demonstrates that if a mechanism has infinite range, then our axioms are not enough to guarantee strategy-proofness.

\begin{example}\rm ({\bf A mechanism with infinite range which is not strategy-proof}) 
	We consider 
	a set of preferences given by $\{u(t,q;\theta)=\theta q-t, \theta \in [1,2]\}$. Consider a line segment that connects
	$(0, 0)$ and $(\frac{1}{3},1)$. The equation of this line is $q = 3t$.
	Let $F:[1,2]\rightarrow \mathbb{Z}$ be defined as $F(\theta)=(\frac{1}{3}\theta-\frac{1}{3}, \theta-1)$. 
	Then $Rn(F)=\{(t,q)\mid q=3t, t\in [0,\frac{1}{3}]\}$ is a continuum. The mechanism $F$ is monotone and continuous. Further, $F$ is individually rational and not strategy-proof. 
	To see that $F$ is individually rational, note that $u(F(\theta);\theta)=\theta(\theta-1)-(\frac{1}{3}\theta-\frac{1}{3})=\theta(\theta-1)-\frac{1}{3}(\theta-1)=(\theta-1)(\theta-\frac{1}{3})\geq 0$ since $\theta\geq 1$.
	To see that $F$ is not strategy-proof note that the slope of indifference curves of any
	preference in the domain of $F$ lies in $[ \frac{1}{2},1]$. This slope is smaller than the slope of the line segment $q=3t$. Thus the maximizations of utility for all preferences in the domain occur at $(t,q)=(\frac{1}{3},1)$. Now $V^{F}(\theta)=u(F(\theta);\theta)=(\theta-1)(\theta-\frac{1}{3})$ is continuous. Further, $\frac{dV(\theta)}{d\theta}=(\theta -\frac{1}{3})+(\theta- 1)\neq (\theta-1)=q(\theta)$. Also $\frac{d^{2}V(\theta)}{d\theta^{2}}=2>0$. 
	Thus the indirect utility function is convex. 
	The crucial point here is that the derivative of the indirect utility function
	is not equal to the allocation probability. Therefore this example does not satisfy the integral condition discussed in \citep{myer}. Our analysis shows that situations in which the range of a mechanism is finite we do not need to assume the integral condition to characterize strategy-proof mechanisms.$\square$         
	\label{ex:not_sp}
\end{example}

\begin{remark}\rm Example \ref{ex:not_sp} is not in conflict with the standard mechanism design approach where given a monotone allocation rule, i.e., monotone $q$, a payment rule $t$ is constructed so that $F=(t,q)$ is a strategy-proof mechanism. Example \ref{ex:not_sp} is saying that an arbitrary monotone mechanism is not strategy-proof, there should be some relationship between the allocation and payment rule for the mechanism to be strategy-proof. One such well known relationship is given by the revenue equivalence equation for the preferences in Example \ref{ex:qlin}.  
\end{remark}	

\subsection{\bf Optimal Mechanisms when $Rn(F)$ is Finite}
\label{sec:opt}

We study optimal mechanisms for intervals $[\underline{R},\overline{R}]$ as a subspace of $\mathcal{R}^{cms}$ in the order topology. In the proof of Theorem \ref{thm:implies_strtagey_proof_finite_range}, Lemma \ref{lemma:mon} and Lemma \ref{lemma:cont_correspondence} we have made observations that these results hold for 
closed intervals as well.   
Let the Borel sigma-algebra on $[\underline{R},\overline{R}]$ generated by the subspace order topology    
be denoted by ${\cal B}$. Let $\mu$ denote a probability measure on ${\cal B}$. Thus, $([\underline{R},\overline{R}],{\cal B}, \mu)$ is a probability space. We assume $\mu $ to be $(i)$ ({\bf non-trivial}) for any non-empty and non-singleton interval $A\subseteq [\underline{R},\overline{R}]$ that is  $\mu(A)>0$ ; and $(ii)$ ({\bf continuous}) for any $A\in {\cal B}$ with $\mu(A)>0$ and any $c\in \Re$ with $0<c<\mu(A)$, there exists $B\in {\cal B}$ with $B\subseteq A$ and $\mu(B)=c$. Continuity of $\mu$ implies $\mu$ is non-atomic, in particular $\mu(\{R\})=0$ for all $R\in [\underline{R},\overline{R}]$. We assume the real line $\Re$ to be endowed with the standard Borel
sigma algebra obtained from the Euclidean metric space. We denote this sigma-algebra by $B(\Re)$.
We show that if $F$ is strategy-proof, then the component functions $t, q $ are ${\cal B}/B(\Re)$ measurable.

\begin{lemma}\rm Let $F:[\underline{R},\overline{R}]\rightarrow Rn(F)$ be monotone and 
	$Rn(F)$ be finite. Then the component functions of $F$, $t,q$ are $\mathcal{B}/B(\Re)$measurable.      
	\label{lemma:measurable}
\end{lemma}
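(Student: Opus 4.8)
The plan is to exploit that, because $Rn(F)$ is finite and ordered, each component function is a step function whose level sets are order-intervals, and order-intervals are Borel in the order topology. First I would enumerate the range: since $Rn(F)$ is finite and (as the range of a monotone mechanism) ordered, write $Rn(F)=\{z_1,\dots,z_k\}$ with $z_1<z_2<\cdots<z_k$, where $z_j=(t_j,q_j)$. Distinct diagonal bundles have distinct coordinates, so $t_1<\cdots<t_k$ and $q_1<\cdots<q_k$. A real-valued function taking finitely many values is $\mathcal{B}/B(\Re)$-measurable if and only if each of its level sets lies in $\mathcal{B}$; since $t^{-1}(\{t_j\})=q^{-1}(\{q_j\})=F^{-1}(\{z_j\})=:A_j$ and the preimage of any other Borel set is a finite union of the $A_j$, it suffices to prove $A_j\in\mathcal{B}$ for every $j$.

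Next I would show each $A_j$ is order-convex. Suppose $R,R'\in A_j$ and $R\prec S\prec R'$. Monotonicity of $F$ (Definition~\ref{defn:mon}) gives $F(R)\le F(S)\le F(R')$, i.e. $z_j\le F(S)\le z_j$; since $\le$ is antisymmetric on $\mathbb{Z}$ we get $F(S)=z_j$, so $S\in A_j$. Hence $A_j$ is an order-convex subset of the closed interval $[\underline{R},\overline{R}]$.

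Then I would identify order-convex sets as Borel intervals. Because $[\underline{R},\overline{R}]$ has the least upper bound property and is bounded, $a_j=\inf A_j$ and $b_j=\sup A_j$ exist in $[\underline{R},\overline{R}]$. A short argument using order-convexity shows $]a_j,b_j[\,\subseteq A_j\subseteq[a_j,b_j]$: any $S$ with $a_j<S<b_j$ is neither a lower bound nor an upper bound of $A_j$, so it lies strictly between two elements of $A_j$ and therefore belongs to $A_j$ by convexity. Consequently $A_j$ is $]a_j,b_j[$ together with at most the two endpoints $a_j,b_j$ (with the degenerate cases $a_j=b_j$ giving a singleton or the empty set). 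On the linear continuum $[\underline{R},\overline{R}]$ the open interval $]a_j,b_j[$ is open, and singletons are closed since the order topology is Hausdorff; as the subspace topology on the convex set $[\underline{R},\overline{R}]$ coincides with its order topology, all of these sets lie in $\mathcal{B}$, whence $A_j\in\mathcal{B}$.

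Assembling these, every level set of $t$ and of $q$ equals some $A_j$, so both $t$ and $q$ are $\mathcal{B}/B(\Re)$-measurable. The hard part will be the third step: verifying that an order-convex set is sandwiched between the open and closed intervals determined by its infimum and supremum, and that it differs from one of them only by Borel singletons. This is where the least upper bound property of $[\underline{R},\overline{R}]$ recorded earlier, together with the coincidence of the subspace and order topologies on a convex set, is essential; the remaining reductions are routine.
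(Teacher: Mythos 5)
Your proposal is correct and follows essentially the same route as the paper's proof: the paper likewise observes that each level set $t^{-1}(\{t^i\})$ is an interval in the ordered domain (hence in $\mathcal{B}$) and that the preimage of any Borel set is a finite union of such level sets. Your write-up simply fills in the details the paper leaves implicit, namely why monotonicity makes each level set order-convex and why order-convex sets, being open intervals plus at most their endpoints, are Borel in the subspace order topology.
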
	

\begin{proof}  See Appendix $2$.          	
\end{proof}

\noindent Now we can define expected revenue of the seller.

\begin{defn}\rm Let $[\underline{R},\overline{R}]$ and $\mu$ be given. Let $Rn(F)$ be finite. The expected revenue form $F$ is $\int_{[\underline{R},\overline{R}]}t(R)d\mu(R)$.
	We denote the expected revenue from $F$ by $E[F]$. 
	Since we fix a probability measure we do not mention $\mu$.       	      
\end{defn}

\noindent The important aspect of the optimal mechanism is the constraint set. Thus without loss of generality we shall assume that revenue of the seller is given by $t$.   
We define optimal mechanism next. 

\begin{defn}\rm Let $[\underline{R},\overline{R}]$ and $\mu$ be given. Let 
	\[\mathcal{F}=\{F: [\underline{R},\overline{R}]\rightarrow \mathbb{Z}\mid F~\text {is strategy-proof, indvidually rational, has finite}~Rn(F)\}.\]
	Then $F^{*}\in \mathcal{F}$ is {\bf optimal} if and only if for all $G\in \mathcal{F}$, $E[F^{*}]\geq E[G]$. 
	
	\label{defn:optimal} 
\end{defn}

\noindent Throughout our discussion of optimal mechanisms we assume $\mu$ to be continuous and non-trivial. 
The set $\mathcal{F}$ is a subspace of $\mathbb{Z}^{[\underline{R},\overline{R}]}$. Instead of looking for an optimal mechanism in $\mathbb{Z}^{[\underline{R},\overline{R}]}$ 
we look for an optimal mechanism in $\mathcal{F}_{l}=\{F_{l}\mid F_{l}\in \mathcal{F}, \#Rn(F)\leq l\}$. By strategy-proofness the bundles in the range of any $F$ are ordered, and since $\#Rn(F)\leq l$  we can embed each $F_{l}$ in $\mathbb{Z}^{l}$.
The the problem of finding an optimal mechanism in $\mathcal{F}_{l}$ is computationally easier compared with the problem of finding one in $\mathbb{Z}^{[\underline{R},\overline{R}]}$. We show that the optimal mechanism in each $\mathcal{F}_{l}$ exists. To show the existence, we consider an optimization program that naturally follows from the geometry of strategy-proofness due to the single-crossing property. The constraint set of the optimization program contains elements from the topological space $[\underline{R}, \overline{R}]$. However, we believe that this may not be a major problem given that computations on abstract topological spaces is now a very active area of research and the optimization program should be solvable. After obtaining an optimal solution, analytical or approximate, we can compare the revenues from the optimal mechanism $F_{l}^{*}$ in $\mathcal{F}_{l}$ for various $l$s. If for large enough sample of values of $l$s the revenue does not change much, i.e., lies in acceptable bound then we may have an approximately optimal mechanism in $\mathcal{F}$. As an example we solve 
the optimal mechanism for quasilinear preferences by assuming monotone hazard rate. It turns out that in this problem 
for every $l$, $F_{l}^*$ can have at most two distinct bundles in the range, i.e., the optimal mechanism 
for every $l$ is deterministic. That is, in this particular case the sequence of optimal mechanism $\{F_{l}^{*}\}_{l=2}^{\infty}$ is a constant and thus converges immediately. In general, what happens in a specific problem  depends on the specificities of the  problem. For some problems we may not even obtain an analytical solution. Our approach provides an optimization program which is not specific to any particular problem. For example in \citep{myer} the payment equation obtained due to revenue equivalence, latter is due to strategy-proofness for an one agent problem, is replaced in the expected revenue formula; a solution to the reduced problem entails an optimal mechanism and thus shows the existence as well.            
Substituting the payment function obtained from the 
revenue equivalence equation in the expected revenue function may not  make the problem computationally easy. 
For example if there is a discrepancy between the way the buyer evaluates payment and the way seller does computational difficulties may arise.  
Suppose the buyer's utility function is $u(t,q;\theta)=\theta q-t^2$, where $q$ denotes probability of win, but the seller's revenue is $t$.           
\citep{Tian} make a note about verifying the regularity conditions required for a general envelope theorem.    
We show the existence of an optimal mechanism without using revenue equivalence. 
Given the problem of finding the optimal mechanism with a fixed maximum number of elements in the range we may ask three questions: (1) what optimization problem to solve, (2) how to solve the optimization problem and (3) what are the solutions to the optimization problem. This paper addresses the first question. How to solve depends on the optimization techniques and also depend on whether the preferences under consideration have a representation in terms of a parametric class or not. 
What exactly the solutions are certainly depend on the specificity of the problem at hand.  
Now we proceed to study the optimal mechanism in $\mathcal{F}_{l}$.

\begin{theorem}\rm Let $\mu$ be non-trivial and continuous. For every non negative integer $l$, the optimal mechanism $F_{l}^*$ exists where  $F_{l}^*\in \mathcal{F}_{l}=\{F_l\mid F_l\in \mathcal{F}, \#Rn(F)\leq l\}$ and $\mathcal{F}$ is as defined in Definition \ref{defn:optimal}.
	\label{thm:optimal}
\end{theorem}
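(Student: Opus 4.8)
The plan is to exploit the ``geometry of strategy-proofness'' established after Theorem \ref{thm:implies_strtagey_proof_finite_range}: every strategy-proof, individually rational mechanism with $\#Rn(F)\le l$ is completely encoded by an ordered tuple consisting of its range bundles $a_1<\cdots<a_k$ (with $k\le l$) together with the transition preferences $R_1\prec\cdots\prec R_{k-1}$ at which consecutive bundles are indifferent, $a_i\,I_{R_i}\,a_{i+1}$. I would first argue that this encoding lives in a compact space, then that the revenue functional is continuous on it, and finally invoke the Weierstrass extreme value theorem. To make the space compact I confine the bundles: since $q\in[0,1]$ is already bounded, the only possible unboundedness is in the payment $t$, but individual rationality forces every range bundle $(t,q)$ to satisfy $(t,q)\,R\,(0,0)$ for the preference $R$ selecting it, hence to lie weakly below the indifference curve of $\overline{R}$ through the origin. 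As the maximal willingness to pay $t_{\max}$ (the $t$-coordinate where this curve meets $q=1$) is finite, every admissible bundle lies in the compact rectangle $K=[0,t_{\max}]\times[0,1]\subset\mathbb{Z}$. Combined with the compactness of $[\underline{R},\overline{R}]$ in the order topology, the set $\mathcal{G}_l$ of all admissible geometries embeds into the compact product $K^{l}\times[\underline{R},\overline{R}]^{l-1}$ (padding with repeated bundles to reach length $l$).

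Next I would show that $\mathcal{G}_l$ is closed, hence compact. The ordering constraints $R_1\precsim\cdots\precsim R_{k-1}$ are closed in the metrizable order topology. The delicate constraints are the indifference conditions $a_i\,I_{R_i}\,a_{i+1}$ together with diagonality $a_i\le a_{i+1}$. Here continuity of the CM preferences is the key tool: if $(a_i^n,R_i^n,a_{i+1}^n)\to(a_i,R_i,a_{i+1})$ with $a_i^n\,I_{R_i^n}\,a_{i+1}^n$, the indifference passes to the limit so that $a_i\,I_{R_i}\,a_{i+1}$. The only way diagonality could fail in a limit is for two bundles to share one coordinate while differing in the other, but then the limiting indifference would contradict money-monotonicity or $q$-monotonicity. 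Hence the limit remains admissible, the sole admissible degeneracy being the merging $a_i=a_{i+1}$, which simply represents a mechanism with fewer distinct bundles and still lies in $\mathcal{F}_l$.

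I would then show the revenue functional is continuous on $\mathcal{G}_l$. Writing $E=\sum_{i=1}^{k}t^i\,\mu(J_i)$ with $J_1=[\underline{R},R_1]$, $J_i=\,]R_{i-1},R_i]$ for $1<i<k$, and $J_k=\,]R_{k-1},\overline{R}]$, continuity of each payment coordinate $t^i$ in its bundle is immediate, and the crucial point is that $R\mapsto\mu([\underline{R},R])$ is continuous. This is exactly where continuity (hence non-atomicity) of $\mu$ enters: a monotone distribution function with no atoms is continuous, and since the order topology is metrizable this makes each $\mu(J_i)$ continuous in the transition preferences, while the boundary points are irrelevant because $\mu(\{R\})=0$. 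A merging $a_i=a_{i+1}$ is harmless, since the two adjacent measure-weighted terms combine continuously.

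Finally, applying Weierstrass to the continuous $E$ on the compact $\mathcal{G}_l$ yields a maximizing geometry $g^{\ast}$. Reconstructing the mechanism $F_l^{\ast}$ from $g^{\ast}$ in the canonical way displayed after Theorem \ref{thm:implies_strtagey_proof_finite_range}, $F_l^{\ast}$ is monotone with continuous $V^{F_l^{\ast}}$ and a finite range of size at most $l$, hence strategy-proof by that theorem; individual rationality holds since its bundles lie in $K$. Thus $F_l^{\ast}\in\mathcal{F}_l$ attains the supremum of expected revenue and is optimal (for $l\ge 1$; the case $l=0$ is vacuous). I expect the main obstacle to be the compactness argument of the second step, specifically ruling out non-diagonal limit configurations and verifying that the only degeneracies are benign bundle-mergings; the continuity of $\mu$ on intervals in the third step is the other place where the hypotheses on $\mu$ are genuinely used.
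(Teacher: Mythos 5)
Your proposal follows essentially the same route as the paper's own proof: encode each strategy-proof, individually rational finite-range mechanism by its geometry (ordered range bundles together with the transition preferences at which consecutive bundles are indifferent), bound payments by the willingness to pay $\overline{T}$ of $\overline{R}$ at the origin, show the set of admissible geometries is compact and the expected revenue continuous on it, invoke Weierstrass, and reconstruct the optimal mechanism via Theorem \ref{thm:implies_strtagey_proof_finite_range}. This is exactly the paper's Equation (\ref{eqn_optimization_problem}) together with Lemma \ref{lemma:soln_exists}, including your handling of degenerate configurations and your observation that continuity of $\mu$ makes $R\mapsto \mu([\underline{R},R])$ continuous.

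There is, however, one step where your justification would not hold up, and it is precisely where the paper does its real technical work. You claim that if $(a_i^n,R_i^n,a_{i+1}^n)\to(a_i,R_i,a_{i+1})$ with $a_i^n\,I_{R_i^n}\,a_{i+1}^n$, then ``continuity of the CM preferences'' lets the indifference pass to the limit. Continuity in Definition \ref{defn:CM} is closedness of $UC(R,z)$ and $LC(R,z)$ for each \emph{fixed} $R$; it says nothing about sequences along which the preference itself varies, so it cannot by itself deliver the joint closedness of the set $\{(x,R,y)\mid x\,I_R\,y\}$ in $\mathbb{Z}\times\mathcal{R}^{cms}\times\mathbb{Z}$ that your compactness argument needs. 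That joint closedness is the content of the paper's Lemma \ref{lemma:preference_conv}, and its proof is substantive: one represents each preference by an order-preserving function $f_R$ (identifying each indifference class with the bundle $(t,1)$ it contains), shows via richness and the single-crossing property that $R^n\to R$ in the order topology implies $f_{R^n}\to f_R$ pointwise, upgrades this to uniform convergence on compacta by a Dini-type argument for monotone sequences, and only then passes to the limit along the moving bundles. Without some argument of this kind linking the order topology on $\mathcal{R}^{cms}$ to the behavior of the preferences on $\mathbb{Z}$, the closedness of your set $\mathcal{G}_l$ — and hence its compactness — remains unproven; everything else in your outline is sound and coincides with the paper's proof.
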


\begin{proof} 
	By individual rationality $F_l(\underline{R})R(0,0)$. Further, $F_l$ such that $F_l(R)=(0,0)$ for all $R$ is not optimal.    
	We note that optimal mechanism will always contain at least two elements.  
	If $Rn(F_l)=\{(t^1,q^1)\mid t^1>0\}$, i.e., $\#Rn(F_l)=1$, then by individual rationality we can set $F(\underline{R})=(0,q^{0})$
	where $(0,q^{0})\underline{I}(t^1,q^1)$. Since $\mu$ is continuous such readjustment does not change the expected revenue. If $Rn(F_l)=\{(0,q^1)\}$, then let $F_l(\overline{R})=(t^1,1)$ where $(0,q^1)\overline{R} (t^1,1)$. Again by continuity of $\mu$ such readjustment does not change the revenue. Now consider the following optimization problem.

	\begin{equation}
	\max_{t^{k},q^{k},R^{k}, k=0,\ldots,l-1} \sum_{k=0}^{l-1}~~t^{k}\mu([R^{k},R^{k+1}])
	\label{eqn_optimization_problem}
	\end{equation}

	\noindent s.t.
	$R^{0}= \underline{R}, R^{l}=\overline{R}$, $(t^0,q^0)R^{0}(0,0)$, 
	$(t^{k},q^{k})I^{k}(t^{k-1},q^{k-1}), k=1\ldots,l-1$
	$R^{k-1}\precsim R^{k}$  for $k=1,\ldots,l$. 
	$0\leq t^{0}, t^{k-1}\leq t^{k}\leq  \overline{T}$ and $0\leq q^{k-1}\leq q^{k}\leq 1$ for $k=1,\ldots,l-1$,  
	\label{lemma:sp_ir_1}
	
	\medskip

	\noindent where $\overline{T}$ is such that $(0,0)\overline{R}(\overline{T},1)$.  We claim a solution to this maximization problem exists and any solution can be ascribed to a strategy-proof and individually rational mechanism and the mechanism is an optimal mechanism $F_{l}^{*}$. To begin with note that the solution can have at most $l$ distinct bundles, at most $l-1$ preferences other than $\underline{R}$ and $\overline{R}$.
	The preferences $R^{k}$s are the analogs of the special preferences in a strategy-proof mechanism. If $l=3$, then we recall the special preferences $R^1$ and $R^2$ in Figure $1$. However there is a caveat,  
	we note that the manner in which the optimization problem is written it does not tell us anything about a mechanism. 
	We try to clarify this point. Suppose $l=5$, all are distinct and let $R^1=\ldots=R^{l-1}=R^*$. Then all the $5$ bundles fall on the same indifference curve of the preference $R^*$. Such a situation does not correspond to a strategy-proof mechanism. By the single-crossing property a strategy-proof mechanism requires $F(R^{*})\leq F(R)$ for all $R^*\precsim R$ and         
	$F(R)\leq F(R^{*})$ for all $R\precsim R^*$. Since all the bundles lie on just one indifference curve of $R^*$ 
	at the most three bundles can be assigned as the range of a strategy-proof mechanism. Thus two bundles will be left. 
	However this does not create any problem, we can just throw away two bundles and get a three bundle strategy-proof mechanism. 
	That is, we have a mechanism where $F(R)=(t^0,q^0)$ if $R\prec R^*, F(R^*)=(t^1,q^1), F(R)=(t^4,q^4)$ for $R^*\prec R$. 
	Then $R^*$ is the special preference for this three bundle mechanism. Since $\mu$ is continuous expected revenue from this mechanism is $t^0 \mu  ([\underline{R}, R^*[) + t^4 \mu ([R^*, \overline{R} ])$. This mechanism is in $\mathcal{F}_{5}$. 
	
	Thus at most for $l$ distinct bundles the objective function in the optimization problem is the expected revenue for a mechanism $F_l$ described by the following procedure: $F_l(R^{0})=(t^0,q^{0})$, $F_l(R^{1})=(t^{1},q^{1}),\ldots, F_l(\overline{R})=(t^{l-1},q^{l-1})$. 
	Further, $F_l(R)=(t^0,q^{0})$ for all $R\in [\underline{R},R^{1}[$,  $F_l(R)=(t^{1},q^{1})$ for all $R\in [R^{1},R^{2}[$, $F_l(R)=(t^{2},q^{2})$ for all $R\in [R^{2},R^{3}[\ldots$
	and $F_l(R)=(t^{l-1},q^{l-1})$ for all $R\in [R^{l-1},\overline{R}]$.
	By the single-crossing property $F_l$ is strategy-proof and individually rational. 
	In other words, if we solve the maximization problem in Equation (\ref{eqn_optimization_problem}) subject to the constraints, then the maximum obtained can indeed be ascribed to a strategy-proof and individually rational mechanism. Since a strategy-proof mechanism with range $l$ or less is completely determined by the special preference $R^1,\ldots, R^{l-1}$ the sum in Equation (\ref{eqn_optimization_problem}) corresponds is the expected revenue from the mechanism. 
	Thus instead of searching in the space of strategy-proof mechanisms we search in the $2l$ dimensional space of bundles, i.e., $t$s and $q$, and $l-1$ dimensional space of preferences, i.e., $l-1$ special preferences, for an optimal mechanism and that is enough, thanks to the single-crossing property of the CM preferences.           
	Thus to compete the proof of the theorem we
	prove the following lemma. 
	
	\begin{lemma}[{\bf Optimal}]\rm Let $\mu$ be non-trivial and continuous. The optimization problem in Equation (\ref{eqn_optimization_problem}) has a solution.
		\label{lemma:soln_exists}
	\end{lemma}
	
	\begin{proof} See Appendix $2$.

	\end{proof}

\end{proof}

\begin{remark}\rm 
	Note that  $M:\{(S,R)\mid S\precsim R, S, R\in [\underline{R},\overline{R}] \}\rightarrow [0,1]$ given by 
	$M(S,R)=\mu ([S,R])$ is a well defined function because $\mu$ is well defined for every $[S,R]$. The latter is because we have the Borel sigma algebra on $[\underline{R},\overline{R}]$ due to the order topology. The domain of $M$ is a bounded half space in $[\underline{R},\overline{R}]\times [\underline{R},\overline{R}]$. Since the preferences do not come from a vector space, the objective function is not a linear functional.
		We have not assumed any topology on the set of mechanisms. We look for a maximum of a function where the constraint set has a certain 
	geometric structure. These structure can be ascribed to a strategy-proof mechanism by the single-crossing property. Hence we do not need any topological assumption on the space of mechanisms to find an optimal mechanisms. As a proof of Lemma 
	\ref{lemma:soln_exists} we show that the objective function is continuous and the constraint set is compact.    
	Also note that $E(F_{l}^{*})\leq \overline{T}$. The constraint set in Theorem \ref{thm:optimal} does not include individual rationality constraints. This is because by the single-crossing property it is guaranteed by $(t^0,q^0)R^0 (0,0), (t^k,q^k)I^k (t^{k-1},q^{k-1})$ and $R^{k-1}\precsim R^k$. They can be included without any further complication since the proof that establishes Lemma \ref{lemma:preference_conv} also establishes that 
	if $\lim_{n\rightarrow \infty}(t^n,q^n)=(t,q), R^n\rightarrow R$, then 
	$(t^n,q^n)R^n(0,0)$ implies $(t,q)R(0,0)$. To show the existence of an optimal mechanism the non inclusion of these constraints do not matter.

\end{remark}

\subsubsection{\bf Optimal Mechanism with quasilinear preferences}  
\label{subsec:qlin}            

The preferences in Example \ref{ex:qlin} are given by $\theta q-t$. Let the seller receives $t$. 
Let $\theta\in [\underline{\theta},\overline{\theta}]$. Also let $\underline{\theta}>0$ and $\overline{\theta}<\infty$. 
We can identify the preference $\theta q-t$ by the pay-off relevant parameter $\theta$.
That is, the order on the preferences follow the natural order in $[\underline{\theta},\overline{\theta}]$.         
Further, let $\theta$ follow the distribution $\Gamma$, and $\gamma$ is the continuous density. In the next proposition we show that 
if $\Gamma$ satisfies increasing hazard rate, then the optimal mechanism for every 
$l\geq 2$
is deterministic.
A deterministic mechanism has two bundles in its range, one is of the kind 
$(t,1), t>0$ and the other is $(0,0)$. 
In Proposition \ref{prop:qlin1} we assume that $\Gamma$ admits increasing hazard rate.    
We keep the proof of Proposition \ref{prop:qlin1}
in the main text since the proof brings out the interaction between increasing hazard rate and slopes 
of indifference curves which entails the deterministic mechanism. Proposition \ref{prop:qlin1} demonstrates that we can study  this very important result in mechanism design  theory by using our framework as well. Let SP stand for strategy-proof and IR  for individually rational.

\begin{prop}\rm \label{prop:qlin1}   Let $\mathcal{F}=\{F\mid F:[\underline{\theta},\overline{\theta}]\rightarrow \mathbb{Z}, ~\text{SP, IR
		has finite}~Rn(F)\}$. 
	Let $\frac{\gamma(\theta')}{1-\Gamma(\theta')}\leq \frac{\gamma(\theta'')}{1-\Gamma(\theta'')}$ if $\theta'<\theta''$. Then for all $l$, $F_{l}^{*}$ is a deterministic mechanism.
	\end{prop}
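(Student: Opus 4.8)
The plan is to specialize the optimization program in Equation (\ref{eqn_optimization_problem}) of Theorem \ref{thm:optimal} to the quasilinear domain, identifying each preference with its parameter $\theta$ and writing $\mu([\theta',\theta''])=\Gamma(\theta'')-\Gamma(\theta')$. For a candidate with thresholds $\underline{\theta}=\theta^{0}\le\theta^{1}\le\cdots\le\theta^{l-1}\le\theta^{l}=\overline{\theta}$ and bundles $(t^{k},q^{k})$, the revenue is $\sum_{k=0}^{l-1}t^{k}\,[\Gamma(\theta^{k+1})-\Gamma(\theta^{k})]$. Two constraints do the work. First, since raising $t^{0}$ raises the objective, the individual rationality constraint $(t^{0},q^{0})R^{0}(0,0)$ binds, so $t^{0}=\underline{\theta}q^{0}$. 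Second, the indifference constraints $(t^{k},q^{k})I^{k}(t^{k-1},q^{k-1})$ read $t^{k}-t^{k-1}=\theta^{k}(q^{k}-q^{k-1})$; the multiplier $\theta^{k}$ is exactly the slope of the indifference curve of $R^{k}$, which is where the interaction with slopes enters.

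Next I would eliminate the payments by an Abel (summation-by-parts) rearrangement of $\sum_{k}t^{k}[\Gamma(\theta^{k+1})-\Gamma(\theta^{k})]$. Telescoping the payment differences through the indifference constraints and using $\Gamma(\underline{\theta})=0$ and $\Gamma(\overline{\theta})=1$, the objective collapses to
\[
E[F_{l}]=\sum_{k=0}^{l-1}\varphi(\theta^{k})\,(q^{k}-q^{k-1}),\qquad \varphi(\theta):=\theta\,(1-\Gamma(\theta)),
\]
with the convention $q^{-1}:=0$, so that the $k=0$ term contributes $\varphi(\underline{\theta})q^{0}=\underline{\theta}q^{0}$, matching the binding IR term. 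The monotonicity constraints give $q^{k}-q^{k-1}\ge 0$ and $\sum_{k=0}^{l-1}(q^{k}-q^{k-1})=q^{l-1}\le 1$, so $E[F_{l}]$ is a nonnegative combination of the numbers $\varphi(\theta^{k})$ with weights summing to at most one. Hence $E[F_{l}]\le \max_{\theta\in[\underline{\theta},\overline{\theta}]}\varphi(\theta)$, with equality only if the full unit weight is placed on indices $k$ with $\varphi(\theta^{k})=\max\varphi$ and $q^{l-1}=1$.

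Finally I would invoke the increasing hazard rate. Since $\varphi'(\theta)=(1-\Gamma(\theta))-\theta\gamma(\theta)=(1-\Gamma(\theta))\bigl(1-\theta\,\tfrac{\gamma(\theta)}{1-\Gamma(\theta)}\bigr)$ and $\theta\mapsto\theta\,\tfrac{\gamma(\theta)}{1-\Gamma(\theta)}$ is a product of nonnegative increasing functions, it is increasing, so $\varphi'$ changes sign at most once, from positive to negative; thus $\varphi$ is single-peaked with a unique maximizer $\theta^{*}$. Consequently the bound $E[F_{l}]\le\varphi(\theta^{*})$ is attained exactly by the mechanism that concentrates all weight at $\theta^{*}$: it assigns $q^{k}=0$, hence $(t^{k},q^{k})=(0,0)$, to the thresholds below $\theta^{*}$ and a single served bundle $(\theta^{*},1)$ for $\theta\ge\theta^{*}$. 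This is a two-bundle deterministic mechanism lying in $\mathcal{F}_{l}$ for every $l\ge 2$, which proves the claim. I expect the main obstacle to be the bookkeeping in the summation-by-parts step, namely keeping the boundary terms and the binding IR constraint consistent so the revenue genuinely reduces to $\sum_{k}\varphi(\theta^{k})(q^{k}-q^{k-1})$; once that reduction is in hand, single-peakedness of $\varphi$ from the hazard-rate condition forces the optimum to concentrate at $\theta^{*}$, whereas without that condition $\varphi$ could have several peaks and the step pinning down a unique deterministic optimum would fail.
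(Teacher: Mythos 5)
Your proof is correct, but it follows a genuinely different route from the paper's. The paper argues by contradiction through the first-order (Karush--Kuhn--Tucker) conditions of the finite program: it forms the Lagrangian with the indifference constraints, derives $\frac{\gamma(\theta^{k})}{1-\Gamma(\theta^{k})}=\frac{q^{k}-q^{k-1}}{t^{k}-t^{k-1}}=\frac{1}{\theta^{k}}$, and observes that single-crossing forces the slopes $1/\theta^{k}$ to \emph{decrease} in $k$ while the monotone hazard rate forces the left-hand sides to \emph{increase}; this route leans on the existence result (Lemma \ref{lemma:soln_exists}) and on first-order conditions being necessary at the optimum. You instead eliminate payments globally: binding IR plus telescoping of $t^{k}-t^{k-1}=\theta^{k}(q^{k}-q^{k-1})$ collapse the revenue to $\sum_{k}\varphi(\theta^{k})(q^{k}-q^{k-1})$ with $\varphi(\theta)=\theta(1-\Gamma(\theta))$ (your summation-by-parts is right, using $\Gamma(\underline{\theta})=0$, $\Gamma(\overline{\theta})=1$), after which the conclusion is a pointwise bound plus single-peakedness of $\varphi$. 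What your route buys: it is global rather than local, so it needs neither a separate existence argument nor any constraint qualification, and it delivers the optimal value $\max_{\theta}\theta(1-\Gamma(\theta))$ and the cutoff $\theta^{*}$ in one stroke. What it costs: $\varphi$ is exactly Myerson's revenue function, so in substance your argument is the discrete virtual-valuation/revenue-equivalence computation that the paper deliberately avoids (``We do not use revenue equivalence and virtual valuations as tools in our proofs''), whereas the paper's Lagrangian route is built to display the interaction between hazard rates and indifference-curve slopes. Two loose ends you should tighten: (i) the hazard rate is only assumed nondecreasing, so for a \emph{unique} maximizer of $\varphi$ you should note that $\theta\,\gamma(\theta)/(1-\Gamma(\theta))$ is strictly increasing wherever it is positive, hence $\varphi'$ vanishes at most once; and (ii) equality in your bound pins the mechanism down only up to assignments on measure-zero sets of types (several thresholds $\theta^{k}$ may coincide at $\theta^{*}$), which is the same caveat the paper disposes of by discarding bundles allocated at a single preference.
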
          

\begin{proof} Without loss of generality we can assume that $Rn(F)$ has 	
	bundles $(0,0)$ and $(t^{l},1)$. Since $R^{l}=\overline{R}$, by individual rationality at $F(R^{l-1})$
	consider $(t, 1)$ such that $\theta^{l-1}q^{l}-t^{l}=\theta^{l-1}-t$. 
	Then for every $R\in ]R^{l-1},R^{l}]$, 
	set $F(R)=(t,1)$, call this payment $t^{l}$. 
	Considering $(t^l,1)$ is without loss of generality since we are analyzing optimal mechanisms. 
	We fix $q$, and show that the first order conditions imply that the optimal mechanism must have at most two bundles in its range. Consider the optimization problem so that that range can have at most $l$ bundles. By Lemma \ref{lemma:soln_exists} we know that optimum solution exists.    
	Let by way of contradiction  $t^{k-1}<t^{k}$ and $q^{k-1}<q^{k}, \theta^{k-1}<\theta^{k}$.
	Further, we do not have to consider a bundle $(t,q)$ in the range such that $0<t^1<\overline{T}, 0<q<1$ such that the bundle is allocated at only for one single $R$. This is because probability of $\{R\}$ occurring is zero.           
	
	Consider the Lagrange  $L= \sum_{k=0}^{l-1}~~t^{k}[\Gamma(\theta^{k+1})-\Gamma(\theta^{k})]$
	$-\lambda_{1}[\theta^{1}q^{1}-t^{1}]-\sum_{k=2}^{l-1} \lambda_{k} [\theta^{k}q^{k}-t^{k}-\theta^{k}q^{k-1}+t^{k-1}]$. Since $F$ is optimal $t^{0}=0=q^{0},q^{l-1}=1$. Also, $\Gamma(\theta^{l})=\Gamma(\overline{\theta})=1$ and $\Gamma(\theta^{0})=\Gamma(\underline{\theta})=0$. The first order conditions with respect to $\theta^{1}$ is:         
	
	\begin{equation}
	-t^{1}\gamma(\theta^{1})-\lambda_1 q^{1}=0\\
	\implies \frac{\gamma(\theta^{1})}{-\lambda_{1}}=\frac{q^{1}}{t^{1}}. 
	\label{eqn:theta_1_myer}
	\end{equation} 
	\noindent The first order condition with respect to $\theta^k$, $k=1,\ldots,l-1$ is:   
	\begin{equation}
	t^{k-1}\gamma(\theta^{k})-t^{k}\gamma(\theta^{k})-\lambda_{k}[q^{k}-q^{k-1}]=0
	\implies \frac{\gamma(\theta^{k})}{-\lambda_{k}}=\frac{q^{k}-q^{k-1}}{t^{k}-t^{k-1}}.
	\label{eqn:theta-k_myer}
	\end{equation} 
	\noindent The first order condition with respect to $t^{k}$ for $k=1,\ldots,l-2$ is
	\begin{equation}
	[\Gamma(\theta^{k+1})-\Gamma(\theta^{k})]+\lambda_{k}-\lambda_{k+1}=0.
	\label{eqn:t_k_myer}
	\end{equation}
	\noindent The first order condition with respect to $t^{l-1}$ is
	\begin{equation}
	[1-\Gamma(\theta^{l-1})]+\lambda_{l-1}=0.
	\label{eqn:t_l-1-kmyer}
	\end{equation}

	\noindent from equations (\ref{eqn:t_k_myer}) and (\ref{eqn:t_l-1-kmyer})  $\lambda_{k}=-[1-\Gamma(\theta^{k})]$ for $k=1,\ldots, l-1$. Then from equations (\ref{eqn:theta_1_myer}) and (\ref{eqn:theta-k_myer}) we obtain 
	$\frac{\gamma(\theta^{1})}{1-\Gamma(\theta^{1})}=\frac{q^{1}-0}{t^{1}-0}$, and 
	$\frac{\gamma(\theta^{k})}{1-\Gamma(\theta^{k})}=\frac{q^{k}-q^{k-1}}{t^{k}-t^{k-1}}=\frac{1}{\theta^{k}}$ for
	$k=2,\ldots,l-1$. Thus $\theta^{k-1}<\theta^{k}$ implies 
	$\frac{q^{k}-q^{k-1}}{t^{k}-t^{k-1}}<\frac{q^{k-1}-q^{k-2}}{t^{k-1}-t^{k-2}}$. 
	Since $\frac{\gamma(\theta')}{1-\Gamma(\theta')}\leq \frac{\gamma(\theta'')}{1-\Gamma(\theta'')}$ if $\theta'<\theta''$ we reach at a contradiction. Thus, the proof follows.

\end{proof}

\noindent Since $\theta^{k}q^k-t^k=\theta^{k}q^{k-1}-t^{k-1}$, $\frac{q^{k}-q^{k-1}}{t^{k}-t^{k-1}}$ is the slope of the indifference curves of preferences that correspond to $\theta^{k}$. 
The optimal mechanism requires this slope to be equal to the hazard rate at $\theta^{k}$, i.e., 
$\frac{\gamma(\theta^{k})}{1-\Gamma(\theta^{k})}=\frac{q^{k}-q^{k-1}}{t^{k}-t^{k-1}}\equiv \frac{\text{increase in winnig probability}}{\text{increase in payment}}$

$\equiv\text{relative, i.e., relative to increase in payment, increase in winning probability}$.

\noindent The hazard rate at $\theta^{k}$ measures the conditional probability that the buyer's type fails to be below $\theta^{k}$. Thus increasing hazard rate in a sense implies that it is more likely that the buyer's type is high and not low. The single-crossing property implies that the
relative increase in the winning probability decreases as types increase. 
The proof of Proposition \ref{prop:qlin1} shows that the equality between the hazard rate and the relative increase in the winning probability holds only for a deterministic mechanism.       
The next proposition pins down the the optimal mechanism $F^{*}$.

\begin{cor}\rm Let $\frac{\gamma(\theta')}{1-\Gamma(\theta')}\leq \frac{\gamma(\theta'')}{1-\Gamma(\theta'')}$ if $\theta'<\theta''$. Then $F^{*}$ exists. Further, $F^{*}$ is defined as follows:    		
	
	$$t^{*}(\theta) = \begin{cases}
	\theta^{*}, & \text{if $\theta>\theta^{*}$;}\\
	$0$, & \text{if $\theta\leq \theta^{*}$.}
	\end{cases}$$

	$$q^{*}(\theta) = \begin{cases}
	1, & \text{if $\theta>\theta^{*}$;}\\
	$0$, & \text{if $\theta\leq \theta^{*}$.}
	\end{cases}$$

	\label{cor:qlin_opt}

	\noindent Where $t^{*}, \theta^*$ solve $\max _{t,\theta}t[1-\Gamma(\theta)]$ subject to $\theta-t=0$.  
\end{cor}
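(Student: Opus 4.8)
The plan is to leverage Proposition \ref{prop:qlin1} to reduce the entire problem to an elementary one-variable maximization. First I would recall that, under the increasing hazard rate hypothesis, Proposition \ref{prop:qlin1} already guarantees that for every $l \geq 2$ the optimal mechanism $F_l^*$ in $\mathcal{F}_l$ is deterministic, i.e.\ its range consists of the two bundles $(0,0)$ and $(t,1)$ for some $t>0$. The next step is to describe every such deterministic SP and IR mechanism by a single number. Because the preferences are $\theta q - t$, a buyer of type $\theta$ facing the two options $(0,0)$ and $(t,1)$ strictly prefers to ``buy'' exactly when $\theta - t > 0$; strategy-proofness (equivalently monotonicity of $F$) therefore forces the cutoff type $\theta^*$ to coincide with the price $t$, while individual rationality is automatically met and binds only at the threshold. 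This is precisely the constraint $\theta - t = 0$ in the statement, so a deterministic mechanism is pinned down by choosing a single threshold $\theta^* \in [\underline{\theta},\overline{\theta}]$ and setting $t=\theta^*,\,q=1$ above the threshold and $(0,0)$ weakly below it.

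With this parametrization in hand I would compute the expected revenue. Since $\mu$ is the measure induced by $\Gamma$ and assigns zero mass to singletons, the revenue of the deterministic mechanism with threshold $\theta^*$ is $t\cdot \mu(]\theta^*,\overline{\theta}]) = \theta^*\,(1-\Gamma(\theta^*))$. Hence maximizing revenue over all deterministic mechanisms is exactly the program $\max_{\theta}\,\theta(1-\Gamma(\theta))$ on $[\underline{\theta},\overline{\theta}]$, equivalently $\max_{t,\theta} t[1-\Gamma(\theta)]$ subject to $\theta - t = 0$, which is the classical monopoly-pricing problem quoted in the corollary.

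Existence then follows from a compactness argument: $\Gamma$ is continuous (it has a continuous density $\gamma$), so $\theta \mapsto \theta(1-\Gamma(\theta))$ is continuous on the compact interval $[\underline{\theta},\overline{\theta}]$ and attains its maximum at some $\theta^*$; putting $t^*=\theta^*$ produces the mechanism displayed in the statement, which lies in $\mathcal{F}_2$. Finally I would promote this from ``optimal among deterministic mechanisms'' to ``optimal in $\mathcal{F}$.'' Because $\mathcal{F}=\bigcup_{l\geq 2}\mathcal{F}_l$ and Proposition \ref{prop:qlin1} makes each $F_l^*$ deterministic, every $F_l^*$ is realized by a two-bundle mechanism, so $E[F_l^*]=E[F_2^*]$ for all $l$; the supremum of $E$ over $\mathcal{F}$ is thus the common deterministic value, attained by $F^*=F_2^*$.

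The step I expect to be the main obstacle is this last promotion from the family of $\mathcal{F}_l$-optima to a genuine optimum over all of $\mathcal{F}$: one must argue that enlarging the permitted number of bundles cannot raise revenue, which is exactly what Proposition \ref{prop:qlin1} delivers, and then that the constant sequence $\{E[F_l^*]\}_{l\geq 2}$ identifies a maximizer in the union. The remaining identification that the optimal price equals the cutoff type (the constraint $\theta - t = 0$) is routine once strategy-proofness and individual rationality are combined.
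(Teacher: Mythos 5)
Your proposal is correct and follows essentially the same route as the paper: invoke Proposition \ref{prop:qlin1} to restrict attention to deterministic mechanisms, observe that strategy-proofness and individual rationality pin such a mechanism down to a single cutoff satisfying $\theta - t = 0$, and then solve the resulting one-variable monopoly problem $\max_\theta \theta(1-\Gamma(\theta))$, whose maximum exists by continuity on the compact interval $[\underline{\theta},\overline{\theta}]$. The only difference is cosmetic: you spell out the promotion from optimality in each $\mathcal{F}_l$ to optimality in $\mathcal{F}=\bigcup_l \mathcal{F}_l$ and use Weierstrass directly, whereas the paper leaves these steps implicit and cites a constraint-qualification condition for the reduced program.
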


\begin{proof} From Proposition \ref{prop:qlin1} it us enough to look for an optimal mechanism within the class of deterministic mechanisms. Then we note that a solution to the optimal problem exists.{\footnote{The constraints satisfy the non degenerate constraint qualification, NDCQ, condition as stated in Theorem $18.5$ in \citep{Simon}.}}     
	
\end{proof}

\noindent In the next remark we argue that for distributions without monotone hazard rate as well the optimal mechanism is deterministic if the preferences are linear.    

\begin{remark}\rm The first order equations in Proposition \ref{prop:qlin1}  	
require the $k^{\text{th}}$ hazard rate to be equal to the slope of an indifference curve of the $k^{\text{th}}$ preference $\theta^k$ and the deterministic mechanism satisfies these conditions irrespective of whether the hazard rate is monotone or not. In fact we can see that only the deterministic mechanism can be a solution to this problem. Suppose $l=4$, and set $t^0=q^0=0$ by optimality. Let $(t^{1*},t^{2*},t^{3*}, q^{1*},q^{2*}, q^{3*}=1, \theta^{1*}, \theta^{2*}, \theta^{3*})$ be a solution to the optimization problem.      
Then consider 
$\alpha=\max_{t^1, q^1, \theta^1} t_1(\Gamma(\theta^{2*})-\Gamma(\theta^1))+t^{2^*}(\Gamma(\theta^{3*})-\Gamma(\theta^{2*}))+t^{3*}(1-\Gamma(\theta^{3*}))$ 
such that $\theta^1q^1-t^1=0, \theta^{2*}q^1-t^1=\theta^{2*}q^{2*}-t^{2*}, 
\theta^{2*}q^{2*}-t^{2*}=\theta^{3*}-t^{3*}$. That is, we fix all the indifference curves from $\theta^{2*}$ onward for the optimal solution $(0, t^{1*},t^{2*},t^{3*},0, q^{1*},q^{2*}, q^{3*}=1, \theta^{1*}, \theta^{2*}, \theta^{3*})$. Since $t^{1*}, \theta^{1*}, q^{1*}$ is a feasible solution for the $\alpha$ problem 
we have $\alpha=  t^{1*}(\Gamma(\theta^{2*})-\Gamma(\theta^{1*}))+t^{2*}(\Gamma(\theta^{3*})-\Gamma(\theta^{2*}))+t^{3*}(1-\Gamma(\theta^{3*}))$. Now from $\theta^1q^1-t^1=0$ we obtain
$\theta^1=\frac{t^1}{q^1}$. Thus, $\frac{\gamma(\theta^1)}{1-\Gamma(\theta^1)}=\frac{q^1}{t^1}=\frac{1}{\theta^1}$.
From $\theta^{2*}q^1-t^1=\theta^{2*}q^{2*}-t^{2*}$ we obtain
$t^1=\theta^{2*}q^1-\theta^{2*}q^{2*}+t^{2*}$.
Thus, by using all the relevant constraints we obtain the optimization problem 
$\max_{q^1,\theta^1}(\theta^{2*}q^1-\theta^{2*}q^{2*}+t^{2*})(\theta^{2*}-\frac{1-\Gamma(\theta^1)}{\gamma(\theta^1)})$. If $0<q^1<1$, then the first order condition is $(\theta^{2*}-\frac{1-\Gamma(\theta^1)}{\gamma(\theta^1)})=0$. Thus $\theta^1=\theta^{2*}$. Thus the mechanism must have at most three distinct elements. By extending this argument we see that the mechanism has to be deterministic.  In Corollary $3.1$ in \citep{Goswami2} it is shown that  if an optimal mechanism exists among the finite range mechanisms, then it must be optimal in general.   
\end{remark}	

\noindent In the next example we study the class preferences 
that satisfy positive income difference mentioned in Example \ref{ex:index}.

\begin{example}\rm Let the utility function of the buyer be given by $\theta q-t^2$ where $\theta \in [1,2]$. Since the utility function has power, in this example we use subscripts to as labels for bundles. In Example \ref{ex:index} we have seen that these preferences satisfy positive income difference. Let  $t \in [0, \sqrt{2}]$, where  $2.1-\sqrt{2}^2=0$. 
	Let $ \Gamma(\theta) = \theta - 1 $, $ l = 3$. Now, the objective function is 
	$[t_0 (\theta_1 - \theta_0) ] + [t_1 ((\theta_2 - 1) - (\theta_1 - 1))] + [t_2 (1 - (\theta_2 - 1))]$, $\theta_0=1$.  
	By optimality let $ t_0 = 0 $, $q_{0} = 0 $ $ q_{l-1} = 1$. Further, we have   $ t_0 \leq t_1 \leq t_2$, and $ q_0 \leq q_1 \leq q_2$. Since $t_0=0$ the objective function can be written as $t_1 (\theta_2 - \theta_1) + t_2(2-\theta_2)$. 
	By individual rationality, $t_1^2 - \theta_1 q_1=0$ and $t_2^2 - \theta_2 \leq 0$ since $q_2=1$.
	By optimality $q_2=1$. 
	The optimal mechanism is a solution to  
	the Lagrange: $L=t_1 (\theta_2 - \theta_1) + t_2(2-\theta_2)-\lambda_1[(\theta_2-t_{2}^2)-(\theta_2 q_1-t_1^2)]
	-\lambda_2(t_1^2- \theta_1 q_1)-\mu_2 (t_{2}^2-\theta_2)- \mu_3(\theta_1-\theta_2)-\mu_4(t_1-t_2)+\mu_5 t_1+\mu_6 t_2-\mu_7(t_1-\sqrt{2})- \mu_8(t_2-\sqrt{2})-\mu_9(\theta_1-2)-\mu_{10}(\theta_2-2)-\mu_{11}(-\theta_1+1)-\mu_{12}(-\theta_2+1)-\mu_{13}(q_1-1)+\mu_{14}q_1$. The First order conditions with respect to
	\begin{itemize}
		\item $q_1$ is $\lambda_1 \theta_2+ \lambda_2 \theta_1-\mu_{13}=0$   
		\item $\theta_1$ is $-t_1+\lambda_2q_1-\mu_3-\mu_9+\mu_{11}=0$
		\item $\theta_2$ is $t_1-t_2-\lambda_1 [1-q_1]+\mu_2+\mu_3-\mu_{10}+\mu_{12}=0$
		\item $t_1$ is $(\theta_2-\theta_1) -2\lambda_1 t_1-2\lambda_2t_1-\mu_4+\mu_5-\mu_7=0$
		\item $t_2$ is $(2-\theta_2)+2\lambda_1 t_2-2t_2\mu_2+\mu_4+\mu_6-\mu_8=0$		
	\end{itemize}
	
	\noindent If $\theta_1=\theta_2=1, t_1=0, t_2=1, q_1=0, \lambda_1=0=\lambda_2, \mu_2=\frac{1}{2}=\mu_{3}=\mu_{11}, \mu_{k}=0$ for all other $\mu_{k}$s, then they satisfy the first order conditions and satisfy all the constraints. For the second order sufficient conditions we note that the following inequality constraints bind: 
	$\theta_1- \theta_2=0$, $t_2^2-\theta_2=0$, $t_1=0$, 
	$-\theta_1+1=0$, $-\theta_2+1=0$, $q_1=0$. We call these constraints $g_1, \ldots, g_6$. 
	Now consider gradient of each $g_i$s with respect to $t_1, t_2, \theta_1, \theta_2, q_1$ respectively. We have $\nabla g_1=(0,0, 1, -1, 0)$
	$\nabla g_2=(0, 2t_2, 0, -1, 0)$, $\nabla g_3=(1, 0, 0, 0, 0)$, $\nabla g_4=(0,0, -1, 0,0)$, 
	$\nabla g_{5}=(0,0,0, -1,0 )$,
	$\nabla g_6=(0,0,0, 0, 1)$. The Jacobian matrix for the inequality matrix evaluated at the critical values is:

	$$	\begin{bmatrix}
		0 & 0 & 1 & -1 & 0 \\
		0 & 2 & 0 & -1 & 0 \\
		1 & 0 & 0 & 0 & 0 \\
		0 & 0 & -1 & 0 & 0 \\
		0 & 0 & 0 & -1 & 0 \\
		0 & 0 & 0 & 0 & 1 \\
	\end{bmatrix}$$

	\noindent This matrix has full rank which is $5$ and thus its null space consists of $(0,0,0,0,0)$. 
	Thus by Theorem $19.8$ the first order solution is a strict local maximizer. 
In fact the solution is a global maximizer. 
We show that the constraint maximization problem does not have maximize in interior $t_1$ and $q_1$. Let $(t_1^*, t_2^*, q_1^*, \theta_1^*, \theta_2^*)$ be a solution to the optimization problem such that $0<t_1^*<\sqrt{2}, 0<q_1^*<1$. Let  $\alpha=\max_{t_1, q_1, \theta_1} t_1(\theta_2^*-\theta_1)+t_2^*(2-\theta_2^*)$ such that $\theta_1q_1-t_1^{2}=0, \theta_2^*q_1-t_1^{2}=\theta_2^*-t_2^{*2}$.      
Then $\alpha= t_1^*(\theta_2^*-\theta_1^*)+t_2^*(2-\theta_2^*)$. This equality must hold because $(t_1^*, q_1^*, \theta_1^*)$ are feasible solution to the $\alpha$ constrained maximization problem. Now replacing for $\theta_1$ and $t_{1}^2$ the constraints in the 
$\alpha$ optimization problem we obtain $\max_{t_1, q_1, \theta_1} t_1(\theta_2^*-\frac{\theta_2^*q_1-\theta_2^*+t_{2}^{*2}}{q_{1}})+t_2^*(2-\theta_2^*)$. Then first order condition with respect to $t_1$ and $q_1$ are 
$\theta_2^*=0$ and $\frac{1}{q_1^2}[-\theta_2^*+t_{2}^{*2}]=0$. These conditions do not entail an interior solution.$\square${\footnote{Alternatively we could have first argued that that the optimal mechanism must be deterministic, and then we could have solved for the optimal deterministic mechanism. This approach is shorter but may not be considered very helpful in understanding the structure of the problem.}}           
 \label{ex:optimal_positive_difference}
\end{example}	

\noindent We wish to make a remark about the proof of the global maximum in Example \ref{ex:optimal_positive_difference}. 

\begin{remark}\rm If there were more than three possible distinct bundles 
	then also the trick to check if the optimal mechanism is deterministic applies. Suppose $l=4$. Then consider
		$\alpha=\max_{t_1, q_1, \theta_1} t_1(\theta_2^*-\theta_1)+t_2^*(\theta_3^*-\theta_2^*)+t_3^*(2-\theta_3^*)$ such that $\theta_1q_1-t_1^{2}=0, \theta_2^*q_1-t_1^{2}=\theta_2^*q_2^*-t_2^{*2}, 
	\theta_2^*q_2^*-t_2^{*2}=\theta_3^*-t_3^{*2}$. Thus, by the same argument as in Example \ref{ex:optimal_positive_difference} the $t_1$ cannot be interior, but then repeating the argument once again $t_2$ cannot be interior. Now suppose that the buyer's pay-off function is of the form $\theta \sqrt{q}-t^2, \theta>0$, where $\sqrt{q}$ is a probability weighing function, which puts almost $0$ weights to the probabilities close to $0$. In this case we have $\alpha=\max_{t_1, q_1, \theta_1} t_1(\theta_2^*-\theta_1)+t_2^*(\theta_3^*-\theta_2^*)+t_3^*(2-\theta_3^*)$ such that $\theta_1\sqrt{q_1}-t_1^{2}=0, \theta_2^*\sqrt{q_1}-t_1^{2}=\theta_2^*\sqrt{q_2^*}-t_2^{*2}, 
	\theta_2^*\sqrt{q_2^*}-t_2^{*2}=\theta_3^*-t_3^{*2}$. Again the same argument applies and shows that interior solution is not possible. However, things change if the utility of the buyer is $q\sqrt{\theta-t}, t\leq \theta$ i.e.,
	the buyer with valuation $\theta$ is risk averse with concave Bernoulli utility function facing the lottery $q, 1-q$ with consequences $\theta-t$ and $0$ respectively. Such risk averse pay-off function leads to non-monotonicity and thus do not belong to the CMS class. We study such risk averse preferences in \citep{Goswami1}. The observation that we wish to make here is that optimization program for non-monotone preferences maybe more complicated as we loose separability of preferences. We make a further observation about how our solution techniques are different from \citep{myer}. 
	Consider the preference $\theta q-t^2$. Now $U(\theta)=\max_{z \in [\underline{\theta}, \overline{\theta}]} \theta q(z)-t^2(z)$  is convex since it $U(\theta)$ is a maximum of a family affine functions in the true valuation $\theta$, and thus $U'(\theta)=q(\theta)$ almost everywhere. 
	Thus from strategy-proofness we have $t^{2}(\theta)=-\underline{\theta}q(\underline{\theta}) + t^{2}(\underline{\theta})+ q(\theta)\theta-\int_{\underline{\theta}}^{\theta}q(x)dx$. Thus, 
	$t(\theta)=\sqrt{-\underline{\theta}q(\underline{\theta})+t^{2}(\underline{\theta})+ q(\theta)\theta-\int_{\underline{\theta}}^{\theta}q(x)dx}$. 
	Now revenue equivalence does not hold. Consider an allocation rule such that 
	$q(\theta)=0 $ if $\theta<\theta^*$ and $q(\theta)=1 $ if $\theta^*\leq \theta$. 
	Let $t_1$ and $t_2$ be two distinct payment rules. Then, $t_1(\underline{\theta})-t_2(\underline{\theta})=\sqrt{t_{1}(\underline{\theta})^2}-\sqrt{t_{1}(\underline{\theta})^2}\neq t_1(\overline{\theta})-t_2(\overline{\theta})=\sqrt{t_{1}(\underline{\theta})^2+c}-\sqrt{t_{2}(\underline{\theta})^2+c}$, 
		where $c=\overline{\theta}-\int_{ \underline{\theta}  } ^{\overline{\theta}}q(x)dx=\overline{\theta}-[\overline{\theta}-\theta^*]=
		\theta^*$. However our point is not about revenue equivalence not holding, the point that we are trying to make is that computing expected revenue and computing the optimal mechanism by substituting the payment rule in seller's objective makes the computations harder.                
		The expected revenue is $ \int_{\underline{\theta}}^{\overline{\theta}} \sqrt{-\underline{\theta}q(\underline{\theta})+t^{2}(\underline{\theta})+ q(\theta)\theta-\int_{\underline{\theta}}^{\theta}q(x)dx}~ \gamma(\theta) d\theta$, where $\gamma(\theta)$ is the density function of the valuations. Thus, the finding optimal mechanism involves solving this integral equation subject to the individual rationality constraint
	 $t^2(\underline{\theta})\leq 0$ and $q$ is monotone. If the preferences are of the form $\theta \sqrt{q}-t^2  $, then the integral equation is $ \int_{\underline{\theta}}^{\overline{\theta}} \sqrt{-\underline{\theta}\sqrt{q(\underline{\theta})}+t^{2}(\underline{\theta})+ \theta\sqrt{q(\theta)}-\int_{\underline{\theta}}^{\theta}\sqrt{q(x)}dx}~ \gamma(\theta) d\theta$.

\end{remark}

\noindent In order to highlight the importance of the order on the preferences we make a few observations about the model with 
multidimensional parameters in Example \ref{example:two-parameter} next.

\begin{example}\rm 
Let $\theta \in [1,2], \alpha \in [\frac{1}{2},1]$. The preferences in $U$ are `smaller' than the preferences in $V$ according to the order on $U\cup V$ entailed by the single-crossing condition. Now consider a mechanism in $\mathcal{F}_{l}$. Instead of writing the mechanism in terms of preferences we can also write it in terms of the parameters $\theta$ and $\alpha$. Consider a strategy-proof mechanism that uses special preferences from both $U$ and $V$. There are two cases. One of the cases is when $2\sqrt{q}-t$ is a special preference, i.e., the common preference in $U$ and $V$. In this case the geometry entailed in terms of special preferences is clear because $2\sqrt{q}-t$ is the largest preference in $U$ according the order on preferences due to the single-crossing property. If the common preference is not a special preference, then let $\theta'\sqrt{q}-t$ be the largest special preference from $U$. Since the number of alternatives in the range is       
finite the largest special preference is well defined. Then the special preference after $\theta'\sqrt{q}-t$ is of the form $2\sqrt{q}-\alpha t$. To write the mechanism consistently let $\beta=\frac{1}{\alpha}$. 
Thus we write 
$\theta^1 \sqrt{q ^1}-\frac{1}{\beta^1} t^1=\theta^1 \sqrt{q ^0}- \frac{1}{\beta^1} t^0, \ldots ,\theta^{k+1}\sqrt{q ^{k}}-\frac{1}{\beta^{k+1}} t^k=\theta^{k+1}\sqrt{q ^{k+1}}-\frac{1}{\beta^{k+1}} t^{k+1}$ and search for values of $\theta$ and $\beta$,
$(\theta^k, \beta^k )\in U'=\{(\theta, \beta) \mid \beta =1, \theta \in [1,2]\}   \cup V'=\{(\theta, \alpha)\mid \theta=2, \beta \in [1, 2]\}$. Let $\Gamma$ be then joint distribution on $U'\cup V'$ and consider it as a subspace of the Euclidean space $\Re^{2}$, and thus we can consider the Borel sigma algebra of this subspace. 
Then the set $[\theta', \theta'']\times \{1\}$ and $\{2\}\times [\beta', \beta'']$ are well defined Borel sets. 
Then consider  joint distribution $\Gamma$ such that $\Gamma (\theta', 1)< \Gamma (\theta'', 1)$ for $\theta '<\theta''$, $\Gamma (2, \beta') < \Gamma (2, \beta'')$ for $\beta'<\beta''$, and for any $\theta \in U'$ and $\beta \in V'$, $\Gamma (\theta, 1) < \Gamma (2, \beta)$.$\square$ 

\end{example}

\noindent In the next section we make a remark about how to extend these ideas to the situation when $q$ denotes quality.  

\subsubsection{\bf Extension to the situation where $q$ refers to quality}
\label{sec:quality}
Let $q$ denote quality, and $q\in [\underline{q},\overline{q}]$. 
That is,  let $[\underline{q},\overline{q}]$ be a linear continuum and endowed with the corresponding order topology. With the product topology on $\Re_{+} \times [\underline{q},\overline{q}] $, CMS preferences are well defined on $[\underline{q},\overline{q}]\times \Re_{+}$.          
 Let us further assume that $[\underline{q},\overline{q}]$ is homeomorphic to a closed interval in the real line, say $[0,1]$, i.e., $[\underline{q},\overline{q}]$ is an one dimensional topological manifold. 
 As an example we  may consider $q$ to be an index of quality. In such a situation a representation a preference by a formula, for example $\theta q-t$, may not be well defined since $q$ is not a real number.
 The point is that, in case of discrete quality say high and low, high can be labeled as $1$ and low can be labeled as $0$. Here $1-0$ or $1+0$ may not have any meaning. 
 If quality lies in $[\underline{q},\overline{q}]$, then a homeomorphism only guarantees that $q$ is drawn from an one dimensional manifold, and this does not imply that we can add or multiply $q$s like real numbers. 
 If quality comes from a continuum  $[\underline{q}, \overline{q}]$, then considering it as coming from $[0,1]$ is just a relabeling of quality by using real numbers from $[0,1]$. 
 Considering quality in $[0,1]$ as being drawn from an Euclidean space so that we can use standard calculus is an assumption, and such an assumption need be valid  in all circumstances.
 As a result a continuous preference in $\Re_{+} \times [\underline{q},\overline{q}] $ need not be expressed in terms of a formula.    
 Certainly, in certain situations $q$ may be considered to be cardinal, and thus a formula may be well defined and also available; a formula need not be available since in general a continuous function need not have a representation in terms of a known closed form formula. Our general analysis of strategy-proofness and existence proof of optimal mechanism do not depend on the formulae of the preferences.
 Certainly, finding an optimal mechanism without using calculus is not an easy task. 
 However, given that computations on abstract topological spaces is active area of research, it will be interesting to see how the optimization program that we have proposed can be solved, analytical or approximate, using the computational techniques on toplogical spaces. 
 
\citep{Mishra1} consider a model where $q$ comes from an ordered finite set. If we consider restrictions of CMS preferences to the situation with $q$s coming from a finite set, then our characterization of strategy-proof mechanisms hold. Guaranteeing existence of an optimal mechanism is not straightforward. In fact \citep{Mishra1} assume that an optimal mechanism exists and then state what it is. The reason guaranteeing the existence of an optimal mechanism is hard because the constraint set is not compact anymore. To see the problem consider $q^1=0<q^2<q^3=1$ be three fixed qualities of a good that can be sold, i.e., it is a supply side constraint. The optimal mechanism design question is now about how to price these qualities optimally. Further, suppose that an external regulation requires that the firm must have qualities of all three goods available. Then a strategy-proof mechanism requires $t^0<t^1<t^2$. Just to explain the point suppose $t^0=0$.
Then note $\{(t^1,t^2)\mid t^1-t^2<0\}$ is an open set, and thus guaranteeing existence of an optimal mechanism is difficult. However, as we saw earlier from the geometry of strategy-proof mechanisms we can compute the expected revenue easily: $t^1\mu([R^1,R^2])+t^2\mu( [R^2, \overline{R}])$, where $(t^0, \underline{q})I^{1}(t^1,0), (t^1,q^1)I^{2}(t^2,1)$ which maybe computed with the help of modern computational techniques and maybe large enough iterations will give an approximate value of the maximum expected revenue. By imposing more restrictions, for example lower bound on the price of the highest quality, we can `bound' the geometry of strategy-proof mechanisms and make computations may be easier on $[\underline{R},\overline{R}]$.      
In the next section we make a remark about the implications for strategy-proofness if we extend the domain.     

\subsubsection{Extending a domain beyond singe-crossing} 
\label{sec:union_single_crossing}     
In this section we make a remark about optimal mechanisms when the domain of $F$ is an union of two single-crossing domains. We assume $Rn(F)$ to be finite. We consider the  domain $D=\{u((t,q); \theta, \frac{1}{2})=\theta\sqrt{q}-\frac{1}{2}t\mid \theta\in [1,5]\}\cup \{u((t,q); \theta, \frac{1}{3})=\theta q^{\frac{1}{3}} - \frac{1}{3}t\mid \theta \in [1,5]\}$.
We can represent $D$ parametrically as $\{(\theta, \alpha)\mid \theta\in [1,5], \alpha\in \{\frac{1}{2}, \frac{1}{3}\}\}$.
We show that
$D$ is not a single-crossing domain. 
We note that slope of an indifference curve of a preference in $D$ is given by
either $\frac{\sqrt{q}}{\theta_1}$ or $\frac{q^{\frac{2}{3}}}{\theta_2}$. Consider $\theta_1=2$ and 
$\frac{\sqrt{  \frac{1}{2}   }  }{2}=\frac{  \frac{1}{2}   ^{\frac{2}{3}}}{\theta_2}$.
Thus $\theta_2=2[\frac{1}{2}]^{\frac{1}{6}}$ which lies in the interval $[1,5]$. This means that  -
that the indifference curves of the preferences denoted by the parameter vectors $(2, \frac{1}{2})$ and 
$(2[\frac{1}{2}]^{\frac{1}{6}}, \frac{1}{3})$ are tangent to each other at $q=\frac{1}{2}$. 
That is the equality $  \frac{\sqrt{q}} {2} = \frac{q^{\frac{2}{3}}}{2[\frac{1}{2}]^{\frac{1}{6}}}$ holds at $q=\frac{1}{2}$. Thus $  \frac{\sqrt{q}} {2} < \frac{q^{\frac{2}{3}}}{2[\frac{1}{2}]^{\frac{1}{6}}}$ 
if and only if $[\frac{1}{2}]^{\frac{1}{6}}<q^{\frac{1}{6}}$ and  
$  \frac{\sqrt{q}} {2} > \frac{q^{\frac{2}{3}}}{2[\frac{1}{2}]^{\frac{1}{6}}}$ 
if and only if $[\frac{1}{2}]^{\frac{1}{6}}>q^{\frac{1}{6}}$. Since $q^{\frac{1}{6}}$ is an increasing function the equality holds only at $q=\frac{1}{2}$. Thus these two preferences are not the same. In fact $(2,\frac{1}{2})$ is a Maskin Monotonic Transformation of $( 2[\frac{1}{2}]^{\frac{1}{6}}, \frac{1}{3})$ through $(t,q)$, where $q=\frac{1}{2}$. For the sake of simplicity we call a subdomain in $D$ to be {\bf a single-crossing slice (in short we call it a `slice')}. 
For example $\{\theta\sqrt{q}-\frac{1}{2}t\mid \theta \in [1,5]\}$ is slice 1
and $\{\theta q^{\frac{1}{3}}-\frac{1}{3}t\mid \theta \in [1,5]\}$ is slice $2$. Compared with slice $2$, in slice $1$ overweighting of probability is smaller and the disutility of from paying for the good is larger.      
Assume that $\theta$ and $\alpha$ are independently distributed. Let Prob$( \{\frac{1}{2}\})=$ Prob $( \{\frac{1}{3}\})=\frac{1}{2}$. 
Let $\theta$ be distributed according to $\Gamma$ with differentiable density $\gamma$ and has increasing hazard rate.  
The optimal mechanisms for the slices in $D$ are deterministic. 
From the first order conditions we obtain  $\frac{\gamma( \theta^{k} ) }  { [1-\Gamma(\theta^k)] }
=\frac{   [(q^k)^{\alpha}-(q^{k-1})^{\alpha}] }        { \alpha(  t^k-t^{k-1})  }=\frac{1}{\theta_k}, \alpha \in \{\frac{1}{2},\frac{1}{3}\}$. By arguing as before, the optimal mechanisms are deterministic. 
Now let $\Gamma(\theta)=\frac{\theta-1}{4}$. Then $\gamma(\theta)=\frac{1}{4}$. 
However note that now the optimal payment in the first slice of $D$ is  $t_1=2\theta_1$ and for the second slice it is $t_2=3\theta_2$. 
Now $\arg\max_{\theta_1\in [1,5]}2\theta_1[1-\frac{\theta_1-1}{4}]=\arg\max_{\theta_2\in [1,5]}3\theta_2[1-\frac{\theta_2-1}{4}]$. At the optimum $\theta^*=2.5$ in both the slices. 

Now consider optimal mechanism in $D$. If the mechanism is strategy-proof and individually rational in $D$, then it is strategy-proof and individually rational in each slice. Let $E[F_l]$ denote the expected revenue from a strategy-proof mechanism defined on $D$ with maximum $l$ distinct elements. Then, let $E[F_{l_1}]$ and $E[F_{l_2}]$ be the optimal expected revenue from slice $1$ and $2$ respectively. We note that $E[F_{l_1}] < E[F_{l_2}]$.  Since $E[F_l]=Prob(\{\frac{1}{2}\})E[F_{l_1}]+Prob(\{\frac{1}{3}\})E[F_{l_2}]=\frac {1}{2}E[F_{l_1}]+\frac {1}{2}E[F_{l_2}]$, $E[F_{l_1}]< \frac{1}{2}E[F_{l_1}]+\frac {1}{2}E[F_{l_2}] <   E[F_{l_2}]$.  
Since $2\theta^*=t_1<t_2=3\theta^*$, $u(F_l(2.5, \frac{1}{2}); 2.5, \frac{1}{3})>u(F_l(2.5, \frac{1}{3}); 2.5, \frac{1}{3})$. This violates strategy-proofness. That is, the optimal mechanisms in each individual slice cannot be made to a strategy-proof mechanism for $D$. Now suppose there is 
 another mechanism. However, whatever be that mechanism the expected revenue from that mechanisms cannot be more than $E[F_{l_2}]$. If $\frac{1}{2}E[F_{l_1}^*]$ and $\frac{1}{2}E[F_{l_2^*}]$ are the expected revenues from each slice of the optimal mechanism, then
 $\frac{1}{2}E[F_{l_1}^*]+\frac{1}{2}E[F_{l_2}^*]\leq \frac{1}{2}E[F_{l_1}]+\frac{1}{2}E[F_{l_2}]<E[F_{l_2}]$.
  Suppose the auctioneer has a target revenue in mind, for the sake of argument let it be equal to the expected revenue from the optimal mechanism in slice $2$. Then there is no optimal mechanism that ensures the targeted revenue.  Since $\alpha$ occurs with equal probability the average expected revenue is less than the optimal revenue from slice $2$ alone. In the absence of preferences from slice 1 the auctioneer could have earned the targeted expected revenue.    
Thus a larger behavioral spread, which in this case is $\{\frac{1}{2},\frac{1}{3}\}$ need not be better for the auctioneer.

\subsubsection{{\bf An $n-$ Buyer Environment}}
\label{sec:n_buyer} 
Let $R=(R_{1},\ldots, R_{n}) \in \Pi_{i=1}^{n}[\underline{R},\overline{R}]$
be a profile of preferences. Also let for any $i$, $R_{-i}=(R_{1},\ldots,R_{i-1},R_{i+1}, \ldots, R_n)$ i.e., $R_{-i}$ is the profile of preference with buyer $i$ removed.
Consider the joint probability space $(\times_{i=1}^{n}[\underline{R},\overline{R}], \times_{i=1}^{n}{\cal B},\times_{i=1}^{n}\mu_i)$, where $\times_{i=1}^{n}{\cal B}$ denotes the product sigma-algebra, and $\times_{i=1}^{n}\mu_i$ denotes a probability measure on the product sigma algebra such that for all $i$ and for all $R_{-i}$, $\mu_{i}(\cdot, R_{-i})$ is a probability measure on $([\underline{R},\overline{R}], {\cal B})$. 
Let $F:\Pi_{i=1}^{n}[\underline{R},\overline{R}]\rightarrow \mathbb{Z}^{n}$ be  a $n-$ buyer mechanism.  
Then $F_{i}=(T_i,Q_i)$ is the $i^{th}$ component of $F$ and $F_{i}:\times_{i=1}^{n}[\underline{R},\overline{R}]\rightarrow \mathbb{Z}$ refers to allocation of buyer $i$. 
Also for all profile $R$, $\sum_{i=1}^{n}Q_{i}(R)\leq 1$. 
We write payment and probability of win in capital letters to distinguish the 
$1-$buyer environment from the $n-$buyer environment.  
Next we define strategy-proof and individually rational mechanism. 

\begin{defn}\rm An $n-$buyer mechanism $F$ is {\bf strategy-proof} for all $i$ and all $R_i, R_{i}'$, for all $R_{-i}$ , $(T_{i}(R_i, R_{-i}), Q_{i}(R_i, R_{-i}))R_i (T_{i}(R_i', R_{-i}), Q_{i}(R_i', R_{-i}))$. The mechanism is {\bf individually-rational} if for $i$, $(T_{i}(R_i, R_{-i}), Q_{i}(R_i, R_{-i}))R_i (0,0)$.   
\end{defn}

\noindent A mechanism is strategy-proof if irrespective of other buyers' announcements, a buyer has no incentive to misreport her own preference. Let $Rn(F_i(\cdot, R_{-i}))$ be the range of the mechanism $F$ when $n-1$ buyers' preferences are fixed. This is, the set of allocations that buyer obtains when 
other buyers' preferences are fixed at $R_{-i}$.  
Often it is also called the option set of buyer $i$ at $R_{-i}$. We assume that the option set of each buyer $i$ at each $R_{-i}$ is finite. Then we have the following result immediately from the one buyer environment.

\begin{theorem}\rm Let $F$ be a $n-$buyer mechanism. Let $Rn(F_i(\cdot, R_{-i}))$ be finite for every $i$ and every $R_{-i}$. Then $F$ is strategy-proof if and only if $(i)$ $F_{i}(\cdot, R_{-i}):[\underline{R}, \overline{R}]$ is monotone and $(ii)$ for all $i$ and for all $R_{-i}$ the indirect preference correspondence $V^{F_{i}(\cdot,R_{-i})}:[\underline{R}, \overline{R}]\rightarrow \mathbb{Z}$ is continuous.   	
\end{theorem}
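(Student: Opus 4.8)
The plan is to reduce the $n$-buyer statement to the one-buyer results already established, namely Theorem \ref{thm:sp_implies} and Theorem \ref{thm:implies_strtagey_proof_finite_range}, by fixing the opponents' profile and reading each ``section'' of $F$ as a one-buyer mechanism. The key observation is purely logical: the definition of strategy-proofness quantifies over all $i$, all $R_i,R_i'$ and all $R_{-i}$, and the order of these quantifiers can be rearranged. Fixing $i$ and $R_{-i}$, the requirement that $(T_i(R_i,R_{-i}),Q_i(R_i,R_{-i}))\,R_i\,(T_i(R_i',R_{-i}),Q_i(R_i',R_{-i}))$ hold for every $R_i,R_i'$ is exactly the statement that the one-buyer mechanism $G:=F_i(\cdot,R_{-i}):[\underline{R},\overline{R}]\to\mathbb{Z}$ is strategy-proof in the sense of Definition \ref{defn:sp}. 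Hence $F$ is strategy-proof if and only if for every $i$ and every $R_{-i}$ the section $F_i(\cdot,R_{-i})$ is a strategy-proof one-buyer mechanism on the closed interval $[\underline{R},\overline{R}]$. Moreover, for such a section the indirect preference correspondence $V^{F_i(\cdot,R_{-i})}(R_i)=IC(R_i,F_i(R_i,R_{-i}))$ coincides with the one-buyer object to which the earlier lemmas apply.

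For the necessity direction, I would assume $F$ strategy-proof, fix an arbitrary $i$ and $R_{-i}$, and apply the one-buyer Theorem \ref{thm:sp_implies} --- equivalently Lemma \ref{lemma:mon} and Lemma \ref{lemma:cont_correspondence} --- to the section $F_i(\cdot,R_{-i})$. This yields at once that $F_i(\cdot,R_{-i})$ is monotone (condition (i)) and that $V^{F_i(\cdot,R_{-i})}$ is continuous (condition (ii)). Since $i$ and $R_{-i}$ were arbitrary, (i) and (ii) hold for all $i$ and all $R_{-i}$.

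For the sufficiency direction, I would assume (i) and (ii) together with the standing finiteness of each option set $Rn(F_i(\cdot,R_{-i}))$. Fixing $i$ and $R_{-i}$, the section $F_i(\cdot,R_{-i})$ is then monotone, has a continuous indirect preference correspondence, and has finite range, so Theorem \ref{thm:implies_strtagey_proof_finite_range} applies and gives that $F_i(\cdot,R_{-i})$ is strategy-proof. As this holds for every $i$ and every $R_{-i}$, the quantifier rearrangement above returns strategy-proofness of $F$.

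The one genuinely load-bearing point --- and the only place where care is needed --- is that the one-buyer theorems must be invoked on a closed interval $[\underline{R},\overline{R}]$ rather than on the whole domain $\mathcal{R}^{cms}$; this is legitimate because, as remarked at the start of Section \ref{sec:opt}, the proofs of Lemma \ref{lemma:mon}, Lemma \ref{lemma:cont_correspondence} and Theorem \ref{thm:implies_strtagey_proof_finite_range} go through for closed subintervals, which inherit the least upper bound property and the order topology from $\mathcal{R}^{cms}$. Everything else is a bookkeeping reduction: there is no interaction across buyers to control, because strategy-proofness is a unilateral, $R_{-i}$-by-$R_{-i}$ condition, so the single-crossing geometry that drives the one-buyer argument is applied independently in each section.
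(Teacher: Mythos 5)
Your proposal is correct and follows essentially the same route as the paper: reduce to the one-buyer case by fixing $i$ and $R_{-i}$, observe that $n$-buyer strategy-proofness is exactly section-by-section one-buyer strategy-proofness, and then invoke Theorem \ref{thm:sp_implies} for necessity and Theorem \ref{thm:implies_strtagey_proof_finite_range} for sufficiency. Your explicit attention to the fact that the one-buyer results apply on the closed interval $[\underline{R},\overline{R}]$ is a point the paper handles only via remarks in its appendix proofs, so your write-up is, if anything, slightly more careful than the original.
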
  

\begin{proof} If $F$ is strategy-proof, then the one buyer mechanism $F_{i}(\cdot, R_{-i})$ is strategy-proof for all $i$ and for all $R_{-i}$, and then analogous to the one buyer case monotonicity of the mechanisms and continuity of the indirect preference correspondences follow. Conversely, monotonicity of the mechanisms and continuity of the indirect preference correspondences imply that the one buyer mechanisms $F_{i}(\cdot, R_{-i})$ is strategy-proof for all $i$ and for all $R_{-i}$. But then this is just the definition of $n-$ buyer strategy-proof mechanisms.              
\end{proof}

\noindent No we consider the optimal strategy-proof mechanisms for $n-$buyer environment. By $F_{l}$ now we mean that $\#Rn(F_i(\cdot, R_{-i}))\leq l$ for every $i$ and every $R_{-i}$.
Also let $\mathcal{F}_{l}$ be the set of all such mechanisms. Our approach is to carry the insight from the procedure of computing the $1-$buyer optimal mechanism to the $n-$buyer environment.      

Given $\mathcal{F}_{l}$ one approach is to find a strategy-proof and individually rational mechanism $F_{i}^*(\cdot, R_{-i} )$ that maximizes expected revenue from $i$ and at $R_{-i}$. We have the tools from the $1-$buyer environment to do this.
But this approach has a problem. From the one buyer case we know that the geometry in terms of bundles and the special preferences need not entail a mechanism. This did not create any problem in the one buyer case since we can throw away some bundles and create a strategy-proof mechanism. In the $n-$buyer case it is not clear which bundles to discard. In particular, the collection $\{F_{i}^*(\cdot, R_{-i} )\mid i=1,\ldots, n, R_{-i}\in \times_{j\neq i} [\underline{R},\overline{R}]\}$ need not produce a mechanism. We may not be able to make the sum of probabilities across buyers less than or equal to $1$. As an example consider Example \ref{ex:optimal_positive_difference}. In the example for $l=3$, the optimal $1-$buyer mechanism is a deterministic mechanism, and further the optimal mechanism sells the object to all valuations in $[1,2]$. Now consider an $n-$buyer environment where buyers valuations are drawn independently and identically across with uniform distribution from $[1,2]$. Let $n=2$. Consider a profile $(1.1,1.2)$. Then, by the optimal mechanism for $F_{1}(\cdot, 1.2)$s and $F_{2}(1.1, \cdot)$s    give $Q_{1}(1.1,1.2)=1=Q_{2}(1.1,1.2)$. Thus we need restrictions on mechanisms so that the procedure to compute the optimal mechanism for $1-$buyer environment can be extended to $n-$buyer environments.       
One possible way to do this is adopt an axiomatic way. We consider the following axiom.                           
\begin{defn}\rm Let $F$ be an $n-$buyer mechanism. We say that $F$ to be 
	{\bf lower-efficient } if $R_{i}\precsim R_{j}$ for some $i\neq j$, then 
	$Q_{i}(R_{i}, R_{-i})=0$.   
\end{defn}    

\noindent Lower-efficiency requires that inefficient buyers do not get the object, but it does not mean that the mechanism is efficient. If $R_{i}\prec R_{j}$, then $T_i<T_j$ where $(0,0) I_i (T_i,1)$ and $(0,0) I_j (T_j,1)$. Intuitively, this means that buyer $j$ wishes to pay more for the good instead of not having the good, i.e., ``value'' the good more, than buyer $i$. Hence, lower-efficiency says unless a buyer has the highest ``value'' for the good, she does not get the object. Lower efficiency does not affect the revenue if probabilities of ties is zero. 
Not all mechanisms are lower-efficient. Consider an ordered set of bundles, and let a fixed buyer choose her best bundle from this set. Such a mechanism maybe called dictatorial.   
A dictatorial mechanism is strategy-proof, and not lower-efficient.   The proof of the following proposition is immediate. 

\begin{prop}\rm Let $F$ be an $n-$buyer mechanism which is lower-efficient. 
	Let for all $i$, and for all $R_i$, $R_{i}^{*}=\max\{R_{j}\mid j\neq i\}$. Let for all $i$, the one buyer mechanism $F_{i}(\cdot, R_{-i})$ restricted to $[R_{i}^*, \overline{R}]$ is strategy-proof and individually rational. Then the mechanism $F$ is strategy-proof and individually rational.        
	\label{prop:lower_efficient}
\end{prop}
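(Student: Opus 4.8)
The plan is to fix the opponents' profile $R_{-i}$ and reduce the claim to a statement about the one-buyer mechanism $G := F_i(\cdot,R_{-i})$ on $[\underline{R},\overline{R}]$, since $n$-buyer strategy-proofness and individual rationality are, by definition, exactly one-buyer strategy-proofness and individual rationality of every $F_i(\cdot,R_{-i})$. Write $G(R_i)=(T_i(R_i,R_{-i}),Q_i(R_i,R_{-i}))$ and set $R_i^{\ast}=\max\{R_j\mid j\neq i\}$, which is well defined because there are finitely many buyers. Call $R_i$ a \emph{winner} type if $R_i^{\ast}\prec R_i$ and a \emph{loser} type otherwise. If $R_i\precsim R_i^{\ast}$ there is a $j$ with $R_i\precsim R_j$, so lower-efficiency gives $Q_i(R_i,R_{-i})=0$ on the whole loser region; since a buyer who receives nothing is not charged (equivalently, individual rationality together with money-monotonicity forces $T_i=0$ whenever $Q_i=0$), the loser allocation is the null bundle $(0,0)$. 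In particular $R_i^{\ast}$ is itself a loser type, and applying individual rationality of the restriction at the left endpoint $R_i^{\ast}$ to $Q_i(R_i^{\ast},R_{-i})=0$ yields $G(R_i^{\ast})=(0,0)$.

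With this in hand the routine cases are immediate. Individual rationality of $F$ holds type by type: on the winner region it is precisely individual rationality of the restriction, and on the loser region the allocation is $(0,0)$. For strategy-proofness fix a true type $R_i$ and a report $R_i'$. If both are winners, the inequality $G(R_i)\,R_i\,G(R_i')$ is exactly strategy-proofness of the restriction on $[R_i^{\ast},\overline{R}]$. If $R_i$ is a winner and $R_i'$ a loser, then $G(R_i')=(T',0)$ with $T'\ge 0$; individual rationality of the restriction gives $G(R_i)\,R_i\,(0,0)$ and money-monotonicity gives $(0,0)\,R_i\,(T',0)$, so transitivity yields $G(R_i)\,R_i\,G(R_i')$. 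If $R_i$ is a loser, any loser report leaves the allocation at $(0,0)$, so only a deviation to a winner report remains to be ruled out.

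That remaining case — a loser type $R_i\precsim R_i^{\ast}$ contemplating a winner report $R_i'\succ R_i^{\ast}$ — is the only substantive step and the heart of the argument. Applying strategy-proofness of the restriction to the two winner-domain types $R_i^{\ast}$ and $R_i'$ and using $G(R_i^{\ast})=(0,0)$ gives $(0,0)\,R_i^{\ast}\,G(R_i')$; it then remains to transport this comparison from the threshold type $R_i^{\ast}$ down to the weakly smaller true type $R_i$. Since the winner bundle satisfies $(0,0)\le G(R_i')$, Lemma \ref{lemma:preference_preserve} applies and preserves (indeed strengthens to strict preference) the comparison for every $R_i\prec R_i^{\ast}$, so $(0,0)\,R_i\,G(R_i')$, while for $R_i=R_i^{\ast}$ it is the comparison we began with. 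The degenerate sub-cases where $G(R_i')$ is not strictly diagonal are handled directly: a boundary bundle $(T',0)$ with $T'>0$ is dominated by $(0,0)$ via money-monotonicity, and a bundle $(0,Q')$ with $Q'>0$ cannot occur since it would contradict $(0,0)\,R_i^{\ast}\,G(R_i')$ by $q$-monotonicity.

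Collecting the cases, $G=F_i(\cdot,R_{-i})$ satisfies $G(R_i)\,R_i\,G(R_i')$ for all reports and $G(R_i)\,R_i\,(0,0)$ for all types; as $i$ and $R_{-i}$ were arbitrary, $F$ is strategy-proof and individually rational. I expect the main obstacle to be precisely the loser-to-winner deviation of the preceding paragraph: it is the only place where the hypotheses on the restriction cannot be invoked directly but must be pushed along the single-crossing order $\prec$ via Lemma \ref{lemma:preference_preserve}, and it is the step where lower-efficiency is essential, since it pins every loser allocation to $(0,0)$ and thereby makes $R_i^{\ast}$ the clean threshold from which the incentive constraint propagates downward.
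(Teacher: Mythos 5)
Your proposal is correct and follows essentially the same route as the paper's proof: lower-efficiency plus individual rationality pin every loser type's allocation to $(0,0)$, the restriction's strategy-proofness and individual rationality on $[R_i^*,\overline{R}]$ handle the winner region, and the only substantive case --- a loser type deviating to a winner report --- is settled by taking the threshold comparison $(0,0)\,R_i^*\,F_i(R_i',R_{-i})$ and pushing it down the order via Lemma \ref{lemma:preference_preserve}, exactly as the paper does with its appeal to the single-crossing property. Your write-up is simply more explicit than the paper's (the four-case analysis and the treatment of non-diagonal bundles $(T',0)$ and $(0,Q')$), but the decomposition and the key lemma are the same.
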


\begin{proof} By individual rationality and lower-efficiency $F_{i}(R_{i}^*, R_{-i})=(0,0)$. Thus by individual rationality and lower-efficiency 
	$F_{i}(R_{i},R_{-i})=(0,0)$ for all $R_{i}\prec R_{i}^{*}$. Since
	$F_{i}(\cdot, R_{-i})$ is strategy-proof and individually rational 
	in $[R_{i}^{*}, \overline{R}]$, by the single-crossing property, $F(R_{i}, R_{-i})R_{i}(0,0)$ for all $R_{i}$ such that $R_{i}^*\prec R_{i}$.
	Also, by the single-crossing property $(0,0)R_iF_{i}(R_{i}',R_{-i})$ for all $R_{i}\prec R_{i}^{*}\prec R_{i}'$. Thus, $F$ is strategy-proof and individually rational.   
\end{proof}

\noindent An immediate advantage of lower-efficiency is that we do not need to assume any extra mathematical structure on the set of mechanisms $\mathcal{F}_{l}$. We can compute the optimal mechanism in $[R_{i}^*,\overline{R}]$ as in the one buyer scenario and extend the mechanism to the $n$-buyer environment immediately. Given $l$, we can instead look at $F_{i}(\cdot, R_{-i})$ and solve for the optimal mechanism first. The optimal mechanism for buyer $i$ will depend on the parameter $R_{i}^{*}$, i.e., information about the optimal mechanism   for one $R_i^*$ maybe enough to know the optimal mechanisms for buyer $i$ at all $R_{-i}$. Given $l$,  let $F_{i}^{*}(\cdot, R_{-i}):[\underline{R}, \overline{R}]$ be an optimal mechanism. The extension of $F_{i}^*$ from $[R_{i}^{*}, \overline{R}]$ is as described in the proof Proposition \ref{prop:lower_efficient}. Then the optimal $n-$buyer mechanism by using the 1 buyer optimal mechanisms is defined as: for all $i$ for all $(R_1,\ldots, R_{n})$,  $F_i(R_{1}, \ldots, R_{2})=F_{i}^*(R_i, R_{-i})$. Then $F$ is strategy-proof and individual, and within the set of lower-efficient mechanisms it is optimal.       
We may interpret $R_{i}^{*}$ in the line of a reserve price, and in this context we demonstrate how lower-efficiency is related to \citep{myer}.    
We assume two buyers. Assume that both buyers' valuations are drawn independently and identically from the same distribution with support $[\underline{\theta},\overline{\theta}]$.
Let the $\Gamma$ and $\gamma$ denote the distribution and density respectively. 
Let $(\theta_1, \theta_2)$ denote a profile of valuations.  
Consider
$\theta_i>\theta_j$. Thus, let by lower-efficiency $Q_{i}(\theta_i,\theta_j)>0\implies \theta_i>\theta_j$. 
Thus, consider the interval $[\theta_j,\overline{\theta}]$ to study buyer $i$.
Assume that $\Gamma$ satisfies the increasing hazard rate.  
By Corollary \ref{cor:qlin_opt} the optimal way to sell the object to buyer $i$ is    

$$T^{*}_i(\theta_i,\theta_j) = \begin{cases}
	\max \{\theta^{*}, \theta_j\}, & \text{if $\theta_i>\max \{\theta^{*}, \theta_j\}$;}\\
	$0$, & \text{if $\theta_i< \max \{\theta^{*}, \theta_j\}$.}
\end{cases}$$

$$Q_i^{*}(\theta_i,\theta_j) = \begin{cases}
	1, & \text{if $\theta_i>\max \{\theta^{*}, \theta_j\}$;}\\
	$0$, & \text{if $\theta_i< \max \{\theta^{*}, \theta_j\}$.}
\end{cases}$$

\noindent The reasoning behind above mechanism is as follows. If there were no other buyer, then buyer $i$ would have obtained the object for all valuations above $\theta^{*} \in [\underline{\theta},\overline{\theta}]$. 
Consider the optimization problem in Corollary \ref{cor:qlin_opt}    
$\theta_i[1-\Gamma(\theta_{i})]$ where $\theta_{i}\in [\underline{\theta},\overline{\theta}]$. 
The critical point is given by 
$\theta_{i}=\frac{1-\Gamma(\theta_i)}{\gamma(\theta_{i})}$. The solution to this equation is $\theta^{*}$ that is obtained in Corollary \ref{cor:qlin_opt}.
We argue that $\theta_i[1-\Gamma(\theta)]$ is increasing up to $\theta^{*}$
and decreasing after $\theta^*$. Define $\psi(\theta_i)=\theta_i-\frac{1-\Gamma(\theta_i)}{\gamma(\theta_{i})}$. 
This function is called virtual valuation in \citep{myer}. 
We have seen $\psi(\theta^*)=\theta^*-\frac{1-\Gamma(\theta^*)}{\gamma(\theta^*)}=0$.
Since the hazard rate is increasing, $\psi$ is increasing. Thus if $\theta_i<\theta^{*}$, then 
$\theta_i<\frac{1-\Gamma(\theta_i)}{\gamma(\theta_i)}$.       
Now $\frac{d \theta_{i}[1-\Gamma(\theta_{i})]}{d\theta_{i}}>0$ if $\theta_{i}<\frac{1-\Gamma(\theta_{i})}{\gamma(\theta_{i})}$. Thus the derivative of $\theta_i[1-\Gamma(\theta_i)]$ is positive if $\theta_i<\theta^{*}$.  
Since $\psi$ is increasing and $\psi(\theta^*)=0$, if $\theta^*<\theta_i$, then 
$\frac{1-\Gamma(\theta_i)}{\gamma(\theta_)}<\theta_i$.
Further, $\frac{d \theta_{i}[1-\Gamma(\theta_{i})]}{d\theta_{i}}<0$ if $\theta_{i}>\frac{1-\Gamma(\theta_{i})}{\gamma(\theta_{i})}$.
Thus the derivative of $\theta_i[1-\Gamma(\theta_i)]$ is negative if $\theta_i>\theta^{*}$.
That the optimal mechanism does not charge more than $\max\{\theta^*,\theta_j\}$ from buyer $i$ if $\theta_i>\theta_j$.  
Now note that the condition $\theta_i>\max\{\theta^*, \theta_j\}$ is same as the condition 
$\psi(\theta_i)>\psi(\theta_j)$ and $\psi(\theta_i)>0$, since $\psi$ is increasing and $\psi(\theta^*)=0$. Now if $\theta^{*}<\theta_{j}<\theta_i$, the mechanism fixes the payment to be
$\theta_j$ and not $\theta^{*}$, even though the revenue is sub-optimal compared to the one buyer scenario and the payment $\theta^*$ is feasible. If the payment is $\theta^*$ or less than $\theta_j$, then for $\theta_i'$ such that $\theta^*<T_i(\theta_i,\theta_j)<\theta_i'<\theta_j$
we obtain $\theta_i'Q_i(\theta_i',\theta_j)-T_i(\theta_i',\theta_j)=0<
\theta_i'Q_i(\theta_i,\theta_j)-T_i(\theta_i,\theta_j)=\theta_i'-T_i(\theta_i,\theta_j)$. This contradicts strategy-proofness. 
The first equality follows by lower-efficiency.  
Thus the multi-buyer mechanism  described above is exactly the mechanism optimal mechanism if we had followed the techniques in \citep{myer}.
Hence, we also provide a geometric understanding of the optimal mechanisms in \citep{myer}. 

One possible suggestion could be to consider Bayesian incentive compatibility so that dependence on $R_{-i}$ to find $F_{i}^*$ can be avoided. We consider two classes of utility 
functions $\theta \times \text {probability of win}-\text{payment}$ and $\theta \times \text {probability of win}-(\text{payment})^{2}$. Let the valuations be distributed independently and identically across buyers with joint density $\gamma(\theta_1,\ldots, \theta_n)=\times_{i=1}^{n}\gamma(\theta_i)$, and let $\gamma(\theta_{-i})=\times_{j\neq i}^{n}\gamma(\theta_j)$ 
 The Bayesian incentive compatibility for these two classes of utility functions are defined as
  
for all $\theta_i$, $\int_{\theta_{-i}}[\theta_i Q_{i}(\theta_i,\theta_{-i})-T_{i}(\theta_i,\theta_{-i})]\gamma(\theta_{-i})d(\theta_{-i})\geq \int_{\theta _{-i}}[\theta_{-i} Q_{i}(z_i,\theta_{-i})-T_{i}(z_i,\theta_{-i})]\gamma(\theta_{-i})d(\theta_{-i})$, 

and 
for all $\theta_i$, $\int_{\theta_{-i}}[\theta_i Q_{i}(\theta_i,\theta_{-i})-T_{i}(\theta_i,\theta_{-i})^{2}]\gamma(\theta_{-i})d(\theta_{-i})\geq \int_{\theta _{-i}}[\theta_{-i} Q_{i}(z_i,\theta_{-i})-T_{i}(z_i,\theta_{-i})^{2}]\gamma(\theta_{-i})d(\theta_{-i})$.

\noindent Let $t_{i}(\theta_i)=\int_{\theta_{-i}}T_{i}(\theta_i,\theta_{-i})\gamma(\theta_{-i})d(\theta_{-i})$, and $q_{i}(\theta_i)=\int_{\theta_{-i}}Q_{i}(\theta_i,\theta_{-i})\gamma(\theta_{-i})d(\theta_{-i})$. Then incentive compatibility reduces to $\theta_i q_{i}(\theta_i)-t_{i}(\theta_i)$. Note that
in the case of linear preference, whether we consider utility of the buyer before or after taking average the evaluation of a buyer's, i.e., her probability of win and her payment, by using the same linear preference. Things change for utility functions with positive income effects: let   $t_{i}(\theta_i)=\int_{\theta_{-i}}T_{i}(\theta_i,\theta_{-i})^{2}\gamma(\theta_{-i})d(\theta_{-i}))$. The incentive compatibility becomes  
$\theta_iq_{i}(\theta_i)-t_{i}(\theta_i)\geq \theta_iq_{i}(z_i)-t_{i}(z_i)$.
But this utility function does not incorporate positive income effect.  
Along with creating a confusion about whether to consider linear utility function or utility function with positive income effect,       
analyzing optimal mechanisms becomes harder.  Latter happens because   
expected payment received by the seller from  $\theta_i$ is $\widehat{t_i}(\theta_i)=\int_{\theta_{-i}}T_{i}(\theta_i,\theta_{-i})\gamma(\theta_{-i})d(\theta_{-i}))\neq t_{i}(\theta_i)$ in general. However, if we assume the mechanism to be lower-efficient, we can circumvent this confusion. It is because by lower-efficiency, and individual rationality in $[\theta_{i}^*, \overline{\theta}]$, all other buyers' allocations are $(0,0)$, and then we can study strategy-proof and individually rational mechanism by applying the primitive of buyer $i$.   
We end this section with a remark about optimal mechanism with quality.   

\begin{remark}\rm Suppose on the vertical axis of $\mathbb{Z}$ instead of probability of win we have quality. As we discussed earlier in this case we may not be able to write down a preference by using a known formula.  However, lower-efficiency is well defined. There is yet another approach we may consider. We consider an axiom related to priority over the set of buyers. Let $q$ denote shares.
An example of shares can be thought of as selling rights by a firm to a vendor in such a way that the vendor can provide specific services to the
customers.{\footnote{\url{https://www.lawinsider.com/dictionary/selling-rights\#:~:text=Selling20Rights20means20the20right,Customer20under20this20Framework20Agreement}}}.  
Consider a firm that produces soft drinks. The management of the firm may grant selling rights of its products to a party in such a way that only certain  specified brands of soft drinks can be sold by the licensee. In an abstract sense, if we think the set of all brands of soft drinks produced by the firm to be 1, then selling rights to sell certain brands can be considered as selling  a share.	
Let there be two buyers and the seller wants to sell shares first to buyer $1$ and if anything is left then it will be sold to buyer $2$. This procedure is not lower-efficient. Consider the one buyer problem of finding the optimal mechanism for buyer $1$ whose preferences lie in $[\underline{R},\overline{R}]$. Let buyer $1$'s preference be $R_1$ and suppose $F_{1}(R_1,R_2)=(T_1,Q_1)$ for all $R_2$. Then given $R_1$ consider the one buyer problem of finding the optimal mechanism for buyer $2$ where $Q_2\in [0,  1-Q_1]$. This defines a strategy-proof mechanism. By borrowing a terminology from social choice theory we may call such a procedure a serial dictatorial. Such priority over the set of buyers may arise from the differences in the reputations of the buyers to fulfill the contractual obligations such as making timely payments, see \citep{Houser} and \citep{Englemann} for theory and importance of reputation in auctions. 
Instead of priority, fairness could be an external restriction on the mechanisms while selling shares. For example, let each buyer can obtain at most $\frac{1}{n}$, where $n$ is the number of buyers, fraction of the total share. In this case $\{F_{i}^{*}(\cdot, R_{-i})\mid i=1,\ldots, n, R_{-i}\in\times_{j\neq i}[\underline{R}, \overline{R}]\}$ define a mechanism. To see this, fix $R_{-i}$ and let $F_{i}^{*}(\cdot, R_{-i})$ be an optimal $1-$buyer mechanism at $R_{-i}$. 
Define $F(R_1,R_2,\ldots R_{n})=(F_{1}^{*}(R_1, R_{-1}),F_{2}^{*}(R_2, R_{-2}), \ldots, F_{n}^*(R_{n}, R_{-n}))$. Since $Q_{i}^{*}(R_i, R_{-i})\leq \frac{1}{n}$, $\sum_{i=1}^{n}Q_{i}^*(R_{i}, R_{-i})\leq 1$.

\end{remark}

\section{\bf Concluding Remarks}   
\label{sec:con}

This paper analyzes strategy-proof and individually rational mechanisms defined on CM single-crossing domains. 
We provide a characterization of strategy-proof mechanisms that are monotone, and the indirect preference correspondences that correspond to the mechanisms are continuous.  
Our characterization brings out the geometry of strategy-proof mechanisms ensued from the single-crossing property of the CM preferences. Instead of searching in the space of mechanisms for an optimal mechanism, it is enough to search in the space of the geometry. The geometry that we are referring to is just a finite product metric spaces of bundles and the CM single-crossing domain under consideration. We believe we have provided a computable optimization program that maximizes expected revenue in space of geometry of strategy-proof and individually rational mechanisms. Our belief stems from the fact that computations on abstract topological spaces is an active area of research currently. Instead of looking at the $1-$buyer environment as a special case ob an $n-$buyer environment, we provide an axiomatic extension of the $1-$buyer environment to an $n-$buyer environment. In situations where the number of buyers is large, and the number keeps changing quite frequently our axiomatic extension could be useful.           
A standard approach in mechanism design theory involves finding a payment rule that implements an allocation rule that makes revealing true preference a dominant strategy. We consider allocations and payments together as bundles or alternatives and study strategy-proof mechanisms. In this sense our approach can be considered social choice theoretic.

\begin{appendix}
	\section{{\bf Appendix 1}}

\noindent Let $\square(z)=\{x\mid x\leq z\}$, and recall Definition \ref{defn:cut}.  
In words, the indifference curve of $R''$ through $z$ lies above the indifference curve for $R'$ through $z$ in $\square(z)$ if both indifference curves are viewed from the horizontal axis $\Re_{+}$. By applying arguments similar to ones in  
\citep{Goswami} it can be  established that if $R''$ cuts $R'$ from above at some $z\in \mathbb{Z}$, then $R''$ cuts $R'$ from above at every $z\in \mathbb{Z}$. In \citep{Goswami} preferences are assumed to have strictly convex upper contour sets. However Proposition $3$ in \citep{Goswami} which proves if $R''$ cuts $R'$ from above at some bundle, then $R''$ cuts $R'$ from above at every bundle does not use convexity of preferences.
Thus we define the following order on $\mathcal{R}^{cms}$: 
for all $R',R''\in \mathcal{R}^{cms}$ we say $R'\prec R''$ if $R''$ cuts $R'$ from above. We consider the order topology on $\mathcal{R}^{cms}$ by applying this order.   
We recall a few definitions. An order $\prec$ in any set $L$ is called {\bf simple or linear} if  
it has the following properties (see \citep{Munkres}):
$(1)$ (Comparability) For every $\alpha$ and $\beta$ in $L$ for which $\alpha\neq \beta$, either $\alpha \prec \beta$ or $\beta\prec \alpha$,
$(2)$ (Non-reflexivity) For no $\alpha$ in $L$ does the relation $\alpha \prec \alpha $ hold,
(3) (Transitivity) If  $\alpha \prec \beta $ and $\beta \prec \gamma$, then $\alpha\prec \gamma$. An important class of linearly ordered sets are linear continuums. We define this notion next.    

\begin{defn}[Linear continuum]\rm (see \citep{Munkres}) A simply ordered set $L$ having more than one element is called a linear continuum if the following hold:
	$(1)$ $L$ has the least upper bound property, i.e., every bounded subset, i.e.,  bounded according to the order $\prec$ on $L$, has the least upper bound in $L$, (see \citep{Rudin}), $(2)$ if $\alpha \prec \beta$, then there exists $\gamma$ such that $\alpha\prec \gamma \prec \beta$, $\alpha, \beta,\gamma $ are in $L$.   
\end{defn}

\noindent In Theorem \ref{thm:lin_cont} we show that $\mathcal{R}^{cms}$ is a linear continuum. 
To define the order topology for any two preferences $R', R''\in \mathcal{R}^{cms}$ with $R'\prec R''$ define open interval $]R', R''[=\{R|R'\prec R \prec R''\}$. The collection of open intervals form a basis on $\mathcal{R}^{cms}$. The topology generated by this basis is the order topology on $\mathcal{R}^{cms}$.
In this topology $\mathcal{R}^{cms}$ is connected and closed interval $[R',R'']=\{R\mid R'\precsim R \precsim R''\}$ is compact, here $R'\precsim R$ means either $R'\prec R$ or $R'=R$, $]-\infty, R[$ and $]-\infty, R]$ denote open and closed intervals respectively that are not bounded below, $]R,\infty[$ and $[R,\infty]$ denote open and closed intervals respectively that not bounded above,
see \citep{Munkres} for details. In this order topology $\mathcal{R}^{cms}$ is homeomorphic to $\Re$ and thus metrizable. Further, the homeomorphism is an order preserving bijection with the real line. The next lemma formalizes these observations. Before proceeding we note that intervals in all ordered sets  in this paper are written by using the braces $]$ and $[$. For example, if $\alpha,\beta \in \Re $ and $\alpha<\beta$, then $]\alpha,\beta]$ denotes the left open and right closed interval.

\begin{theorem}\rm  There is an order preserving homeomorphism, denoted by $h$, between  $\mathcal{R}^{cms}$ with the order topology and $\Re$ with the standard Euclidean metric topology.  Due to this order preserving bijection $\mathcal{R}^{cms}$ is a linear continuum.     
	
	\label{thm:lin_cont}
\end{theorem}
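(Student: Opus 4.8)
The plan is to prove that $(\mathcal{R}^{cms},\prec)$ is a \emph{separable linear continuum with neither a smallest nor a largest element}, and then invoke the classical order-theoretic characterization of the real line: any such ordered set is order-isomorphic to $\Re$, and an order isomorphism between linearly ordered sets is automatically a homeomorphism for the associated order topologies (see \citep{Munkres}). Since the order topology on $\Re$ coincides with the Euclidean metric topology, the resulting order isomorphism $h$ is exactly the desired order-preserving homeomorphism, and being order-isomorphic to the linear continuum $\Re$ makes $\mathcal{R}^{cms}$ a linear continuum. Thus the whole statement reduces to checking the hypotheses of the characterization theorem.

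First I would verify that $\prec$ is a simple order. Non-reflexivity and transitivity are immediate from Definition \ref{defn:cut}. For comparability, take distinct $R',R''\in\mathcal{R}^{cms}$ and any $z$; by the single-crossing property their indifference curves through $z$ meet only at $z$, so within $\square(z)$ one curve lies weakly above the other, i.e. either $\square(z)\cap UC(R'',z)\subseteq\square(z)\cap UC(R',z)$ or the reverse inclusion holds. By the propagation fact recalled in Appendix $1$ (following Proposition $3$ of \citep{Goswami}), whichever inclusion holds at $z$ holds at every bundle, so $R'$ and $R''$ are $\prec$-comparable. Next, richness delivers order-density and the absence of endpoints: if $R'\prec R''$ then (using Lemma \ref{lemma:preference_preserve}) their indifference curves through a fixed interior bundle are strictly ordered, and choosing a bundle strictly between the two curves, richness supplies $R\in\mathcal{R}^{cms}$ through it with $R'\prec R\prec R''$; applying the same device to a bundle below (respectively above) the curve of a given $R$ produces a strictly smaller (respectively larger) preference. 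For separability I would fix an interior anchor $\bar z$ and take the countable family of preferences pinned down, via single-crossing, by requiring $IC(\cdot,\bar z)$ to pass through a second bundle with rational coordinates (existence by richness, uniqueness because two distinct preferences cannot share two points of an indifference curve); density of the rationals makes this family $\prec$-dense.

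The main obstacle is the \textbf{least upper bound property}, and here maximality of the rich domain does the work. Given a nonempty $\mathcal{A}\subseteq\mathcal{R}^{cms}$ bounded above, I would build a candidate supremum $R^{*}$ as the ``upper envelope'' of $\{IC(R,\bar z):R\in\mathcal{A}\}$: concretely, specify $R^{*}$ by declaring its upper-contour sets to be the intersections $\bigcap_{R\in\mathcal{A}}UC(R,z)$ and its lower-contour sets the closures of $\bigcup_{R\in\mathcal{A}}LC(R,z)$. These are closed, so $R^{*}$ is continuous, while monotonicity is inherited from the monotone curves being enveloped; hence $R^{*}$ is CM, is a $\prec$-upper bound for $\mathcal{A}$, and lies below every upper bound.

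The delicate step is that $R^{*}$ must actually belong to $\mathcal{R}^{cms}$. Since $\mathcal{R}^{cms}$ is a \emph{maximal} single-crossing domain, it suffices to show $R^{*}$ single-crosses every member of the domain, whence maximality forces $R^{*}\in\mathcal{R}^{cms}$. I would argue this from the fact that $R^{*}$ is a monotone limit of domain preferences together with richness, which leaves no room for the tangency (Maskin Monotonic Transformation) that is the only way single-crossing can fail in the interior of $\mathbb{Z}$. Once completeness is secured, all hypotheses of the characterization theorem hold, the order isomorphism $h\colon\mathcal{R}^{cms}\to\Re$ exists, and the homeomorphism and linear-continuum conclusions follow as described above.
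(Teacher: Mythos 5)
Your overall strategy --- verify that $(\mathcal{R}^{cms},\prec)$ is a separable, densely ordered, Dedekind-complete linear order without endpoints, then invoke Cantor's order-theoretic characterization of $\Re$ --- is a legitimate alternative route, and your treatment of comparability, order-density, absence of endpoints, and separability is essentially sound. The genuine gap is exactly where you locate the main difficulty: the least upper bound property. Your envelope construction of $R^{*}$ does not define a preference. The defect is the two-sided nature of single-crossing, recorded in Lemma \ref{lemma:preference_preserve}$(i)$: if $R'\prec R''$, the inclusion $UC(R'',z)\subseteq UC(R',z)$ holds only inside $\square(z)$; for bundles diagonal to $z$ from above the inclusion \emph{reverses}, because the two indifference curves through $z$ cross at $z$. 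Hence, for a bounded chain $\mathcal{A}=\{R^{1}\prec R^{2}\prec\cdots\}$, the set $\bigcap_{R\in\mathcal{A}}UC(R,z)$ is the region above a kinked curve: the desired upper envelope on $\square(z)$, but the indifference curve of the \emph{smallest} element $R^{1}$ on the other side of $z$. As $z$ varies, these sets are not the upper contour sets of any complete, transitive relation. Concretely, the relation ``$y\succeq^{*}z$ iff $y\in\bigcap_{R\in\mathcal{A}}UC(R,z)$'' is just the intersection $\bigcap_{R\in\mathcal{A}}R$, which is transitive but not complete: any $y$ lying strictly between the envelope curve and $IC(R^{1},z)$ in the region above $z$ satisfies $yP^{n}z$ for large $n$ but $zP^{1}y$, so $y$ and $z$ are $\succeq^{*}$-incomparable. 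Equivalently, if $y<z$ lies on the envelope part of the boundary of $\bigcap_{R}UC(R,z)$, then $z$ does \emph{not} lie on the boundary of $\bigcap_{R}UC(R,y)$ (which follows $IC(R^{1},y)$ above $y$), so the indifference you are declaring is not symmetric; and the closures of unions of lower contour sets induce yet another, different relation. Since no CM preference $R^{*}$ is ever produced, the appeal to maximality of $\mathcal{R}^{cms}$ has nothing to act on, and Dedekind completeness --- the one hypothesis of the characterization theorem that carries all the weight --- remains unproved.

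The paper never constructs suprema of preferences intrinsically; it sidesteps the issue by representing each preference by a single bundle. It maps $R$ to the unique point where $IC(R,(0,0))$ meets a fixed quarter circle $\mathbb{A}$ around the origin (well defined by monotonicity and continuity of $R$, injective by single-crossing, surjective by richness, order preserving by the definition of $\prec$), and then projects $\mathbb{A}$ onto $]0,\delta[\subseteq\Re$; the least upper bound property, metrizability, and the linear continuum structure are all \emph{imported} through this order isomorphism rather than proved directly. This is also the natural repair of your argument: given bounded $\mathcal{A}$, take the supremum $s$ of the intersection points on the arc, and use richness plus single-crossing to recover the unique preference whose indifference curve through $(0,0)$ passes through the arc point with parameter $s$ --- that preference is the least upper bound. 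But once this device is in place you have already built the paper's bijection $h$, and Cantor's characterization becomes unnecessary.
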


\begin{proof} We construct an order preserving bijection between $\mathcal{R}^{cms}$ with order $\prec$ and an open interval in the real line $\Re$. Consider $z=(0,0)$ and let $C_{\delta}(0,0)$ be the circle of radius $\delta<1$ with origin $(0,0)$. Consider $\mathbb{A}=[C_{\delta}(0,0)\cap \mathbb{Z}]\setminus \{(\delta,0)\cup (0,\delta)\}$. 
	That is, $\mathbb{A}$ is the quarter circle with radius delta and center $(0,0)$ intersected with $\mathbb{Z}$ excluding the end points. 
	Let $h':\mathcal{R}^{cms}\rightarrow \mathbb{A}$ by $h'(R)=(t,q)$ where $(t, q)I(0,0)$. 
	By richness of $\mathcal{R}^{cms}$, for every $z\in \mathbb{A}$ there exists $R\in \mathcal{R}^{cms}$ such that $zI(0,0)$. Thus, $h'$ is onto. By the single-crossing property $h'$ is one-one.

	Further, $\mathbb{A}$ is homeomorphic to the open interval $]0,\delta[$ in $\Re$, where the homeomorphism is just the projection $\pi:\mathbb{A}\rightarrow ]0,\delta[$ defined by $\pi(t,z)=t$. Then we obtain an order preserving homeomorphism between 
	$\mathcal{R}^{cms}$ and $]0,\delta[$ defined by the composition $\pi~ o~ h'$, call the homeomorphism $h$. Since $]0,\delta[$ has the least upper bound property so does $\mathcal{R}^{cms}$. Also $\mathcal{R}^{cms}$
	has the second property of a linear continuum by richness.  
	For any $R',R''\in \mathcal{R}^{cms}$, define $d(R',R'')= |h(R')-h(R'')|$ where $|\cdot|$ is the standard Euclidean metric on $\Re$. 
\end{proof}

\noindent An implication of Theorem \ref{thm:lin_cont} is that convergence of a sequence in $\mathcal{R}^{cms}$ is equivalent to saying that monotone sequences converge. A sequence of preferences is denoted by 
$\{R^{n}\}_{n=1}^{\infty}$. A monotone decreasing sequence means any $n$, $R^{n+1}\precsim R^{n}$, and its convergence to $R$ is denoted by $R^{n}\downarrow R$. A monotone 
increasing sequence means for any $n$, $R^{n}\precsim R^{n+1}$, and its convergence to $R$ is denoted by $R^{n}\uparrow R$. In general convergence to $R$ is denoted by $R^{n}\rightarrow R$.
Further, $R^{n}\rightarrow R \iff h(R^{n})\rightarrow h(R)$. It is known that a sequence in $\Re$ converges if and only if every monotone subsequence of the sequence converges to the same limit. Since $h$ is an order preserving homeomorphism the same is true for $\mathcal{R}^{cms}$.   
Another very important, perhaps the most critical, implication of Theorem \ref{thm:lin_cont} is that 
$\mathcal{R}^{cms}$ in the order topology is an one dimensional topological manifold. Later we shall give examples to show that even if $\mathcal{R}^{cms}$ is topologically one dimensional, from the perspective  of utility representation of these preferences in terms parametric classes $\mathcal{R}^{cms}$ is multidimensional. 
We end this section with two technical lemmas that we use later. By $cl(B)$, we mean closure of the set $B$. The next lemma shows that the infimum and the supremum of a set in $\mathcal{R}^{cms}$ are in the closure of the set. Infimum and supremum are defined by using the order $\prec$.

\begin{lemma}\rm If $R_{0}=\inf_{\prec} U$ where $ U\subseteq \mathcal{R}^{cms}$, then $R_{0}\in cl(U)$. 
	Analogously, if $R^{0}=\sup_{\prec} U$ where $U \subseteq \mathcal{R}^{cms}$, then $R^{0}\in cl(U)$.  	  
	\label{lemma:closure}
\end{lemma}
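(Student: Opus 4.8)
The plan is to prove the infimum case in full; the supremum case is obtained by reversing every order relation, so I would only remark that it is dual and identical.

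First I would dispose of the trivial situation: if $R_{0}\in U$, then $R_{0}\in cl(U)$ immediately since $U\subseteq cl(U)$. Hence I may assume $R_{0}\notin U$, which together with the fact that $R_{0}=\inf_{\prec}U$ is a lower bound gives $R_{0}\prec u$ strictly for every $u\in U$. Next I would recall from Theorem \ref{thm:lin_cont} that $\mathcal{R}^{cms}$ is order-isomorphic and homeomorphic to $\Re$; in particular it has neither a smallest nor a largest element, so every basic open neighborhood of $R_{0}$ in the order topology has the form $]R',R''[$ with $R'\prec R_{0}\prec R''$. To conclude $R_{0}\in cl(U)$ it therefore suffices to verify that each such basic neighborhood meets $U$.

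I would carry this out by contradiction. Suppose $]R',R''[\,\cap\, U=\emptyset$ and fix an arbitrary $u\in U$. Since $R'\prec R_{0}\prec u$, transitivity gives $R'\prec u$; and since $u\notin\, ]R',R''[$ while $R'\prec u$, the linearity (comparability) of $\prec$ forces $R''\precsim u$. As $u\in U$ was arbitrary, $R''$ is a lower bound of $U$. But $R_{0}\prec R''$, contradicting that $R_{0}$ is the \emph{greatest} lower bound $\inf_{\prec}U$. Hence every basic neighborhood of $R_{0}$ intersects $U$, i.e.\ $R_{0}\in cl(U)$, and the symmetric argument yields $R^{0}=\sup_{\prec}U\in cl(U)$.

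A shorter alternative would transport the problem to $\Re$ through the order-preserving homeomorphism $h$ of Theorem \ref{thm:lin_cont}: since $h$ preserves order we have $h(R_{0})=\inf h(U)$, and since $h$ is a homeomorphism $h(cl(U))=cl(h(U))$; the claim then reduces to the elementary fact that the infimum of a nonempty bounded-below subset of $\Re$ lies in its closure, and applying $h^{-1}$ recovers the statement. There is no substantial obstacle, as this is a standard property of linear continua. The only point requiring care is the description of the basic open sets: the argument above relies on $R_{0}$ admitting two-sided interval neighborhoods $]R',R''[$, which is guaranteed precisely because $\mathcal{R}^{cms}$ has no endpoints. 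Were the lemma stated for a closed-interval subspace $[\underline{R},\overline{R}]$, one would additionally have to treat the ray-type basic neighborhoods at the endpoints, but that modification is routine and is not needed here.
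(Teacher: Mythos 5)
Your proposal is correct and follows essentially the same route as the paper's (quite terse) proof: both rest on the observation that if an open interval lying above $R_{0}$ were disjoint from $U$, its right endpoint would be a lower bound of $U$ strictly greater than $R_{0}$, contradicting $R_{0}=\inf_{\prec}U$; the paper merely organizes the cases by whether such an interval exists (concluding $R_{0}\in U$ in that case), whereas you organize them by whether $R_{0}\in U$. Your write-up is in fact more explicit than the paper's about the form of basic neighborhoods $]R',R''[$, and your alternative argument via the order-preserving homeomorphism $h$ of Theorem \ref{thm:lin_cont} is also valid.
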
	

\begin{proof} Let $R_{0}=\inf_{\prec} U$. Suppose there exists $R\neq R_{0}$ such that $]R_{0},R[\cap U =\emptyset$. Since $R_{0}$ is the infimum, $R_{0}\in U$. Thus $R_{0}\in cl(U)$. If for every $R\neq R_{0}$, $]R_{0},R[\cap U \neq \emptyset$, then $R_{0}$ is a limit point of $U$ and thus lies in $cl(U)$.  
The argument for the supremum is analogous.   	   
\end{proof}

\noindent The Lemma \ref{lemma:preference_conv} proves a closure property of the order topology on $\mathcal{R}^{cms}$: indifference in the limit of any sequence of preferences is preserved. Lemma \ref{lemma:preference_conv} is very important when proving the constraint set of the optimization program to find the optimal mechanism in Theorem \ref{thm:optimal} is compact.

\begin{lemma}\rm Let $\{R^{n}\}_{n=1}^{\infty}\subseteq \mathcal{R}^{cms}$ be a sequence such that $R^{n}\rightarrow R$. Let $(t^{1n},q^{1n})\rightarrow (t^1,q^1),(t^{2n},q^{2n})\rightarrow (t^2,q^2)$, where $\{(t^{in},q^{in})\}_{n=1}^{\infty}\subseteq \mathbb{Z}$, $i=1,2$.
	If $(t^{1n},q^{1n})I^{n}(t^{2n},q^{2n})$, then $(t^{1},q^{1})I(t^{2},q^{2})$. 	   
	\label{lemma:preference_conv}
\end{lemma}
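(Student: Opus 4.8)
The plan is to argue by contradiction, the genuine difficulty being that the preferences $R^n$ and the two bundles move simultaneously. First I would fix the orientation of the bundles. If $(t^{1n},q^{1n})=(t^{2n},q^{2n})$ for infinitely many $n$, then along that subsequence the limits coincide and the conclusion holds by reflexivity. Otherwise, passing to a subsequence, the two bundles are distinct for every $n$; since they are indifferent under the CM preference $R^n$ and indifference curves are upward sloping, they are diagonal, and passing to a further subsequence I may assume a constant orientation, say $t^{1n}<t^{2n}$ and $q^{1n}<q^{2n}$, so that $t^1\le t^2$ and $q^1\le q^2$ in the limit. If the limits coincide we are done, so suppose they are distinct and, for contradiction, that $(t^1,q^1)$ and $(t^2,q^2)$ are not indifferent under $R$. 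Then either $(t^2,q^2)P(t^1,q^1)$ (Case A) or $(t^1,q^1)P(t^2,q^2)$ (Case B); note that Case A forces $q^1<q^2$ (otherwise $q^1=q^2$ with $t^1<t^2$ would give $(t^1,q^1)P(t^2,q^2)$ by money-monotonicity) and Case B forces $t^1<t^2$ by the symmetric use of $q$-monotonicity.

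The central idea, which decouples the two perturbations, is to trap the moving bundles between fixed witness bundles by preference-independent monotone domination, and to carry a single comparison between the witnesses from $R$ to $R^n$ using the order. In Case A I would pick a fixed $v_2=(t_2^+,q_2^-)$ with $t_2^+>t^2$, $q_2^-<q^2$, and a fixed $v_1=(t_1^-,q_1^+)$ with $t_1^-<t^1$, $q_1^+>q^1$, choosing all perturbations small enough that $v_1<v_2$ is diagonal (the room in the $q$-coordinate is supplied by $q^1<q^2$) and small enough that, by continuity of $R$, the relation $(t^2,q^2)P(t^1,q^1)$ persists as $v_2Pv_1$. Since $t^{2n}\to t^2<t_2^+$ and $q^{2n}\to q^2>q_2^-$, monotonicity gives $(t^{2n},q^{2n})P^n v_2$ for all large $n$; likewise $t^{1n}\to t^1>t_1^-$ and $q^{1n}\to q^1<q_1^+$ give $v_1P^n(t^{1n},q^{1n})$. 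Both of these strict comparisons hold for every CM preference, hence are robust against the variation of $R^n$.

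It then remains to compare the two fixed witnesses under $R^n$. Since $v_1<v_2$ are diagonal and $v_2Pv_1$ under $R$, richness of $\mathcal{R}^{cms}$ furnishes $R^{\circ}$ with $v_1 I^{\circ} v_2$, and Lemma \ref{lemma:preference_preserve} yields $R^{\circ}\prec R$. Because $R^n\to R$ in the order topology and $]R^{\circ},\infty[$ is an open set containing $R$, we have $R^{\circ}\prec R^n$ for all large $n$, so a second application of Lemma \ref{lemma:preference_preserve} gives $v_2P^n v_1$. Chaining by transitivity, $(t^{2n},q^{2n})\,P^n\,v_2\,P^n\,v_1\,P^n\,(t^{1n},q^{1n})$ for large $n$, contradicting $(t^{1n},q^{1n})I^n(t^{2n},q^{2n})$. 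Case B is identical after exchanging the two bundles and reversing the order: one picks $w_1=(t_1^+,q_1^-)$ near $(t^1,q^1)$ and $w_2=(t_2^-,q_2^+)$ near $(t^2,q^2)$, uses $t^1<t^2$ for the diagonal room, and transfers $w_1 P w_2$ via the other half of Lemma \ref{lemma:preference_preserve} (now $R\prec R^{\circ}$).

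Finally I would record the boundary adjustments: when a coordinate of a limit bundle already lies on $\partial\mathbb{Z}$ (for instance $t^1=0$, or $q^2=0$), the corresponding witness coordinate is placed on the boundary rather than strictly beyond it, and the required strict domination is drawn from the other coordinate; the strict inequalities $q^1<q^2$ in Case A and $t^1<t^2$ in Case B guarantee that at least one coordinate always has room, so the construction never degenerates. The only real obstacle is the joint limit in $\big(R^n;\,(t^{1n},q^{1n}),(t^{2n},q^{2n})\big)$; once the moving bundles are sandwiched by fixed witnesses via monotonicity and the single witness-to-witness comparison is carried across by the single-crossing order, the argument collapses to transitivity.
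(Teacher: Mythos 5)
Your proof is correct, but it takes a genuinely different route from the paper. The paper's proof is representation-theoretic: it attaches to each preference $R$ a continuous function $f_{R}$ (sending a bundle to the payment $t$ with $(t,1)$ on its indifference curve), shows via richness and single-crossing that $R^{n}\rightarrow R$ implies $f_{R^{n}}\rightarrow f_{R}$ pointwise, upgrades this to uniform convergence on a compact set by a Dini-type argument (Theorem $7.13$ in \citep{Rudin}, exploiting monotonicity of the family in the order $\prec$), and then invokes the standard fact that uniform convergence plus $x^{n}\rightarrow x$ gives $f_{R^{n}}(x^{n})\rightarrow f_{R}(x)$; equality of the limits of $f_{R^{n}}$ along the two bundle sequences then yields the indifference. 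You instead argue by contradiction entirely at the ordinal level: after normalizing the orientation of the two bundle sequences, you trap the moving bundles by fixed witness bundles using only monotonicity (so those comparisons are preference-independent), use richness plus Lemma \ref{lemma:preference_preserve} to locate $R^{\circ}$ with $R^{\circ}\prec R$ (or $R\prec R^{\circ}$ in the symmetric case), transfer the witness-to-witness comparison to $R^{n}$ for large $n$ via openness of order rays, and close by transitivity. Your handling of the degenerate cases (coincident bundles, boundary coordinates $t^{1}=0$, $q^{1}=0$, $q^{2}=1$) and of the orientation of the limit bundles in each case ($q^{1}<q^{2}$ in Case A, $t^{1}<t^{2}$ in Case B) is accurate, and your deduction of $R^{\circ}\prec R$ from $v_{1}I^{\circ}v_{2}$ and $v_{2}Pv_{1}$ is a correct use of trichotomy together with Lemma \ref{lemma:preference_preserve}. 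What each approach buys: yours is more elementary and self-contained, needing no uniform convergence machinery, no compact exhaustion, and no auxiliary representation functions — it is closer in spirit to the paper's stated ordinal philosophy; the paper's approach is heavier but produces reusable infrastructure (the continuous representations $f_{R}$ and their uniform convergence, Claims \ref{claim:uniform} and \ref{claim:uni_converge}), which also delivers the closure of constraints such as individual rationality in the limit, as the paper notes after Theorem \ref{thm:optimal}.
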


\begin{proof}The following claim is required to prove the lemma. Consider a compact subspace $K=[0, \bar{t}]\times [0,1]$
	whee $0<\bar{t}<\infty$.

	\begin{claim}\rm Each preference in $\mathcal{R}^{cms}$ can be represented by a continuous function $f$ such that if $R^{n}\rightarrow R$, then $f^{n}\rightarrow f$ uniformly in $K$.  
		\label{claim:uniform}	   
	\end{claim}
	
	\noindent{\bf Proof of the Claim \ref{claim:uniform}:} 
	Any preference $R\in \mathcal{R}^{cms}$ is completely defined by the equivalence
	classes formed by its indifference sets. Since $R\in \mathcal{R}^{cms}$, by the monotonicity each equivalent class can be identified uniquely with a bundle $(t, 1)$, where $t\in [0,\infty[$.
	An equivalence class of $R$ denoted by $[t]_{R}$ is: 
	$[t]_{R}=\{(t',q')\mid (t',q') I (t,1), (t',q')\in \mathbb{Z}\}$.
	Let $f_{R}(t',q')=t$ for all $(t', q')\in  [t]_R$. 
	That is for $(t',q'),(t'',q'')\in \mathbb{Z}$ we have $(t',q')R(t'',q'')$ if and only if $f_{R}(t',q')\leq f_{R}(t'',q'')$.{\footnote{We do not require $f_{R}$ to be a utility representation.}}

	This function is continuous. 
	To see this let $(t^{'n},q^{'n})\rightarrow (t',q')$. 
	We want to show that $\lim_{n\rightarrow \infty}f_{R}(t^{'n},q^{'n})=f_{R}(t',q')$. 
	Let $(t^{'n},q^{'n})I(t^{n},1)$ and thus $f(t^{'n},q^{'n})=t^{n}$. 
	Further, let $(t',q')I(t,1)$ and thus $f_{R}(t',q')=t$.  
	Let by way of contradiction $t^{n}$ does not converge.
	Consider the situation where $\{t^{n}\}_{n=1}^{\infty}$ is an unbounded sequence. 
	Then for every $M>0$ there is $t^{n}>M$. 
	Since $(t',q')I (t,1)$, by monotonicity of $R$  
	there is $(t^{n},1)$, $n$ large enough so that $(t',q')P(t^{n},1)$.  
	Since $(t^{'n},q^{'n})\rightarrow (t',q')$ and $R$ is continuous there is $N$ such that 
	for all $m\geq N$, $(t^{m},q^{m})P(t^{n},1)$. 
	Since $\{t^{n}\}_{n=1}^{\infty}$ is  unbounded there is $m>N$ such that $t^{n}<t^{m}$. 
	By monotonicity of 
	$R$, $(t^n,1)P(t^m,1)$ and thus by transitivity of $R$ we obtain $(t^{m},q^{m})P(t^{m},1)$, this contradicts the definition of $f_{R}$.

	Thus  let $\{t^{n}\}_{n=1}^{\infty}$ be bounded and does not converge. 
	Let $\{t^{n_k}\}_{k=1}^{\infty}$ be a convergent subsequence of $\{t^{n}\}_{n=1}^{\infty}$. We show that the subsequence converges to $t$. Let by way of contradiction 
	$t^{n_k}\rightarrow t^{*}\neq t$. Without loss of generality let $t^{*}<t$. Consider
	$[t^{*}-\epsilon, t^{*}+\epsilon]~\cap~ [t-\epsilon, t+\epsilon]=\emptyset$. 
	For $K$ large enough for all $k\geq K$, $t^{n_k}\in ]t^{*}-\epsilon, t^{*}+\epsilon[$.
	Now $(t^{'n_k},q^{'n_k})I (t^{n_{k}},1)$. By monotonicity of $R$, $(t^{'n_k},q^{'n_k})\in [0, t^{*}+\epsilon]\times [0,1]$. Since $\{(t^{'n},q^{'n})\}_{n=1}^{\infty}$, and thus $\{(t^{'n_k},q^{'n_k})\}_{k=1}^{\infty}$ converges to $(t',q')$ we have  $(t',q')\in [0, t^{*}+\epsilon]\times [0,1]$.  
	By monotonicity of $R$, $(t-\epsilon,1)P (t,1)I(t',q')$. By continuity of $R$ there is $B(\epsilon, (t',q'))$ such that $(t-\epsilon,1)Pz$ for all $z\in B(\epsilon, (t',q'))$.  Here $B(\epsilon,(t',q')) = \{(t,q)\mid \sqrt{(t-t')^{2}+(q-q')^{2}} <\epsilon\}$. Since $(t^{'n},q^{'n}) \rightarrow (t',q')$,  for some $N$ for all $n\geq N $ we have $(t-\epsilon,1)P(t^{'n},q^{'n})$. By monotonicity of $R$, $(t^{'n_k},q^{'n_k})$s are not indifferent to $(t^{n_k},1)$ where $t^{n_k}<t^{*}+\epsilon<t-\epsilon$. This is a contradiction.    
	This is a contradiction. 
	Thus $f_{R}$ is continuous.

	Define $f_{R}$ similarly for all $R$. Then by richness, the single-crossing property and
	the definition of $f_{R}$, $f_{R^n} \rightarrow  f_R$ pointwise.
	To see this let $R^{n}\rightarrow R$ and consider $(t,q)\in \mathbb{Z}$. Without loss of generality
	let $\{R^{n}\}_{n=1}^{\infty}$ be an increasing sequence. 
	Let $f_{R}(t,q)=t'$, i.e., $(t,q)I(t',1)$. Consider $]t'-\epsilon, t']$.
	By richness there is $R^N$ such that $(t,q)I^{N}(t^{N},1)$ for some $t^{N}\in ]t'-\epsilon,t']$.
	By the single-crossing property for all $n\geq N$, $(t,q)I^{n}(t^{n},1)$, where $t^{n}\in ]t'-\epsilon,t']$. 
	Thus for all $n\geq N$, $f_{R^{n}}(t,q)\in ]t'-\epsilon,t']$. This establishes the required pointwise convergence.      
	Also note that if $R'\prec R''$, then $f_{R'}(t,q)<f_{R''}(t,q)$. 
	If $f_{R^n} \rightarrow  f_R$ pointwise,  then without loss of generality we can consider monotone sequences. Then
	by Theorem $7.13$ in \citet{Rudin} $f_{R^n} \rightarrow  f_R$ 
	converges to $f_{R}$ uniformly. This establishes Claim \ref{claim:uniform}. 
	To see that it is enough to consider monotone subsequence of $\{f_{R^{n}}\}_{n=1}^{\infty}$.
	Let $M_{n}=\sup_{x\in K}|f_{R^{n}}(x)-f_{R}(x)|$. For uniform convergence we need to show that 
	$\lim_{n\rightarrow \infty}M_{n}=0$. By the single-crossing property every monotone subsequence of $\{M_{n}\}_{n=1}^{\infty}$ corresponds to a monotone subsequence of $ \{ f_{R^{n}}  \}_{n=1}^{\infty}$ and vice versa. For example, $R'\prec R'' \prec R\iff ~\text{for all}~x\in K, f_{R'}(x)< f_{R''}(x) <f_{R}(x)\iff ~\text{for all}~x\in K, f_{R}(x)-f_{R''}(x)<f_{R}(x)-f_{R'}(x)$. 
	The first equivalence follows because if $R''$ cuts $R'$ from above at some $z\in \mathbb{Z}$, then $R''$ cuts $R'$ from above at every $z\in \mathbb{Z}$.
	Since $\{M_n\}_{n=1}^{\infty}$ converges to $0$ if 
	and only if all its monotone subsequences converge to $0$, the required uniform convergence follows from Theorem $7.13$ in \citet{Rudin}.                
	We require another claim.
	
	\begin{claim}\rm 
		Let $f_{R^n}
		\rightarrow f_R$ uniformly on a compact set $K
		\subseteq \mathbb{Z}$ 
		and $f_{R^n}$ be continuous. Let $\{x^{n}\}_{n=1}^{\infty}\subseteq K$. If $x^n \rightarrow x$,
		then $\lim_{n\rightarrow \infty} f_{R^n}(x^{n})=f_{R}(x) $
		
		\label{claim:uni_converge}
	\end{claim}

	\noindent{\bf Proof of Claim \ref{claim:uni_converge}:} Since $K$ is compact, $x\in K$.  
	Fix $\epsilon>0$. By uniform convergence there is $N_1$ such that
	$\sup_{z\in K}|f_{R^n}(z)-f_R(z)|$
	$<\frac{\epsilon}{2}$ for all $n\geq N_1$.
	By continuity of $f_R$, since uniform convergence preserves continuity, there is $N_2$ such that
	$|f_R(x^n)-f_R(x) |< \frac{\epsilon}{2}$
	for all $n\geq N_2$.  
	Let $N=\max\{N_1,N_2\}$. Now
	$|f_{R^n}(x^n)-f_R(x)|=
	|f_{R^n}(x^n)-f_R(x^n)+ f_R(x^n)- f_R(x)|\leq|f_{R^n}(x^n)-f_R(x^n)|+ |f_R(x^n)- f_R(x)|< \epsilon$ for all $n\geq N$.
	Hence the proof of the claim follows.
	
	Back to the proof of lemma. Since the sequence of bundles converge we can assume that they lie in a compact set $K$.  By Claim \ref{claim:uniform}  and Claim \ref{claim:uni_converge} $f_{R^n}(t^{1n},q^{1n})\rightarrow f_{R}(t^{1},q^{1})$
	and $f_{R^n}(t^{2n},q^{2n})\rightarrow f_{R}(t^{2},q^{2})$. 
	Since $f_{R^n}(t^{1n},q^{1n})=f_{R^n}(t^{2n},q^{2n})$, $f_{R}(t^{1},q^{1})=f_{R}(t^{2},q^{2})$.
	Therefore, $(t^{1},q^{1})I(t^{2},q^{2})$. This establishes the lemma.

\end{proof}

\begin{remark}\rm Being a linear continuum closed intervals $[\underline{R},\overline{R}]$ in $\mathcal{R}^{cms}$ in the order topology are compact. Further, being a linear continuum $\mathcal{R}^{cms}$ has the least upper bound property so that infimum and supremum are well defined. 
\end{remark}

\end{appendix}

\begin{appendix}
\section*{{\bf Appendix 2}} 

\noindent{\bf Proof of Lemma \ref{lemma:mon}:} 	Let $F:\mathcal{R}^{cms}\rightarrow \mathbb{Z}$ be a strategy-proof SCF. Let $R', R''\in \mathcal{R}^{cms}$ be two preferences such that $R'\prec R''$, and $F(R')\neq F(R'')$.
Since $F$ is strategy-proof, $F(R'')R''F(R')$. Thus, 
\begin{equation}\label{nec1}
F(R')\in LC(R'', F(R'')).
\end{equation}
Again by strategy-proofness, $F(R')R' F(R'')$. Thus,
\begin{equation}\label{nec2}
F(R')\in UC(R', F(R'')).
\end{equation}
Combining the two equations \eqref{nec1} and \eqref{nec2}, we get,
\begin{equation}\label{nec3}
F(R')\in LC(R'', F(R''))\cap UC(R', F(R'')).
\end{equation}
Since $R'\prec R''$, by monotonicity of CM preferences 
\begin{equation}\label{nec4}
\begin{split}
[LC(R'', F(R''))\cap UC(R', F(R''))\subseteq \{z| z\leq F(R'')\}].
\end{split}
\end{equation}
From equations \eqref{nec3} and \eqref{nec4} we obtain $F(R')\leq  F(R'')$.
Hence the proof of the lemma follows.

\medskip

\noindent{\bf Proof of Lemma \ref{lemma:preference_preserve}:} We note that $(i)$ is just a rewriting of
the single-crossing condition by using the order on $\mathcal{R}^{cms}$. For $(ii)$, note that
by richness there is $\widetilde{R}\in \mathcal{R}^{cms}$ such that $z'\widetilde{I}z''$. 
If $z'Pz''$, then $R\prec \widetilde{R}$; and if $z''Pz'$, then $\widetilde{R}\prec R$. Now by $(i)$ the proof of the lemma follows.   

\medskip

\noindent{\bf Proof of Lemma \ref{lemma:cont_correspondence}:}Let $F:\mathcal{R}^{cms}\rightarrow \mathbb{Z}$ be strategy-proof. Without loss of generality consider a decreasing sequence $\{R^{n}\}_{n=1}^{\infty}$
that converges to $R$.  	
From Lemma \ref{lemma:mon}, it follows that the sequence $\{F(R^n)\}_{n=1}^{\infty}$ is a decreasing sequence bounded below by $F(R)$. Thus, $\{F(R^n)\}_{n=1}^{\infty}$ converges. Let $z=(t,q)=\lim_{n\rightarrow \infty}F(R^n)$. 
Let by way of contradiction $zIF(R)$ does not hold. Then either $(i)$ $zPF(R)$ or $(ii)$ $F(R)Pz$. 

\noindent {\bf Consider Case $(i)$}: By continuity of $R$ there is an open ball $B(\epsilon,z)=\{z'\in \mathbb{Z}\mid ||z-z'||<\epsilon\}$ such that for all $z'\in B(\epsilon,z)$, $z'PF(R)$. If $z=(t,q), z'=(t',q')$ then $||z'-z||=\sqrt{(t-t')^{2}+(q-q')^{2}}$. Since $\lim_{n\rightarrow \infty }F(R^{n})=z$, there is some positive integer $N$ such that 
$F(R^{N})\in B(\epsilon,z)$ such that $F(R^{N})PF(R)$. This is a contradiction to 
strategy-proofness.

\medskip     

\noindent {\bf Consider Case $(ii)$}: Let $F(R)=(t(R),q(R))$. Since $\{R^{n}\}_{n=1}^{\infty}$ is a decreasing sequence, and $F(R)\neq z$, 
$(t(R),q(R))\leq(t,q)$. By monotonicity of $R$, $t(R)<t$. Note that if $t(R)=t$, then by contradiction hypothesis $q(R)<q$. Then by monotonicity of $R$, $(t,q)P F(R)$, and thus we are in case $(i)$. 
Back to case $(ii)$. By continuity of $R$ there is $B(\epsilon,z)$ such that 
for all $z'\in B(\epsilon,z)$, $F(R)Pz'$. Consider $[t,t']\times [q,q']\subseteq B(\epsilon,z)$.
Note $z=(t,q)\in [t,t']\times [q,q']$. Now $t(R)<t$ and $q(R)<q'$. By richness consider $R^{*}$ such that 
$F(R)I^{*}(t,q')$. By Lemma \ref{lemma:preference_preserve} $R\prec R^{*}$. 
By monotonicity of $R^{*}$, if $z'\neq (t,q'), z'\in [t,t']\times [q,q']$, then $(t,q')P^{*}z'$. 
By By Lemma \ref{lemma:preference_preserve} for all $R'$ such that $R'\prec R^{*}$,
if $z'\neq (t,q'), z'\in [t,t']\times [q,q']$, then $(t,q')P'z'$.
By the single-crossing property $F(R)P'(t,q')$. Since $\lim_{n\rightarrow \infty}F(R^{n})=z$, there is 
$R^{n}\prec R^{*}$ such that $F(R^{n})\in ]t,t'[\times ]q,q'[$. Thus $F(R)P^{n}(t,q')P^{n}F(R^{n})$. 
Hence, $F(R)P^{n}F(R^{n})$ and it is a contradiction to strategy-proofness.     

\begin{remark}\rm This proof goes through for $\mathcal{R}^{rrca}$ if we consider the domain of $F$ to be $[\underline{R},\overline{R}]\subseteq \mathcal{R}^{cms}$ since for our arguments we do not need $R^{*}\in [\underline{R},\overline{R}]$. Since the proof of the monotonicity of $F$ uses only two preferences, this proof holds if the domain of $F$ is $[\underline{R},\overline{R}]$.      
\end{remark}	

\noindent This completes the proof of the lemma. 

\medskip

\noindent{\bf Proof of Theorem \ref{thm:implies_strtagey_proof_finite_range}:} 	Without loss of generality we assume  $\#Rn(F)=6$, 
i.e., we let $Rn(F)=\{e, c,a,b,d,f\}$ and $e<c<a<b<d<f$. Assuming $\#Rn(F)=6$ is without loss of generality because the argument for the bundles that are non-extreme such as $a,b$ is independent of the number of elements in $Rn(F)$. Also the argument is same for the extreme bundles such as $e$ or $f$. 
We first prove that $F$ is locally strategy-proof in range, it is enough to prove this for the bundles $a$ and $b$.

\begin{claim} \rm ({\bf Local Strategy-proofness in range}) Let $R'\prec  R''$, $F(R')=a$ and $F(R'')=b$. Then $aR'b$ and $bR''a$.
	\label{claim:ab}
\end{claim}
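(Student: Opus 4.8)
The plan is to locate the single boundary preference $R_1$ separating the preimage of $a$ from that of $b$, show that $a$ and $b$ lie on a common indifference curve of $R_1$ (i.e.\ $aI_1b$), and then read off both inequalities from Lemma \ref{lemma:preference_preserve}.

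First I would set $A=\{R\mid F(R)=a\}$ and $B=\{R\mid F(R)=b\}$; both are nonempty since $R'\in A$ and $R''\in B$. Because $a<b$ are adjacent in $Rn(F)$, monotonicity of $F$ (Definition \ref{defn:mon}) forces $L=\{R\mid F(R)\le a\}$ to be a lower set and $\{R\mid F(R)\ge b\}$ to be the complementary upper set, with every element of the former lying $\prec$-below every element of the latter. Using that $\mathcal{R}^{cms}$ is a linear continuum with the least upper bound property (Theorem \ref{thm:lin_cont}), I define $R_1:=\sup_{\prec}A$, which exists and, since $a=\max\{e,c,a\}$ among the range bundles $\le a$, is exactly the cut point $\sup_{\prec}L$; one checks $\inf_{\prec}B=R_1$ as well.

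The core step is to prove $aI_1b$, and this is where I would lean on continuity of $V^F$ (Definition \ref{defn:cont}) together with finiteness of $Rn(F)$. I would approach $R_1$ by a monotone sequence from the side on which $R_1$ does not sit: if $R_1\notin A$, take $R^n\uparrow R_1$ with $R^n\in A$ (available since $R_1=\sup_{\prec}A\notin A$ in a dense order); if $R_1\in A$, take $S^n\downarrow R_1$ with $S^n\in B$. In either case the relevant image sequence is monotone and lies in the finite set $Rn(F)$, hence eventually constant, and continuity of $V^F$ yields $aI_1F(R_1)$ (resp.\ $bI_1F(R_1)$). The delicate point, and the main obstacle, is to show the limiting value of $F$ on the chosen side is exactly $b$ (resp.\ $a$) rather than overshooting to $d$ or $f$: I rule this out by monotonicity, since a boundary preference mapping to a bundle strictly above $b$ would have to sit $\prec$-above all of $A$ \emph{and} $\prec$-above all of $B$, contradicting $\sup_{\prec}A=\inf_{\prec}B=R_1$. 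Combining the two sides gives both $aI_1F(R_1)$ and $bI_1F(R_1)$, whence $aI_1b$ by transitivity of $I_1$.

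Finally I would invoke Lemma \ref{lemma:preference_preserve}$(i)$. Since $a<b$ and $aI_1b$, we obtain $aP^{*}b$ for every $R^{*}\prec R_1$ and $bP^{*}a$ for every $R^{*}$ with $R_1\prec R^{*}$. As $R'\in A$ gives $R'\precsim R_1$ and $R''\in B$ gives $R_1\precsim R''$, in all cases---including the equality cases, where $aI_1b$ is applied directly---we get $aR'b$ and $bR''a$, which is the claim.
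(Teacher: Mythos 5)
Your proposal is correct, but it is organized quite differently from the paper's own proof, and the difference is worth spelling out. The paper argues by contradiction: assuming (WLOG) $aP''b$, it defines $R^{0}=\sup_{\prec}S$ and $R_{0}=\inf_{\prec}T$, shows $R^{0}=R_{0}$ and $F(R_{0})\in\{a,b\}$, transfers the hypothesized strict preference to the boundary via part $(ii)$ of Lemma \ref{lemma:preference_preserve} to get $aP_{0}b$, and then uses a constant-valued monotone sequence plus continuity of $V^{F}$ to force an indifference that contradicts $aP_{0}b$; the boundary indifference $aI_{0}b$ itself is only established afterwards, in Claim \ref{claim:indiff}, via richness and Claim \ref{claim:ab}. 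You instead run the argument forwards: the same sup/inf coincidence and the same sequence-plus-continuity device give you $aI_{1}b$ directly (no contradiction hypothesis needed), and then part $(i)$ of Lemma \ref{lemma:preference_preserve}, together with $R'\precsim R_{1}\precsim R''$, yields both $aR'b$ and $bR''a$, including the equality cases. This is a genuine reorganization rather than a new method --- the ingredients (least upper bound property, monotone sequences, continuity of $V^{F}$, single-crossing via Lemma \ref{lemma:preference_preserve}) are identical --- but your route is more economical: it proves the content of Claim \ref{claim:indiff} en route and for free, uses only part $(i)$ of the lemma, and avoids the case split on the contradiction hypothesis. Two points you compress should be flagged as real steps: the assertion ``one checks $\inf_{\prec}B=R_{1}$'' is exactly the paper's Step $2$ and does require the richness/density argument (your Dedekind-cut framing of $L$ and its complement makes it go through cleanly, since a preference strictly between $\sup_{\prec}L$ and $\inf_{\prec}U$ would lie in neither set of a partition); and ruling out $F(R_{1})\notin\{a,b\}$ needs the symmetric ``undershooting'' case ($F(R_{1})<a$ contradicting $R_{1}=\sup_{\prec}A$) as well as the overshooting one you state explicitly.
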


\noindent{\bf Proof of Claim \ref{claim:ab}:}
Let by way of contradiction the claim is false. 
Let without loss of generality $aP''b$. 
Define $S=\{R \in \mathcal{R}^{cms}\mid F(R)=a\}$ and $T=\{R \in \mathcal{R}^{cms}\mid F(R)=b\}$. 
Let $R^0$ and $R_0$ be the supremum and the infimum of the sets $S$ and $T$ respectively under the ordering $\prec$. That is, 
$R^0=\sup_{\prec}\{R\mid F(R)=a\}$ and
$R_0=\inf_{\prec}\{R\mid F(R)=b\}$. 
We note that $R'\in S$ and $R''\in T$. Further $R'\precsim R^{0}$ and 
$R_{0}\precsim R''$.
We prove the claim in several steps.

\noindent{\bf Step $1$:}  The following statements hold: $(a)$ $R^0\precsim R''$ , 
$(b)$ $R'\precsim R_0$.

\noindent{\bf Proof of Step $1$: } 
To establish statement (a), let by way of contradiction $R''\prec R^0$. 
By monotonicity of $F$, $F(R'')=b\leq F(R)$ for all $R$ such that $R''\prec R$.       
Thus, $R''$ is an upper bound of $S$. Therefore, if $R''\prec R^{0}$, 
then $R^{0}$ is not the supremum of $S$. This is a contradiction.  
To establish $(b)$ let by way of contradiction $R_{0}\prec R'$. By monotonicity of $F$, 
$F(R)\leq a=F(R')$ for all $R$ such that $R\prec R'$.  Thus $R'$ is a lower bound of $T$.
Therefore, if $R_{0}\prec R'$, then $R_{0}$ is not the infimum of $T$.

\medskip

\noindent{\bf Proof of Step $2$:} $R_{0}=R^{0}$. 

\noindent{\bf Proof of Step $2$:} 
Let by way of contradiction  $R_{0}\neq R^{0}$. 
Thus either Case $(i)$ $R^{0}\prec R_{0}$ or Case $(ii)$ $R_{0}\prec R^{0}$. 

\medskip

\noindent Case $(i):$ Suppose $R^{0}\prec R_{0}$. Since $\mathcal{R}^{cms}$ is rich there exists a preference $\widehat{R}$ such that $R^0\prec\widehat{R}\prec R_0$. 
Since $R^{0}$ is the supremum of $S$, $F(R^{0})\nless a$, and since $R_{0}$ is the infimum of $T$,
$F(R_{0})\ngtr b$.  
Thus by monotonicity of $F$, $F(R^{0})\in \{a,b\}$ and  
$F(R_{0})\in \{a,b\}$. Again by monotonicity of $F$, $F(\widehat{R})\in \{a,b\}$. 
Let $F(\widehat{R})=a$, and then $\widehat{R}\in S$.
Since $ R^0\prec \widehat{R}$, $R^0$ is not the supremum of $S$. 
This is a contradiction.  
Alternatively, let $F(\widehat{R})=b$. 
Since $ \widehat{R}\prec R_0$, $R_0$ is not the infimum of $T$.
This is a contradiction.  
This establishes that Case $(i)$ cannot occur.

\medskip

\noindent Case $(ii):$ Suppose $ R_0\prec R^0 $. Since $\mathcal{R}^{cms}$ is rich there exists a preference $\widehat{R}$ such that $R_0\prec\widehat{R}\prec R^0$. We show $F(\widehat{R})\in \{a,b\}$.
If $F(\widehat{R})<a$, then by monotonicity of $F$, $F(R)< a$ for all $R\prec \widehat{R}$. Since $F(R')=a$, $R'$ is a lower bound on $T$. Further, $F(R_{0})<a=F(R')$, thus by monotonicity of $F$, $R_{0}\prec R'$. Thus $R_{0}$ is not the infimum of $T$, which is a contradiction.    
If $F(\widehat{R})>b$, 
by monotonicity of $F$, $F(R^{0})>b$. Since $F(R'')=b$, $R''$ is an upper bound 
on $S$. Further $b=F(R'')<F(R^{0})$, thus by monotonicity of $F$, $R''\prec R^{0}$.   
Thus $R^{0}$ is not the supremum of $S$, which is a contradiction.  
Therefore, $F(\widehat{R})\in \{a,b\}$.

If $F(\widehat{R})=a$, then by monotonicity of $F$ if $F(R)=b$, then $\widehat{R}\prec R$. 
Further, $F(R)\leq a$ for all $R$ such that $R\prec \widehat{R}$. Thus $\widehat{R}$ is a lower bound of $T$, and therefore   
$R_{0}$ is not the infimum of $T$. 
This is a contradiction.     
Now let $F(\widehat{R})=b$. Monotonicity of $F$ implies $F(R)\geq b$ for all $R$ such that $\widehat{R}\prec R$. Further, by monotonicity of $F$, $F(R)=a$, implies $R\prec \widehat{R}$. Thus $\widehat{R}$ is an upper bound of $S$. Therefore, $R^{0}$ is not the supremum of $S$. This is a contradiction.     
This implies neither Case $(i)$ nor Case $(ii)$ hold. Consequently $R^0=R_0$.

\noindent{\bf Step $3$:} Now we complete the proof of the claim.  
We show $F(R_0)=F(R^0)\in \{a,b\}$. 
If $F(R^{0})<a$, then $R^{0}\prec R'$ because $F(R')=a$ and $F$ is monotone. Thus $R^{0}$ is not the supremum of $S$.
If $F(R^{0})>b$, then $R''\prec R_{0}$ because $F(R'')=b$ and $F$ is monotone. Thus $R_{0}$ is not the infimum of $T$. Thus both cases lead to a contradiction. Thus we have $F(R_0)=F(R^0)\in \{a,b\}$.

\noindent We consider two cases. 
We have $a< b$, and by the contradiction hypothesis $aP''b$.

\noindent Case $(i):$ $F(R_0)=F(R^0)=a$. Since $F(R'')=b$, by monotonicity of $F$ it follows that $R_0\prec R''$. 
Hence by Lemma \ref{lemma:preference_preserve} $aP_0b$.
We will now show that this leads to a contradiction.
Consider a sequence $\{R^n\}_{n=1}^{\infty}$ such that for all $n$, $(i)$ $F(R^n)=b$ and $(ii)$ $R_0\prec R^{n+1}\prec R^{n}\prec R''$, $(iii)$ $R^n\rightarrow R_0$.
Since $\mathcal{R}^{cms}$ is metrizable, by Lemma \ref{lemma:closure} and richness of the domain such a sequence exists. Now $\lim_{n\rightarrow \infty}F(R^n)=b$. Thus by continuity of $V^F$, $b=\lim_{n\rightarrow \infty}F(R^n)I_0F(R_0)$. Since $F(R_0)=a$, and $aP_0b$ this is a contradiction. 

\medskip

\noindent Case $(ii):$ Suppose instead $F(R_0)=F(R^0)=b$. Since $R_{0}=\inf_{\prec}T$, $R_{0}\precsim R''$. 
Hence by Lemma \ref{lemma:preference_preserve} $aP_0b$. Also by monotonicity of $F$, $R'\prec R_0$
Consider a sequence $\{R^n\}_{n=1}^{\infty}$ such that for all $n$, $(i)$ $F(R^n)=a$ and $(ii)$ $R'\prec R^{n}\prec R^{n+1}\prec R_{0}$, $(iii)$ $ R^n\rightarrow R_0$.
Now $\lim_{n\rightarrow \infty}F(R_n)=a$. Again  by continuity of $V^F$, $a=\lim_{n\rightarrow \infty}F(R^n)I_0F(R_0)=b$. Since $F(R_0)=b$ and $aP_0b$, it contradicts the continuity of $V^{F}$.

\noindent An analogous argument leads to a contradiction if $bP'a$. Thus Claim \ref{claim:ab} is established. 

\begin{remark}\rm In the proof of Claim \ref{claim:ab} we have used preferences 
	in $[R',R'']$. Thus we can use this argument if the domain of $F$ were a closed interval 
	$[\underline{R},\overline{R}]$.  	     
\end{remark}

\begin{claim}\rm $aI_{0}b$. 
	\label{claim:indiff}
\end{claim}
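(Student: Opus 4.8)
The plan is to read the indifference $aI_0b$ directly off the continuity of $V^F$, once the switching preference has been identified. First I would record that Steps $1$--$3$ in the proof of Claim \ref{claim:ab} never actually invoked the contradiction hypothesis $aP''b$: they used only monotonicity of $F$, the least upper bound property of $\mathcal{R}^{cms}$, and richness. Hence those steps establish \emph{unconditionally} that, with $S=\{R\mid F(R)=a\}$ and $T=\{R\mid F(R)=b\}$, the supremum $R^0=\sup_{\prec}S$ equals the infimum $R_0=\inf_{\prec}T$, and that $F(R_0)=F(R^0)\in\{a,b\}$. I would restate these two facts at the top and then split into the two cases $F(R_0)=a$ and $F(R_0)=b$.

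In the case $F(R_0)=a$, the set $T$ does not contain $R_0$ (since $F(R_0)=a\neq b$), yet $R_0=\inf_{\prec}T$. By Lemma \ref{lemma:closure} we have $R_0\in cl(T)$, so $R_0$ is a limit point of $T$; since $\mathcal{R}^{cms}$ is metrizable and every element of $T$ strictly exceeds $R_0$, I can extract a monotone decreasing sequence $\{R^n\}_{n=1}^{\infty}\subseteq T$ with $R^n\downarrow R_0$. Along it $F(R^n)=b$ for every $n$, so $\lim_{n\to\infty}F(R^n)=b$, and continuity of $V^F$ (Definition \ref{defn:cont}) yields $b\,I_0\,F(R_0)=a$, i.e. $aI_0b$. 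The case $F(R_0)=b$ is symmetric: now $R_0\notin S$ while $R_0=\sup_{\prec}S$, so $R_0\in cl(S)$ provides a monotone increasing sequence $R^n\uparrow R_0$ with $F(R^n)=a$, and continuity of $V^F$ gives $a\,I_0\,F(R_0)=b$.

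The only delicate point, and the place I would argue most carefully, is the existence of a monotone approximating sequence lying entirely in the \emph{other} set, since Definition \ref{defn:cont} only controls limits along monotone sequences. This is exactly where Lemma \ref{lemma:closure} (the infimum/supremum sits in the closure), richness (there are preferences arbitrarily close to $R_0$ on the required side), and metrizability from Theorem \ref{thm:lin_cont} (so closure membership yields an honest convergent sequence) all enter. I note that this is essentially the same sequence construction used at the end of the proof of Claim \ref{claim:ab}; the difference is that here, with local strategy-proofness in range already secured and no contradiction hypothesis in force, the limiting indifference is the conclusion I want rather than a contradiction to be exhibited.
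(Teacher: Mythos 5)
Your proof is correct, but it takes a genuinely different route from the paper's. You correctly observe that the two facts you need, $R^0=R_0$ and $F(R_0)=F(R^0)\in\{a,b\}$, are obtained in the proof of Claim \ref{claim:ab} using only monotonicity of $F$, richness, and the least upper bound property; the contradiction hypothesis $aP''b$ enters only in the final case analysis of Step $3$ (so your phrase ``Steps $1$--$3$ never invoked it'' is slightly loose, but the two facts you actually use are indeed unconditional, and the paper itself treats them as such by citing Step $2$ and re-deriving $F(R_0)\in\{a,b\}$ inside its own proof of this claim). From there the two arguments diverge. The paper never touches continuity of $V^F$ in this claim: it uses richness to produce the preference $R$ with $aIb$, and then, assuming $aP_0b$ (resp.\ $bP_0a$), it runs a purely ordinal argument --- Claim \ref{claim:ab}, monotonicity of $F$, and Lemma \ref{lemma:preference_preserve} --- showing that every preference strictly between $R'$ and $R$ (or $R''$) is mapped to $a$ (resp.\ $b$), so that $R_0$ could not be $\inf_{\prec}T$ (resp.\ $\sup_{\prec}S$); single-crossing then forces $R_0$ to coincide with that $R$. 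You instead rerun the monotone-sequence construction from the tail of Claim \ref{claim:ab}'s proof (Lemma \ref{lemma:closure}, metrizability, sequence lying inside $T$ or $S$) but with no contradiction hypothesis in force, and read $aI_0b$ straight off Definition \ref{defn:cont}. Your route is shorter, avoids the sub-case analysis over $aI''b$ versus $bP''a$ and its mirror image, works verbatim on a closed interval $[\underline{R},\overline{R}]$ since your sequences lie in $S$ and $T$, and makes clear that this step requires no direct appeal to the single-crossing lemma --- indeed, combined with Lemma \ref{lemma:preference_preserve}, your version of the claim would immediately re-deliver Claim \ref{claim:ab}. What the paper's route buys is the explicit identification of $R_0$ with the unique richness-produced preference at which $aIb$ --- the ``special preference'' later used to write down the mechanism --- and it confines all use of continuity of $V^F$ to Claim \ref{claim:ab}, keeping the present claim purely ordinal.
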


\noindent{\bf Proof of Claim \ref{claim:indiff}:}  In Step $2$ in the proof of Claim \ref{claim:ab} we have obtained $R_{0}=R^{0}$. We establish that $aI_{0}b$. 
By the richness of $\mathcal{R}^{cms}$ there is $R$ such that $aIb$.

\begin{remark}\rm By Claim \ref{claim:ab} and Lemma \ref{lemma:preference_preserve}
	$R'\precsim R\precsim R''$. Thus we can use this argument if the domain of $F$ were a closed interval 
	$[\underline{R},\overline{R}]$.	  
\end{remark}

\noindent We show that $R=R_{0}=R^{0}$. First we show $F(R_{0})=F(R^{0})\in \{a,b\}$. If $F(R^{0})<a$, then $R^{0}$ is not the supremum of $S$ since $F(R')=a$ and thus $R^{0}\prec R'$ by monotonicity of $F$. If $F(R_{0})>b$, then $R_{0}$ is not the infimum of $T$ since $F(R'')=b$ and thus $R''\prec R_{0}$ by monotonicity of 
$F$. Therefore  $F(R^{0})<a$ and $F(R_{0})>b$ lead to a contradiction. Thus $F(R_{0})=F(R^{0})\in \{a,b\}$.

Let by way of contradiction $aP_{0}b$. Then by Claim \ref{claim:ab}, $F(R_{0})=a$. Since $aIb$, by Lemma \ref{lemma:preference_preserve} $R'\precsim R_{0}\prec R$. Since $R_{0}$ is the supremum of $S$, 
$F(R)\geq b$. By Claim \ref{claim:ab} $bR'' a$.
If $aI''b$, then by the single-crossing property $R=R''$. Therefore, by the single-crossing property or by Lemma \ref{lemma:preference_preserve} 
for all $R^*\in [R',R''[$, $aP^{*}b$. By Claim \ref{claim:ab} and monotonicity of $F$, $F(R^{*})=a$. Thus $R''=\inf_{\prec}T$. Since $aP_{0}b$, $R_{0}\neq R''$ and thus $R_{0}$ is not the infimum of $T$. This is a contradiction.     
Let $bP''a$. Then by the single crossing-property $R'\prec R\prec R''$. Since $F(R)\geq b$, by Claim \ref{claim:ab} and monotonicity of $F$, 
$F(R)=b$. By the single-crossing property for all $R^{*}\in [R',R[$, $aP^{*}b$. Thus by Claim \ref{claim:ab} and monotonicity of $F$, $F(R^{*})=a$. Since $aP_{0}b$, $R_{0}\neq R$ and thus $R_{0}$ is not the infimum of $T$. This is a contradiction.

Let by way of contradiction $bP_{0}a$.Then by Claim \ref{claim:ab}, $F(R_{0})=b$. Then $R\prec R_{0}\precsim R''$ where $aIb$. Since $R_{0}$ is the infimum of $T$, 
$F(R)\leq a$. By Claim \ref{claim:ab} $aR' b$. If $aI'b$, then by the single-crossing property $R=R'$. 
Then by the single-crossing property for all $R^{*}\in ]R',R'']$, $bP^*a$. By Claim \ref{claim:ab} and monotonicity of $F$, $F(R^{*})=b$. Thus $R'=\sup_{\prec} S$. Since $bR_{0}a$ and $R_{0}\neq R'$, $R_{0}$ is not the supremum of $S$. This is a contradiction. Let $aP'b$. Then by the single-crossing property 
$R'\prec R \prec R''$. Since $F(R)\leq a$ by Claim \ref{claim:ab} and monotonicity of $F$, $F(R)=a$.
Then, by the single-crossing property for all $R^{*}\in ]R,R'']$, $bP^{*}a$. Thus by Claim \ref{claim:ab} and monotonicity of $F$, $F(R^{*})=b$. Since $bP_{0}a$, $R_{0}\neq R$ and thus $R_{0}$ is not the supremum of $S$. This is a contradiction. 
Thus Claim \ref{claim:indiff} is established. 	

\begin{claim}\rm Let $F(R_{e})=e, F(R_{c})=c, F(R_{a})=a, F(R_{b})=b, F(R_{d})=d, F(R_{f})=f$. Then 
	$xR_{x}y$ for all $x,y\in Rn(F)$. 
	
	\label{claim:sp_all_bundles}	
\end{claim}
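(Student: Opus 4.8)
The plan is to promote the local conclusions of Claims \ref{claim:ab} and \ref{claim:indiff} to the global statement by repeatedly invoking Lemma \ref{lemma:preference_preserve} along the chain of special preferences, and then closing up with transitivity of each $R_x$.

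First I would apply Claim \ref{claim:indiff} to each of the five adjacent pairs $(e,c),(c,a),(a,b),(b,d),(d,f)$ to obtain a special preference at every junction; denote them $S_1,S_2,S_3,S_4,S_5$, so that $e\,I_{S_1}\,c$, $c\,I_{S_2}\,a$, $a\,I_{S_3}\,b$, $b\,I_{S_4}\,d$ and $d\,I_{S_5}\,f$. Monotonicity of $F$ forces $S_1\prec S_2\prec S_3\prec S_4\prec S_5$, and for each bundle $x$ the set $\{R\mid F(R)=x\}$ is an interval flanked by the two special preferences on either side of $x$ (only one of them for the extreme bundles $e,f$). Consequently any $R_x$ with $F(R_x)=x$ satisfies $S^{-}\precsim R_x\precsim S^{+}$, where $S^{-}$ and $S^{+}$ are the special preferences immediately below and above $x$.

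Next I would prove the two neighbour inequalities directly from Lemma \ref{lemma:preference_preserve}. Taking the representative case $x=a$ with $S_2\precsim R_a\precsim S_3$: since $a<b$ and $a\,I_{S_3}\,b$, the first part of Lemma \ref{lemma:preference_preserve} gives $a\,P_a\,b$ when $R_a\prec S_3$ and $a\,I_a\,b$ when $R_a=S_3$, so $a\,R_a\,b$ in all cases; symmetrically, from $c<a$, $c\,I_{S_2}\,a$ and $S_2\precsim R_a$ one obtains $a\,R_a\,c$. I would then extend to the non-adjacent bundles by telescoping. Because $R_a\precsim S_3\prec S_4$, Lemma \ref{lemma:preference_preserve} applied to $b<d$ with $b\,I_{S_4}\,d$ yields the strict relation $b\,P_a\,d$, and transitivity with $a\,R_a\,b$ gives $a\,P_a\,d$; iterating through $S_5$ gives $a\,P_a\,f$. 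Going downward, $S_1\prec S_2\precsim R_a$ together with $e\,I_{S_1}\,c$ gives $c\,P_a\,e$, and transitivity with $a\,R_a\,c$ gives $a\,P_a\,e$. Hence $a\,R_a\,y$ for every $y\in Rn(F)$, and exactly the same scheme — weak preference over the two immediate neighbours, strict preference over everything farther away through a transitive chain — applies to each $x$, including the one-sided arguments for the extreme bundles $e$ and $f$. This yields $x\,R_x\,y$ for all $x,y\in Rn(F)$.

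The hard part will be the bookkeeping of orientations: at every step one must check that the special preference governing the intermediate comparison lies \emph{strictly} on the correct side of $R_x$, so that Lemma \ref{lemma:preference_preserve} returns a strict preference over the intermediate bundle rather than mere indifference. Indifference is permitted only at the single junction adjacent to $x$, where $R_x$ may coincide with $S^{-}$ or $S^{+}$; everywhere else the strict ordering $S_1\prec S_2\prec S_3\prec S_4\prec S_5$ guarantees the required strictness, which is precisely what lets the transitive chain collapse to $x\,R_x\,y$ instead of stalling at an indifference.
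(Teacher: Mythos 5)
Your skeleton is the same as the paper's (a special preference at each junction, an ordering of these special preferences, neighbour inequalities from Lemma \ref{lemma:preference_preserve}, then telescoping via transitivity), but one step you lean on is false: monotonicity of $F$ does \emph{not} force the strict chain $S_1\prec S_2\prec S_3\prec S_4\prec S_5$; it only gives $S_1\precsim S_2\precsim\cdots\precsim S_5$, which is exactly the weak chain the paper establishes for its $R_1,\ldots,R_5$. Adjacent special preferences can coincide: if the three bundles $c<a<b$ all lie on a single indifference curve of some $R^*$, then $S_2=S_3=R^*$ and $F(R^*)=a$. This configuration is fully consistent with monotonicity of $F$ and continuity of $V^F$ --- the paper explicitly allows it in the discussion following Theorem \ref{thm:implies_strtagey_proof_finite_range} (``all three bundles fall on a single indifference curve \ldots by the single-crossing property $F(R)=b$''), and it is realized, for instance, by quasilinear preferences $\theta q-t$ with range $\{(0,0),(0.3,0.3),(0.7,0.7)\}$, all indifferent under $\theta=1$. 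Since your closing paragraph says this strictness is ``precisely what lets the transitive chain collapse,'' your proof as written breaks at exactly such configurations: with $S_3=S_4$ you cannot invoke ``$R_a\precsim S_3\prec S_4$'' to conclude $b\,P_a\,d$.

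The gap is local and can be repaired in either of two ways, but you must supply one of them. (i) Strictness does hold between \emph{non-adjacent} junctions: $F(S_i)$ lies in the pair of bundles meeting at junction $i$ (this is part of Claim \ref{claim:indiff}), and for $j\geq i+2$ the two pairs are disjoint, so $S_i\neq S_j$ and hence $S_i\prec S_j$; similarly, if $S_3=S_4$ then $F(S_3)\in\{a,b\}\cap\{b,d\}=\{b\}$, so $R_a\neq S_3$ and therefore $R_a\prec S_4$ after all, restoring $b\,P_a\,d$. (ii) Simpler, and sufficient because the claim asserts only the \emph{weak} relation $x\,R_x\,y$: an indifference never stalls the chain. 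At every junction, Lemma \ref{lemma:preference_preserve} returns either strict preference (when $R_x$ lies strictly on one side) or indifference (when $R_x$ coincides with that special preference), and in both cases the weak relation propagates by transitivity --- e.g.\ $a\,R_a\,b$ and $b\,I_a\,d$ still give $a\,R_a\,d$. With the ordering statement corrected to $\precsim$ and the telescoping run with weak relations (or with observation (i)), your argument goes through and is then essentially the paper's proof.
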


\noindent{\bf Proof of Claim \ref{claim:sp_all_bundles}:} By Claim \ref{claim:ab} 
$eR_{e}c$, $cR_ce,cR_ca, aR_ac,aR_ab, bR_ba,bR_bd, dR_da, dR_df, fR_fd$. 
Recall that $e<c<a<b<d<f$. Thus 
$F$ is locally strategy-proof in range. 
Now we show that $F$ is strategy-proof. From Claim \ref{claim:indiff} we have $R_{k},k=1,\ldots,5$ such that 
$eI_{1}c, F(R_1)\in \{e,c\}; cI_2a, F(R_{2})\in \{c,a\}; aI_{3}b, F(R_{3})\in \{a,b\}; bI_{4}d, F(R_{4})\in \{b,d\}; dI_{5}f, F(R_{5})\in \{d,f\}$. 
Further, $R_{1}\precsim R_{2}\precsim R_3 \precsim R_4  \precsim R_5$. 
Suppose by way of contradiction $R_{2}\prec R_1$. Since $eI_1c$, by the single-crossing property or by  Lemma \ref{lemma:preference_preserve} $ e P_{2} c$.
By Claim \ref{claim:ab}, $F(R_{2})\neq c$. That is if $F(R_2)=c$, then local strategy-proofness in range is violated. If $F(R_2)>c$, then monotonicity of $F$ is violated since $F(R_{1})\in \{e,c\}$.
Thus $R_{1}\precsim R_2$. Let $R_{3}\prec R_{2}$. Since $cI_2a$, by the single-crossing property 
$cP_3a$. By Claim \ref{claim:ab}, $F(R_{3})\neq a$. Thus if $F(R_3)=b$, then monotonicity of $F$ is violated because $F(R_2)\in \{c,a\}$. Thus $R_1\precsim R_2\precsim R_3$. By induction $R_1\precsim R_2\precsim R_3  \precsim R_4\precsim R_5$.

We show that $F$ restricted to $\{R_{i}\mid i=1,\ldots,5 \}$ is strategy-proof. 
We have $F(R_{1})\in \{e,c\}$ and $eI_1c$. Let $R_{1}\prec R_{2}$. 
We have $cI_{2}a$. By the single-crossing property or by Lemma \ref{lemma:preference_preserve} $cP_{1}a$.
Further we have $R_1\prec R_{3}$ and $aI_{3}b$. By the single-crossing property $aP_{1}b$. 
Thus by transitivity of $R_1$, $cP_{1}b$. 
In this way we obtain $eI_{1}cP_{1}x, x\notin \{e,c\}$.  
The arguments for $R_{2}$ are the same. 
Consider $R_{3}$. Let $R_{2}\prec R_{3}$. 
Since $cI_{2}a$, by the single-crossing property $aP_{3}c$.  
Since $eI_{1}c$, by the single-crossing property $cP_{3}e$. Thus by transitivity of $R_3$ $aP_{3}e$. The rest of the arguments are similar.   

Now we construct the strategy-proof mechanism.    
Consider $R\prec R_{1}$. By monotonicity of $F(R)\leq F(R_1)\leq c$.
By the single-crossing property $ePc$. 
Thus by Claim \ref{claim:ab}, $F(R)=e$.   
Since $R_{1}\precsim R_{2}$, it follows that $R\prec R_{2}$.
Since $cI_{2}a$ by the single-crossing property $cPa$. Thus by transitivity of $R$, $ePa$. Continuing this argument finitely many times we obtain $ePx$ for all $x\neq e$ and $x\in Rn(F)$. Thus for $R\prec R_{1}$, $F$ is strategy-proof.      

Now consider $R\in ] R_1,R_2[$.  By monotonicity of $F$, $F(R)\in \{e,c,a\}$. Since $eI_1c$, by the single-crossing property $cPe$. Since $cI_{2}a$, by the single-crossing property $cPa$. 
Thus by Claim \ref{claim:ab}, $F(R)=c$. Since $R\prec R_{3}$, and $aI_{3}b$, by the single-crossing property
$aPb$. Then $cPa$ implies $cPb$. This argument can be used finite number of times to show that $cPx$ for all $x \in Rn(F)$ such that $x\neq c$. Now it follows that $F$  restricted to $]\infty,R_{2}]$ is strategy-proof and the restriction is defined below,

$$F(R) = \begin{cases}
e, & \text{if $R \prec R_{1}$;}\\
\text{either}~ e~\text{or}~ c, & \text{if $R=R_1$}\\
c & \text{if $R_1\prec R \prec R_2$}\\
\text{either}~c~\text{or}~a & \text{if $R=R_2$.}
\end{cases}$$

\noindent For the sake of completion we extend the argument for $R\in ]R_2,R_3[$. Since $aI_{3}b$, by the single-crossing property for $R \prec R_3$, $aPb$. Since $cI_2a$ and
$R_{2}\prec R$, $aPc$. The continuation of this argument entails $aPx$ for all $x\neq a$ and $x\in Rn(F)$. Thus the general $F$ is defined as follows.

$$F(R)=\begin{cases}
e, & \text{if $R \prec R_{1}$;}\\
\text{either}~ e~\text{or}~ c, & \text{if $R=R_1$}\\
c & \text{if $R_1\prec R \prec R_2$}\\
\text{either}~c~\text{or}~a & \text{if $R=R_2$}\\
a & \text{if $R_2\prec R\prec R_3$}\\
\text{either}~a~\text{or}~b & \text{if $R=R_3$}\\
b & \text{if $R_3\prec R\prec R_4$}\\
\text{either}~b~\text{or}~d & \text{if $R=R_4$}\\
d & \text{if $R_4\prec R\prec R_5$}\\
\text{either}~d~\text{or}~f & \text{if $R=R_5$}\\
f & \text{if $R_5\prec R$.}
\end{cases}$$

\noindent To complete the proof we note $\#\{x\in Rn(F)\mid y\in Rn(F), xR_iy~ \text{for some}~R_{i}\in \{R_{1},\ldots, R_5 \}\}\leq 3$. This says that the special preferences $R_i$s, i.e., the preferences that are indifferent between two bundles in the range,   can be indifferent with at most one more bundle. To see this without loss of generality consider $R_{3}$, we have $aI_{3}b$. Let $a<x<b$, such that $aI_{3}bI_{3}x$ and $x\in Rn(F)$ . By the single-crossing property $aPx$ for $R\prec R_{3}$, and $bPx$ for $R_{3}\prec R$. 
By Claim \ref{claim:ab}, $F(R)\neq x$ for $R\neq R_{3}$. Thus $F(R_{3})=x$.
If there is any other $y\in Rn(F)$ with $a<y<b$, then $F(R_{3})=y$. Since $F$ is a function $x=y$.    

\begin{remark}\rm This constriction of $F$ holds for interval $[\underline{R},\overline{R}]$, since the construction is unaffected if there are no preferences $R$ such that $R\prec R_1$ or $R_{5}\prec R$.

\end{remark}

\noindent This completes a proof of the theorem. 

\medskip

\medskip

\noindent{\bf Proof of Lemma \ref{lemma:measurable}:} Let $Rn(F)=\{(t^i,q^i),\mid i=1,\ldots ,n\}$. For any $t^i$, 
$t^{-1}(\{t^{i}\})$ is an interval in $\mathcal{R}^{cms}$.  All intervals are in $\mathcal{B}$. Let $B\in B(\Re)$. Then $t^{-1}(B)=\cup_{j_i\in B}t^{-1}(\{t^{j_i}\})\in \mathcal{B}$. The proof for $q$ is analogous. Hence the lemma follows.

\medskip

\noindent{\bf Proof of Lemma \ref{lemma:soln_exists}:} In Step $1$ we show that the objective function is continuous, and then in Step $2$ we show that the constraint set is compact. Then from Step $1$ and $2$ the existence of maximum follows. For the sake of simplicity of notations, in this proof we denote preferences by $S$ and $ R$. 	  

\noindent{\bf Step $1$: The objective function is continuous} Consider the function: $M:[\underline{R},\overline{R}]\times [\underline{R},\overline{R}]\rightarrow [0,1] $ defined by $M(S,R)=\mu([S,R])$ if  $S\precsim R$ and $M(S,R)=\mu([R,S])$ if $R \prec S$. We show that this function is continuous, which follows because $\mu$ is continuous. To see this let $S^{n},R^{n}$ converge to $S,R$. 
Let without loss of generality $S\precsim R$, 
i.e., we have the interval $[S,R]$. Let $S\prec R$. Since $[\underline{R},\overline{R}]\times [\underline{R},\overline{R}]$ is metrizable, 
it is enough to show $\lim_{n\rightarrow \infty}M(S^{n},R^{n})=M(S,R)$, where $\lim_{n\rightarrow \infty}S^{n}=S, \lim_{n\rightarrow \infty}R^{n}=R$. 
Further, without loss of generality it is enough to consider monotone sequences. That is, $(a)$ $ S^{n} \uparrow S, R^{n}\uparrow R$,
$(b)$ $ S^{n} \downarrow S, R^{n}\uparrow R$ , $(c)$ $ S^{n} \downarrow S                                                                                                                                                                                                                                                                                                                                                                                                                                                                                                                                                                                                                                                                                                                                                                                                                                                                                                                                                                                                                                                                                                                                                                                                                                                                                                                                                                                                                   , R^{n}\downarrow R$, $(d)$ $ S^{n} \uparrow S, R^{n}\downarrow R$. To see this let $(S^n,R^n)\rightarrow (S,R)$ but
$\lim_{n\rightarrow \infty}M(S^n,R^n)\neq M(S,R)$. This means that there is an $\epsilon>0$ such that 
for all $N$ there is $n\geq N $ such that $|M(S^n,R^n)-M(S,R)|\geq \epsilon$. 
Let $\{(S^{n_k}, R^{n_k})\}_{k=1}^{\infty}$ be a subsequence such that $|M(S^{n_{k}},R^{n_{k}})-M(S,R)|\geq \epsilon$. 
Since $\lim_{k\rightarrow \infty }(S^{n_k}, R^{n_k})=(S,R)$, there is a further monotone subsequence of $(S^{n_k}, R^{n_k})$ that converges to $(S,R)$. But then along this monotone subsequence 
the values of $M$ will not converge to $M(S,R)$ which contradicts the assumption that along every monotone sequence converging to $(S,R)$ the values of the function $M$ converge.

Consider $(a)$ $S^{n} \uparrow S, R^{n}\uparrow R$. There exists $N$ such that $S\prec R^{n} $ for all $n\geq N$. Let $[S^{n},R^{n}]=
[S^{n},S]\cup]S,R^{n}]$. Then $[S^{n},S]\downarrow\{S\}$ and $ ]S,R^{n}]\uparrow]S,R]$. Since $\mu$ is a probability 
$\lim_{n\rightarrow \infty}\mu([S^{n},S])=\mu(\{S\})=0$ and 
$\lim_{n\rightarrow \infty} \mu(]S,R^{n}]=\mu]S,R]=\mu([S,R])$. Since $\mu$ is continuous, 
$\mu(]S,R])=\mu([S,R])$. The other cases can be proved analogously. The function $(t,S,R)\mapsto tM(S,R)$ is continuous; the objective function is the sum of $l$ such functions. Therefore, the objective function is continuous.

\noindent Next we show that the constraint set is compact.

\noindent{\bf Step $2$: The constraint set is compact} 
The constraint set can be rewritten as:  	

$C=\{(t^{0},t^{1},\ldots,t^{l-1}, q^{0},\ldots,q^{l-1}, R^{0},\ldots, R^{l})\mid
(t^{k},q^{k})I_{R^{k}}(t^{k-1},q^{k-1}),t^{k-1}\leq t^{k}, q^{k-1}\leq q^{k};
k=1\ldots,l-1, 
R^{k-1}\precsim R^{k},k=1,\ldots,l$, $R^{0}=\underline{R}$,  $R^{l}=\overline{R}\}$.

\noindent By individual rationality   
$F_l(\overline{R})\overline{R}(0,0)$. Let $(\overline{T},1)\overline{I}(0,0)$. 
By the single-crossing property for $R\prec \overline{R}$ if $(t,q)I(0,0)$, then 
$t<\overline{T}$. Thus by individual rationality let $F_l(R)\leq \overline{T}$ for all $F_l\in \mathcal{F}_l$.

\noindent Thus, 
\[C\subseteq [0,\overline{T}]\times [0,\overline{T}]\times\ldots \times[0,\overline{T}]\times\ldots\times[0,1]\times\ldots \times[0,1]\times \{\underline{R}\} \times [\underline{R},\overline{R}]\times\ldots [\underline{R},\overline{R}]\times \{\overline{R}\}\equiv \Sigma\]

\noindent Being a finite product of compact spaces, $\Sigma$ is compact in the product metric topology. We show that $C$ is closed in $\Sigma$. 
Let, \[x^{n}=\{(t^{0n},\ldots,t^{(l-1)n}, q^{0n},\ldots,q^{(l-1)l}, R^{0n}, R^{1n},\ldots, R^{ln})\}_{n=1}^{\infty}\]
\noindent  be a sequence in $C$ that converges to $x=(t^{0},\ldots,t^{(l-1)}, q^{0},\ldots,q^{(l-1)}, R^{0}, R^{1},\ldots, R^{l})$.
Inequalities are maintained in the limit. Further, by Lemma \ref{lemma:preference_conv} indifference is maintained in the limit. Thus, $x\in C$ and hence $C$ is a closed subset of a compact space. Thus $C$ is compact. 
This completes the proof of Lemma \ref{lemma:soln_exists}.

\end{appendix}

\end{document}